\newclass{\COMSLIP}{COM\mbox{-}SLIP}
\newclass{\COMSLIPCUP}{COM\mbox{-}SLIP^{\cup}}
\newclass{\NCSPA}{NCSPA}
\newclass{\DPDM}{DPDM}
\newclass{\NPDM}{NPDM}
\newclass{\DCM}{DCM}
\newclass{\eDCM}{eDCM}
\newclass{\eNPDA}{eNPDA}
\newclass{\PDCSPA}{PDCSPA}
\newclass{\DCSPA}{DCSPA}
\newclass{\DPDA}{DPDA}
\newclass{\DPCSA}{DPCSA}
\newclass{\RDPCSA}{RDPCSA}
\newclass{\RDPDA}{RDPDA}
\newclass{\PDA}{PDA}
\newclass{\DCMNE}{DCM_{NE}}
\newclass{\TwoDCM}{2DCM}
\newclass{\NCM}{NCM}
\newclass{\eNCM}{eNCM}
\newclass{\eNQA}{eNQA}
\newclass{\eNSA}{eNSA}
\newclass{\eNPCM}{eNPCM}
\newclass{\eNQCM}{eNQCM}
\newclass{\eNSCM}{eNSCM}
\newclass{\DPCM}{DPCM}
\newclass{\NPCM}{NPCM}
\newclass{\NQCM}{NQCM}
\newclass{\NSCM}{NSCM}
\newclass{\NPDA}{NPDA}
\newclass{\TRE}{TRE}
\newclass{\NFA}{NFA}
\newclass{\DFA}{DFA}
\newclass{\NCA}{NCA}
\newclass{\DCA}{DCA}
\newclass{\DTM}{DTM}
\newclass{\NTM}{NTM}
\newclass{\NTMCM}{NTMCM}
\newclass{\DLOG}{DLOG}
\newclass{\CFG}{CFG}
\newclass{\ETOL}{ET0L}
\newclass{\EDTOL}{EDT0L}
\newclass{\CFP}{CFP}
\newclass{\ORDER}{O}
\newclass{\MATRIX}{M}
\newclass{\BD}{BD}
\newclass{\LB}{LB}
\newclass{\ALL}{ALL}
\newclass{\decLBD}{decLBD}
\newclass{\StLB}{StLB}
\newclass{\SBD}{SBD}
\newclass{\TCA}{TCA}
\newclass{\LL}{{\cal L}}
\newclass{\NSA}{NSA}
\newclass{\DSA}{DSA}
\newclass{\CSA}{CSA}
\newclass{\DPSPACE}{DPSPACE}
\newclass{\PTIME}{PTIME}
\newclass{\WDCSPA}{WDCSPA}
\newclass{\WDCSA}{WDCSA}
\newclass{\WNCSA}{WNCSA}
\newclass{\DCSA}{DCSA}
\newclass{\NCSA}{NCSA}
\newclass{\GSM}{GSM}
\newclass{\PDCSA}{PDCSA}
\newclass{\PNCSA}{PNCSA}
\newclass{\DCSACM}{DCSACM}
\newclass{\NCSACM}{NCSACM}
\newcommand{\rw}{\rotatebox[origin=c]{180}{$\Rsh$}}
\DeclareMathOperator{\push}{push}
\DeclareMathOperator{\pop}{pop}
\DeclareMathOperator{\sense}{sense}
\DeclareMathOperator{\stay}{stay}
\DeclareMathOperator{\down}{down}
\DeclareMathOperator{\up}{up}
\begin{document}

\markboth{Oscar H. Ibarra, Ian McQuillan}
{Generalizations of Checking Stack Automata: Characterizations and 
Hierarchies}

%
\catchline{}{}{}{}{}
%

\title{Generalizations of Checking Stack Automata: Characterizations and Hierarchies
\thanks{\textcopyright 2022. This manuscript version is made available under the CC-BY 4.0 license \url{https://creativecommons.org/licenses/by/4.0/}. Published at {\it International Journal of Foundations of Computer Science}, 32 (5), 481--508 (2021) \url{https://doi.org/10.1142/S0129054121410045}}}

\author{Oscar H. Ibarra}

\address{
Department of Computer Science, University of California, Santa Barbara, CA 93106, USA\\ \email{ibarra@cs.ucsb.edu}}

\author{Ian McQuillan}

\address{
Department of Computer Science, University of Saskatchewan
	Saskatoon, SK S7N 5C9, Canada\\
	\email{mcquillan@cs.usask.ca}
}

\maketitle

\begin{history}
\received{(Day Month Year)}
\accepted{(Day Month Year)}
\comby{(xxxxxxxxxx)}
\end{history}

\begin{abstract}
We examine different generalizations of checking stack automata
by allowing multiple input heads and multiple stacks,
and characterize their computing power in terms of
two-way multi-head finite automata
and space-bounded Turing machines. 
%
For various models, we obtain hierarchies in terms of their computing power.
Our characterizations and hierarchies expand or tighten
some previously known results.  We also discuss some decidability
questions and the space/time complexity of the models. 

\end{abstract}

\keywords{checking stack automata; multi-head finite automata;
 characterizations; space-bounded Turing machines; hierarchies.}

\section{Introduction} 
\label{sec:intro}

A one-way stack automaton ($\NSA$)
\cite{GGH} is
a generalization of a one-way pushdown automaton that can enter
its stack in a two-way read-only mode, but can only push or pop
when it is at the top of the stack.
A one-way checking stack automaton \cite{CheckingStack} is a restriction of stack automata
that cannot pop its stack; and it starts by writing to the stack with transitions that either push to the stack or leave the stack unchanged (the {\em write phase}), and then it can read from the inside of the stack (the {\em read phase}), but once in the read phase, it can no longer push to the stack.  
The one-way deterministic and
nondeterministic versions of checking stack automata are denoted by 
$\DCSA$ and $\NCSA$, respectively.
This model is quite powerful, and can accept non-semilinear languages, and $\NCSA$ can accept $\NP$-complete languages
\cite{ShamirBeeri}, although emptiness is still decidable \cite{GGH2}.
Various space complexity measures on $\NCSA$s and their languages were recently studied \cite{DLT2020checking}.

The models with two-way input are called
$2\DCSA$ and $2\NCSA$.  These are
generalized further to models with $k$ checking stacks:
$k$-stack $2\DCSA$ and $k$-stack $2\NCSA$, and with multiple input read heads: $r$-head $k$-stack $2\DCSA$ and
$r$-head $k$-stack $2\NCSA$.
Deterministic multi-head multi-stack $2\DCSA$s were first defined by Vogel and Wagner \cite{VogelWagner}, who showed that
they accept exactly the languages that can be accepted by deterministic Turing machines in $\log$ space.
This class
of automata has an undecidable membership problem for nondeterministic machines \cite{DLT2017TCS}. 
Thus, $r$-head, $k$-stack $2\DCSA$s may be useful in showing other decidable properties.

Here, we compare multi-head multi-stack $2\NCSA$s and $2\DCSA$s to space-bounded Turing machines, and to multi-head
two-way $\NFA$s and $\DFA$s. Also, we examine the space complexity of these stack automata in comparison to other types of machines. Two space-constrained restrictions of $2\NCSA$s are useful:
an $r$-head $k$-stack $2 \NCSA$ is $S(n)$-bounded if every word of length $n$ accepted has an accepting computation where every stack is of length at most $O(S(n))$; and an automaton is $S(n)$-limited if one stack is unbounded but all others are $S(n)$-bounded.
Before summarizing the results, we give some notation to describe the
machine models used.
By convention, if in any of the devices that we have introduced,
$k$ is equal to $1$, we will usually drop the prefix ``$k$-'', and similarly with $r$ when $r$ is equal to $1$. For example, 
$2\NCSA$ will mean $1$-head 1-stack $2\NCSA$, etc.
As usual, one-way deterministic
(nondeterministic) finite automata are denoted by $\DFA$ ($\NFA$), and
the two-way versions are denoted by $2\DFA$ ($2\NFA$), and
the $r$-head varieties are denoted by $r$-head $2\DFA$ ($r$-head $2\NFA$).
One-way deterministic (nondeterministic) pushdown automata are denoted by $\DPDA$ ($\NPDA$), and two-way versions
are denoted by $2\DPDA$ ($\NPDA$).
Deterministic (nondeterministic) Turing machines are denoted by 
$\DTM$ (respectively $\NTM$). An $S(n)$-space-bounded $\DTM$ ($\NTM$) has the
usual meaning \cite{HU}.  
In particular, $\DLOG$ ($\NLOG$) will denote languages accepted in deterministic
(nondeterministic) $\log$ space, $\P$ ($\NP$) will denote languages accepted in deterministic (nondeterministic) polynomial time, and
$\PSPACE$  will denote those accepted in polynomial-space (which coincides for $\DTM$s and $\NTM$s \cite{HU}).




We provide an alternate proof of \cite{VogelWagner} that the following models accept the same languages:
\begin{multicols}{2}
\begin{itemize}
\item
multi-head multi-stack $2\DCSA$,
\item multi-stack $n$-limited $2\DCSA$,
\item multi-head $2\DFA$,
\item $\log n$ space-bounded $\DTM$.
\end{itemize}
\end{multicols}
\noindent In the process of proving the above equivalences, we provide new trade-offs
between the number of heads/stacks of the multi-head multi-stack $2\DCSA$ and
the number of heads of the multi-head $2\DFA$ when converting one model
to the other.
This result is generalized to a new model that is nondeterministic 
but the $2\NCSA$ is restricted so that it operates deterministically
until every stack has entered the reading phase, and thereafter
can work nondeterministically; the languages are similarly all in $\P$. This model is possibly useful as a method of developing polynomial time algorithms as this multi-head multi-stack $2\NCSA$ restriction
does not have an explicit time or space bound.
Also, the following hierarchies are shown
for $r,k \ge 1$: $r$-head $k$-stack $2\DCSA$ is weaker than
$(r+2)$-head $k$-stack $2\DCSA$, and
$r$-head $k$-stack $2\DCSA$ is weaker than
$r$-head $(k+2)$-stack $2\DCSA$.
This improves a result from \cite{KingWrathall} that
$2\DCSA \subsetneq \DLOG$.
We also prove analogous results for when the multi-stack $2\DCSA$ is augmented
with a pushdown stack which can only be used after all the checking stacks
are done writing. In particular,
it is shown that this model is equivalent to a $2\DPDA$ augmented with a $\log n$ space-bounded
worktape which we shall refer to as a $\log n$ space-bounded auxiliary $2\DPDA$ ($2\NPDA$). This model
was introduced and studied by Cook in \cite{Cook71}, who showed that the
nondeterministic and deterministic versions are equivalent.

It is shown that for every function $S(n)$,
$(k+1)$-stack $S(n)$-limited $2\NCSA$ languages are equal to $n S^k(n)$ space-bounded $\NTM$ languages.
Furthermore, this allows us to conclude that
$(k+1)$-stacks are more powerful than $k$-stacks for some functions $S(n)$ by using separations results known regarding space complexity of Turing machines.  Hence,
multi-stack $n$-limited $2\NCSA$ languages are equal to $\PSPACE$, as are 
2-stack polynomial-space-limited $2\NCSA$ languages.
Also, $S(n)$-bounded $2 \NCSA$s are analyzed and compared to other types of machines.
In addition, machines with one global read phase are analyzed.

This paper is an extended version of \cite{DLT2018checking}, but with many previously omitted proofs included, and several new results including all of Section \ref{sec:alllinear} and everything starting with Section \ref{sec:phasedmachines}.

\section{Preliminaries}
\label{sec:prelims}

This paper assumes a working knowledge of automata and formal language theory, including finite automata and Turing machines.
Please see \cite{HU} for basic models, notation for these models, and results. 
An {\em alphabet} is a finite set of symbols. Given an alphabet $\Sigma$, $\Sigma^*$ is the set of all words over $\Sigma$ which includes the empty word $\lambda$. A {\em language} $L$ over $\Sigma$ is any subset of $\Sigma^*$.

Next, we define the main machines of interest. It is most natural to define stack automata, and then checking stack
via a restriction. Fixed across all machines are the left and right input end-markers $\rhd,\lhd$, the bottom and top
stack markers $Z_b, Z_t$, and the stack read/write head denoted by $\rw$.

A {\em two-way $r$-head $k$-stack nondeterministic stack automaton} ($r$-head $k$-stack $2\NSA$) is a tuple $M = (Q,\Sigma,\Gamma,\delta,q_0,F)$, where $Q$ is a finite set of states, $q_0 \in Q$ is the initial state, $F \subseteq Q$ is the set of final states, $\Sigma$ is the finite input alphabet, 
and $\Gamma$ is the finite stack alphabet.
Let $\Sigma_{+} = \Sigma \cup \{\rhd,\lhd\}, \Gamma_b = \Gamma \cup \{Z_b\}, \Gamma_t = \Gamma \cup \{Z_t\}, \Gamma_{+} = \Gamma \cup \{Z_b, Z_t\}, I  = \{\push(y), \pop, \stay,\down,\up \mid y \in \Gamma\}$ (the stack instructions, the first three being write instructions that can apply at the top of the stack, and the latter three being read instructions that can apply when reading inside the stack.
Then, $\delta$ is the transition function, which is a partial function from
$Q \times \overbrace{\Sigma_{+} \times \cdots \times \Sigma_{+}}^r \times \overbrace{\Gamma_{+} \times \cdots \times \Gamma_{+}}^k$ to the powerset of
$Q \times \overbrace{I \times \cdots \times I}^k \times \overbrace{\{-1,0, +1\} \times \cdots \times \{-1,0, +1\}}^r$, with each such transition denoted by
$$\delta(q, a_1, \ldots, a_r, b_1, \ldots, b_k) \rightarrow (q', \iota_1, \ldots, \iota_k,\gamma_1, \ldots, \gamma_r),$$
whereby for each $i$, $1 \leq i \leq r$, and each $j$, $1 \leq j \leq k$,
\begin{itemize}
\item $a_i = \rhd$ (respectively $\lhd$) implies $\gamma_i \neq -1$ ($\gamma_i \neq +1$),
\item $\iota_j = \down$ (respectively $\iota_j = \up$) implies $b_j \neq Z_b$ ($b_j \neq Z_t$),
\item $\iota_j = \push(y)$ implies $b_j \neq Z_t$, 
\item $\iota_j = \pop$ implies $b_j \in \Gamma$.
\end{itemize}
Such a machine is deterministic if there is at most one transition from
each $q \in Q, a_1, \ldots, a_r \in \Sigma_{+}, b_1, \ldots, b_k \in \Gamma_+$.

An instantaneous description (ID) is a tuple
$(q, \rhd w \lhd, \alpha_1, \ldots, \alpha_r, x_1, \ldots, x_k)$, where $q \in Q$ is the current state, $w \in \Sigma^*$ is the input
word (between end-markers $\rhd$ and $\lhd$ in the ID), $\alpha_i \in\mathbb{N}_0$ with $0 \leq \alpha_i \leq |w|+1$ is the current position of tape head $i$ (this can be thought of as $0$ scanning $\rhd$, $|w|+1$ scanning $\lhd$) for $1 \leq i \leq r$, and 
$x_j \in Z_b \Gamma^* \rw \Gamma^* Z_t \cup Z_b \Gamma^* Z_t \rw$ is the stack contents of stack $j$ (with the stack head $\rw$ reading the character before it) for $1 \leq j \leq k$.
Define $$(q, \rhd w \lhd, \alpha_1, \ldots, \alpha_r, x_1, \ldots, x_k) \vdash_M (q', \rhd w \lhd, \alpha_1', \ldots, \alpha_r', x_1', \ldots, x_k')$$ if there exists a transition
$\delta(q, a_1, \ldots, a_r, b_1, \ldots, b_k) \rightarrow (q', \iota_1, \ldots, \iota_k,\gamma_1, \ldots, \gamma_r),$
where for $1 \leq i \leq r$, $a_i$ is the character at position $\alpha_i+1$ of $\rhd w \lhd$, and $\alpha_i' = \alpha_i + \gamma_i$, and for $1 \leq j \leq k$, 
$x_j = y_j b_j \rw z_j$, and
\begin{itemize}
\item $\iota_j = \push(y)$ implies $z_j = Z_t, x_j' = y_j b_jy Z_t$,
\item $\iota_j = \pop$ implies $z_j = Z_t, x_j'= y_j Z_t$,
\item $\iota_j = \stay$ implies $x_j' = x_j$
\item $\iota_j = \down$ implies $x_j' = y_j \rw b_j z_j$,
\item $\iota_j = \up$ implies $z_j = d_j z_j', d_j \in \Gamma_t, x_j' = y_j b_j d_j \rw z_j'$.
\end{itemize}
As usual, $\vdash_M^*$ is the reflexive and transitive closure of $\vdash_M$, and an accepting computation on $w \in \Sigma^*$ is a sequence
$$(q_0,\rhd w \lhd, 1, \ldots, 1, Z_b \rw Z_t, \ldots, Z_b\rw Z_t) \vdash_M^* (q_f, \rhd w \lhd, \alpha_1, \ldots, \alpha_r, x_1, \ldots, x_k),$$
where $q_f \in F$, and the language accepted by $M$, denoted by $L(M)$, is the set of all strings $w \in \Sigma^*$ such that there is an accepting computation on $w$.

Furthermore, $M$ is a non-erasing stack automaton if there are no pop transitions. And, a non-erasing stack automaton
is a checking stack automaton if, in every accepting computation, and each stack $1$ through $k$, the sequence
of transitions only applies instructions in $\{\push(y), \stay \mid y \in \Gamma\}$ (called the write phase), followed
by $\{\stay,\down, \up\}$ (called the read phase).
As mentioned in Section \ref{sec:intro}, we denote the class of two-way $r$-head $k$-stack nondeterministic checking stack automata by $r$-head $k$-stack $2\NCSA$, and we replace N by D for deterministic machines. We use the phrase multi-head to mean $r$-head for some $r\geq 1$, and multi-stack to mean $k$-stack for some $k \geq 1$. When there are $0$ stacks,
we replace $\NCSA$ and $\DCSA$ with $\NFA$ and $\DFA$ as they clearly match nondeterministic and deterministic finite automata.

\smallskip
\begin{example}
\label{division}
A classical example of a $\DCSA$ language is $\{a^i b^m \mid i \mbox{~divides~} m\}$.
Indeed a $\DCSA$ machine can read $a^i$ while pushing $a^i$ onto the checking stack. As it reads input $b^m$, it alternates between making right-to-left and left-to-right sweeps on the checking stack reading $a^i$, verifying that it hits the end of the input at the same time as it hits an end of the stack.
\end{example}

\begin{example}
Next, consider the language $\{a^i b^j c^m \mid i\cdot j = m\}$. This can be accepted by a 2-stack $\DCSA$ $M$. Here, $M$ copies $a^i$ from the input onto the first stack and then input $b^j$ onto the second stack. As it reads the input $c^m$, it makes sweeps on the first stack as in Example \ref{division}, but it moves left one cell on the second stack for every sweep. It then verifies that all three heads (input and two stacks) reach their ends simultaneously. 
\end{example}

\section{Hierarchies of Multi-Head Multi-Stack $2\DCSA$ and $2\NCSA$}


\subsection{The Deterministic Case}

This section will consider multi-head, multi-stack $2\DCSA$s.

Obviously, a $k$-stack $\DCSA$ ($\NCSA$) (which has
a one-way input) is a special case of
a $k$-stack $2\DCSA$ ($2\NCSA$).  It is also immediate 
that a $k$-stack $2\DCSA$ ($2\NCSA$) $M$ can be
simulated by a one-way $(k+1)$-stack $\DCSA$ ($\NCSA$) $M'$
which first copies its one-way input into one 
stack which can scan this copied input in a two-way fashion to simulate the scanning of the two-way input head of $M$.

We will expand on the following results from \cite{EngelfrietCheckingStack}:
\begin{proposition} \cite{EngelfrietCheckingStack} \label{Engel}
$~~~$
\begin{enumerate}
\item
A language $L$ is accepted by a $2\DCSA$ if and only if
it can be accepted by a 2-head $2\DFA$.
\item
Let $r \ge 1$. If a language $L$ is accepted by an $(r+1)$-head $2\DFA$,
then it can be accepted by an  $r$-head $2\DCSA$.
\item
Let $r \ge 1$. If a language $L$ is accepted by an $r$-head $2\DCSA$,
then it can be accepted by a $(3r +1)$-head $2\DFA$.
\end{enumerate}
\end{proposition}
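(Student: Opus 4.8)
The plan is to handle the two inclusions between $r$-head $2\DCSA$s and multi-head $2\DFA$s separately, and then read off part (1) as the sharp case $r=1$. The common engine is a bound on the stack length. Since the machines are deterministic and only accepting (hence halting) computations matter, during a write phase the stack head always sits at the top, so the relevant ``surface'' configuration is just (state, the $r$ input-head positions, top stack symbol). If such a surface configuration ever recurred during a write phase, determinism would force the machine to repeat its behaviour forever---either pushing without bound or idling---so it could never halt. Hence in a halting computation all write-phase surface configurations are distinct, the write phase has at most $|Q|\,(n+2)^r\,|\Gamma_+|$ steps, and the stack has length $O(n^r)$, and $O(n)$ when $r=1$ (here $n=|w|$). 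Moreover, because the write phase is a deterministic function of the fixed input, replaying it reproduces exactly the same sequence of pushes, so the symbol at stack depth $d$ can always be recomputed on demand. These two facts---polynomial (linear, for $r=1$) stack length and recomputability---are what make the stack simulable by finitely many heads.

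For part (2) I would copy the input onto the checking stack. Using one of its $r$ heads, the $r$-head $2\DCSA$ scans $\rhd w \lhd$ left to right during its write phase, pushing a copy of each symbol, ending with the stack holding $Z_b\,\rhd w\lhd\,Z_t$, and then returns that head to the left end. In the read phase it simulates the given $(r+1)$-head $2\DFA$ $M$: its $r$ real heads play $M$'s first $r$ heads, and the stack head plays $M$'s $(r+1)$-st head, with ``up'' and ``down'' on the stack corresponding to moving that head right and left over the copied input, and the stack symbol read being exactly the symbol $M$'s extra head would see. The end-marker conventions line up because $Z_b,Z_t$ guard the copied $\rhd,\lhd$, so the stack head's legal moves match the extra input head's legal moves. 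This direction is routine.

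The substantial direction is part (3): simulating an $r$-head $2\DCSA$ by a $(3r+1)$-head $2\DFA$. During the read phase the $2\DFA$ devotes $r$ heads to mirroring the $2\DCSA$'s $r$ input heads. The remaining heads track the stack head: the current depth $d$ (a number up to $O(n^r)$) is maintained in mixed radix base $(n+2)$ using $r$ head positions, and the symbol $s_d$ under the stack head is obtained by replaying the write phase with a further $r$ heads, counting pushes until the count reaches $d$ and reading off the symbol pushed there, with one extra head for the push-count/comparison bookkeeping, for a total of $3r+1$. An ``up'' move increments the depth counter and a ``down'' move decrements it, after which $s_d$ is recomputed by a fresh replay.

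The hard part is exactly this stack-head bookkeeping. Recomputing $s_d$ by replaying the write phase must not disturb the $r$ heads holding the real input-head positions, and moving the stack head \emph{downward} forces us to recover an earlier configuration of the deterministic write phase, which---lacking a direct ``run backwards'' primitive---is done by restarting the replay and running forward to the appropriate push, comparing the running push-count against the stored depth. Fitting these comparisons and the depth encoding into $3r+1$ heads is where the constant comes from and is the most delicate piece. Finally, part (1) follows: the reverse inclusion is part (2) with $r=1$, and for the forward inclusion the linear stack bound available when $r=1$ lets the same replay idea be carried out while reusing heads, sharpening the $3r+1=4$ of part (3) down to the claimed two heads.
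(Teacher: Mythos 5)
Your preliminaries are sound (the recurrence bound on write-phase ``surface'' configurations correctly gives stack length $O(n^r)$, since during the write phase the triple (state, head positions, top symbol) evolves deterministically), and your part (2) is the standard copy-the-input construction that the paper itself uses elsewhere (e.g.\ Proposition \ref{prop-red2}, part 2). But your part (3) has a genuine gap in the head accounting. You maintain the stack depth $d$ as an $r$-digit mixed-radix number and claim that replaying the write phase while ``counting pushes until the count reaches $d$'' costs only $r$ replay heads plus one bookkeeping head. The replay heads must track the simulated input-head positions of the write phase, so they cannot also hold a count; a running push count up to $\Theta(n^r)$ needs $r$ further heads, and the single extra head counts only to $O(n)$. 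The alternative of decrementing the stored $d$ to zero detects the stopping point but destroys $d$, and restoring it requires counting a fresh replay --- again $r$ more heads. Either way your budget is about $4r$, not $3r+1$. The fix, which is the Engelfriet technique reproduced in the paper's Lemma \ref{lem-red}, is to abandon the numeric depth entirely and represent the stack-head position by the \emph{write-phase configuration} at the moment the current cell was pushed: $r$ heads $I_1,\ldots,I_r$ hold the input-head positions at that moment and the state is kept in the finite control. An ``up'' move is then one forward simulation step (in normal form every write-phase step pushes), and a ``down'' move is a replay from scratch with a second set of $r$ heads $J_1,\ldots,J_r$, remembering the last transition, until the $J$-configuration matches the $I$-configuration, at which point the last transition is undone; the pairwise position comparisons are done with sensing, removed at the cost of one extra head (Proposition \ref{removesensing}), giving exactly $3r+1$.

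Your part (1) forward direction is also not established. You assert that for $r=1$ the linear stack bound lets the replay be carried out ``while reusing heads,'' sharpening $4$ heads to $2$, but no argument is given, and head reuse does not obviously work: even for $r=1$ you need one head for the read-phase input position, one for the stored configuration, and one for the replay, which is already $3$. The known route to the $2$-head bound --- visible in the paper's Lemma \ref{lem-red1} and Proposition \ref{Prop32} --- is different in kind: construct a $2\GSM$ $T$ mapping $w$ to $w\#z$, where $z$ is the string written on the stack (computable by $T$ since the write phase is deterministic), let a $2$-head $2\DFA$ simulate the $2\DCSA$ on input $w\#z$ with one head on $w$ and one head on $z$ playing the stack head, and then invoke closure of multi-head $2\DFA$ languages under inverse $2\GSM$ mappings \cite{EngelfrietCheckingStack}. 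This sidesteps replaying altogether by materializing the stack on the input tape, which is precisely what your in-place replay scheme cannot afford within two heads.
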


To help, we create a new type of machine model called an {\em $r$-head $k$-stack $2\NCSA$ with sensing}. Such a machine
has an additional test called $\sense(i,j)$, for $1 \leq i < j \leq r$ that can test whether input heads $i$ and $j$ are 
pointing at the same cell. So in this case, transitions can: read the current state, read the symbol under the $r$ input heads,
read the current cell from each of the stacks, but it can also tell whether any two
given input heads are pointing at the same cell (in response, they can switch states, move the input heads, and can $\push$ to the stacks or read from inside the stacks as normal).

Such a machine $M$ is in 
{\em normal form} if it uses the states to keep track of which stacks are in their writing phase or reading phase,
every transition (move) applied during the computation writes a symbol to each stack
that has not yet entered its reading phase, 
and it remembers which symbol was last written to each stack using the state.
\begin{lemma}\label{normalform}
Given an $r$-head $k$-stack $2\NCSA$ ($2\DCSA$) with sensing $M$, there exists an $r$-head $k$-stack $2\NCSA$ ($2\DCSA$) with sensing $M'$ in normal form such that $L(M) = L(M')$.
\end{lemma}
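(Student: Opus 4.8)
The plan is to build $M'$ in three layers, one for each condition of the normal form, while leaving the input heads and the $\sense$ tests of $M$ untouched: these never inspect the stacks, so they transfer verbatim, and throughout $M'$ simulates $M$ move-for-move, inserting only internal bookkeeping moves on the stacks. First I make the phase information explicit by augmenting the finite control of $M$ with a vector $(p_1,\dots,p_k)\in\{W,R\}^k$ recording, for each stack $j$, whether it is still writing ($p_j=W$) or already reading ($p_j=R$); all components start at $W$. Component $j$ is flipped to $R$ precisely on the move at which $M$ first applies $\down$ or $\up$ to stack $j$, and thereafter $M'$ discards any transition of $M$ that would $\push$ to stack $j$. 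Since no accepting computation of a checking stack automaton pushes to a stack after it has begun reading, this discarding removes no accepting computation; and since the flip is a function of the instruction $M$ selects, determinism is preserved. This gives condition~(1).

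To force a genuine write on every move (conditions~(2) and~(3)), I adjoin to $\Gamma$ a single fresh dummy symbol $\#$. On a move for which $M$'s instruction on a stack $j$ with $p_j=W$ is $\push(y)$, the machine $M'$ pushes $y$; on such a move whose instruction is $\stay$, it instead pushes $\#$. Thus every move writes to every stack still in its write phase. To fire these write-phase transitions, observe that while $p_j=W$ the head of stack $j$ sits at the top, so the symbol $b_j$ that $M$ reads is just the last genuine (non-$\#$) symbol pushed; $M'$ keeps this symbol in its finite control, updating it on each $\push$ and leaving it unchanged on each $\stay$. Hence $M'$ records the top symbol of every writing stack (condition~(3)) and never needs to consult a writing stack to select a transition, and the subsequence of non-$\#$ symbols on stack $j$ is exactly the contents $M$ would have produced.

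The delicate layer is making the dummies invisible during reading. I maintain the invariant that, between two consecutive \emph{logical} read moves of $M$, the head of each $R$-stack rests on a genuine symbol, which is then the cell $M$ intends to read. When $p_j$ is flipped to $R$, $M'$ first moves the head of stack $j$ downward past any $\#$'s on top until it reaches the topmost genuine symbol, re-establishing the invariant without performing a logical move; thereafter a logical $\down$ (respectively $\up$) of $M$ is realized by moving the head one cell in that direction and then continuing in the same direction, through a bounded family of auxiliary skipping states, until a genuine symbol (respectively $Z_b$, $Z_t$) is reached. \textbf{The main obstacle is that these extra skipping moves are internal to $M'$ and have no counterpart in $M$, yet condition~(2) must hold on each of them.} I resolve this by having every skipping move hold all input heads and all other stack heads fixed while pushing $\#$ to every stack that is still in phase $W$; the $\#$'s so created are themselves skipped when those stacks are later read, so no logical behaviour is disturbed. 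Because the skipping direction and the termination test depend only on the symbol currently scanned, these auxiliary moves are deterministic whenever $M$ is.

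Finally I would check that the three layers compose into a single $r$-head $k$-stack $2\NCSA$ (respectively $2\DCSA$) with sensing meeting all three conditions on every computation, and that erasing the internal skipping moves and the symbols $\#$ yields a bijection between accepting computations of $M'$ and of $M$; hence $L(M')=L(M)$.
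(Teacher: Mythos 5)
Your proposal is correct and follows essentially the same route as the paper's proof: track each stack's phase in the finite control, push a fresh dummy symbol $\#$ on every move for stacks still writing, skip $\#$'s during reading, and remember the last genuine symbol pushed in the state. You additionally spell out a detail the paper leaves implicit---that the auxiliary $\#$-skipping moves must themselves push $\#$ to all stacks still in their write phase so that condition~(2) holds on every move---which is a careful refinement of, not a departure from, the paper's construction.
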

\begin{proof}
Clearly, by adding states, we may assume that $M$ can keep track of which stacks are 
in their writing  phase or reading phase.  We then construct from $M$ a new $r$-head $k$-stack
$2\DCSA$ $M'$ which uses a new  dummy stack symbol $\#$.  During each transition, 
for each stack
that has not yet entered its reading phase, if $M$ does not write on the stack, $M'$ writes $\#$. When a stack encounters $\#$ during its reading phase, $M'$ skips all $\#$ symbols until it reads a symbol of $\Gamma$. 
It can also remember the last pushed symbol in the state.
\end{proof}

To simplify the upcoming constructions, we first generalize 
Proposition \ref{Engel} Part 3 to
multi-stack
$2\DCSA$ with multiple two-way input heads and sensing.
The construction illustrates the power of multiple input heads, as
a stack can be eliminated.
\begin{lemma} \label{lem-red}
Let $r \ge 1, k \ge 1$.
For every $r$-head $k$-stack $2\DCSA$ with sensing $M$ in normal form, there is
a $(3r)$-head $(k-1)$-stack $2\DCSA$ with sensing $M'$ in normal form such
that $L(M') = L(M)$.
\end{lemma}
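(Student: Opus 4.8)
The plan is to eliminate one checking stack---say stack $k$---while keeping stacks $1,\dots,k-1$ as genuine checking stacks of $M'$, paying for the removed stack with $2r$ extra input heads, so that $M'$ has $3r$ heads in total. The structural fact we exploit is the normal form of Lemma \ref{normalform}: during the write phase of stack $k$ every move of $M$ pushes exactly one symbol onto stack $k$, so the cells of stack $k$ are in bijection with the steps of its write phase, and the symbol stored in a cell is recorded in the control at the moment it is written. Consequently a position of the stack-$k$ head can be identified with a write-phase \emph{configuration} of $M$, which is fully described by the current $M$-state together with the positions of the $r$ input heads.

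Accordingly, $M'$ uses its $3r$ heads in three groups of $r$. The \emph{primary} heads $1,\dots,r$ carry out the move-by-move simulation of $M$ on the genuine components (input and stacks $1,\dots,k-1$); the \emph{marker} heads $r+1,\dots,2r$ are frozen on the input-head positions of the write-phase configuration corresponding to the current stack-$k$ head position, with the associated write-phase state kept in the finite control; and the \emph{replay} heads $2r+1,\dots,3r$ re-run the write phase of stack $k$ from its start. During the write phase of stack $k$, $M'$ simulates $M$ directly with no physical stack $k$; when stack $k$ switches to its read phase, $M'$ copies the primary head positions onto the marker heads. In the read phase, an $\up$ move is simulated by advancing one further write-phase step from the stored configuration, while a $\down$ move is simulated by replaying the write phase with the replay heads until they reach the cell immediately preceding the marked one; in both cases the symbol now under the stack head and the new configuration to be frozen are recovered from the replay. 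The tests for $Z_b$ and $Z_t$ become, respectively, the replay heads agreeing with the start-of-write-phase positions and the stored configuration agreeing with the phase-switch configuration. All of these are comparisons of head positions, which is exactly what the $\sense$ instruction provides; this is why $3r$ heads suffice here whereas Proposition \ref{Engel}(3) needs $3r+1$ heads for a sensing-free machine. Determinism and the normal-form conditions are preserved by construction (or restored by a final application of Lemma \ref{normalform}).

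The main obstacle is that stack $k$'s write phase may be interleaved with the read phases of the other stacks, and a faithful replay must feed $M$ the same stack symbols it originally read. For a stack $j$ still in its own write phase this costs nothing: by normal form $j$ is written on every step, so its current top is the last-written symbol, which the replay recomputes and keeps in the finite control. The delicate case is a stack $j$ that has \emph{already} entered its read phase while stack $k$ is still writing, since then $M$ consults the interior of $j$. Here we use that a checking stack in its read phase is a two-way read-only store: during the replay $M'$ may freely move the (real) head of stack $j$ to the past positions demanded by the replayed moves, and afterwards return it to its current position---which can itself be recovered by letting the replay run forward to the present simulated step. Confirming that these auxiliary traversals can be carried out deterministically, that all head positions are correctly restored, and that the whole simulation still fits within $3r$ heads and can be placed in normal form is the crux of the argument.
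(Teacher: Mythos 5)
Your architecture (marker heads recording a write-phase configuration, replay heads re-running the write phase, sensing used for the position comparisons, and the normal form guaranteeing one push per write step) matches the paper's construction, but you eliminate a \emph{fixed} stack $k$, whereas the paper eliminates the stack that enters its reading phase \emph{first} --- and this choice is not cosmetic; it is what makes the whole scheme sound. When the eliminated stack $S_i$ is the first to read, its entire write phase takes place while every stack of $M$ is still writing, so by the normal form the full configuration of $M$ at each write step really is (state, $r$ input-head positions, with stack tops carried in the state); determinism then makes this fingerprint injective over that phase (a repetition would force $M$ into an infinite loop, contradicting that $S_i$ enters its read phase), the phase lasts at most $m=l(|w|+2)^r$ steps, and --- crucially --- the replay never has to consult the interior of any stack. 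Your opening claim that a stack-$k$ cell ``can be identified with a write-phase configuration of $M$, which is fully described by the current $M$-state together with the positions of the $r$ input heads'' is false in your setting: if some stack $j$ is already reading while stack $k$ is still writing, the configuration also contains the position of $j$'s head \emph{inside} $j$, an unbounded quantity that fits in no marker head and no finite control. Two write steps of stack $k$ can then share the same state and input-head positions without $M$ looping, so your $\down$-simulation's matching criterion (replay heads agree with marker heads) can stop the replay at the wrong cell, and the write phase of stack $k$ need not be polynomially bounded at all.

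Your proposed repair for the interleaving is circular. During a replay you commandeer the real head of stack $j$, destroying its current position $p$, and claim $p$ ``can itself be recovered by letting the replay run forward to the present simulated step.'' But the present step lies in stack $k$'s \emph{read} phase, so running forward past the end of stack $k$'s write phase requires feeding the simulated $M$ the symbols under stack $k$'s head --- i.e., the contents of the very stack you deleted, each symbol of which needs its own nested replay: an infinite regress not implementable with finitely many heads. (Nor can $p$ be parked on a spare input head, since stack lengths are not bounded by $|w|+2$.) The missing idea is the paper's preliminary phase: $M'$ first uses $r$ extra heads as an $r$-digit base-$(|w|+2)$ counter to simulate up to $m$ write-only steps of $M$, thereby determining at runtime \emph{which} stack enters its reading phase first (rejecting if none ever does), and eliminates that stack; the same $3r$ heads then serve as simulators, markers, and replayers exactly as you intend. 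Since Lemma \ref{lem-red} is applied iteratively in Proposition \ref{prop17}, eliminating the first-reading stack at every round suffices, whereas a fixed choice of which stack to eliminate does not.
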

\begin{proof}
Let $M$ have $l$ states and let the stacks be called 
$S_1, \ldots, S_k$. 
Construct a $(3r)$-head $(k-1)$-stack $2\DCSA$ with sensing $M'$ 
that has two steps on input $w$:

Initially, $M'$ determines which of the $k$  stacks of $M$ 
first enters its reading phase
if it has a finite writing phase. To do this,
$M'$ will use $r$ heads, $H_1, \ldots, H_r$, to simulate the writing phase of $M$ without simulating the writing on
the stacks. The way that $M'$ does this is using the following:
no stack of $M$ enters the reading phase
if and only if $M$ has made more than $m=l(|w|+2)^r$ moves without
a stack entering its reading phase, otherwise there will be two IDs hit with the same state and input head positions, and so determinism causes an infinite loop (the `$2$' is due to the
end-markers, and also the position of the read heads implies the result of any sense test). Thus, as $M'$ is reading the input to simulate $M$ using $H_1, \ldots, H_r$, $M'$ uses another set of $r$ heads, $I_1, \ldots, I_r$,
to count the number of moves of $M$ up to $m$.  
To do this counting, $M'$ can use each head to count to $|w|+2$, and can therefore simulate an $r$-digit base $|w|+2$ number
that can count to $(|w|+2)^r$ by using the $r^{\mbox{th}}$ head as the least significant digit and the first head as the most significant digit. Finally, it can use the states of $M'$ to count to $l$ for each change in the simulated number, thereby allowing to count to $m$.
If no stack of $M$ enters a
reading phase, $M'$ halts and rejects.  
If $M$ has a finite writing phase, $M'$ can
determine the stack, $S_i$ say, which enters the reading phase first by simulating a transition that reads down in the stack within the first $m$ moves.

For the second step, $M'$ uses (the same as above) heads $H_1, \ldots, H_r$ 
to simulate the $r$ heads of $M$,
as well as several additional two-way heads to simulate 
the computation of $M$ on stack $S_i$.  In the simulation of $M$,
$M'$ uses its $k-1$ stacks to simulate stacks 
$S_1, \ldots, S_{i-1} , S_{i+1}, \ldots, S_k$, but
simulates the operation of $S_i$ using Lemma \ref{normalform} and
a clever technique in
\cite{EngelfrietCheckingStack}.
We describe the construction. The simulation of the
writing phase of $S_i$ is straightforward.   But since
$M'$ cannot record the stack contents of $S_i$,
$M'$ needs to be able to recover the symbols
written on $S_i$ as they are needed during $S_i$'s
reading phase.  To accomplish this, heads $I_1, \ldots, I_r$ and a state component of $M$
are used to represent the current configuration of $M$
at the time when it pushed the $d^{\mbox{th}}$ symbol on $S_i$, for some $d$. These $r$ heads point at the position of the $r$ heads of $M$ when the symbol at position $d$ of $S_i$ was written, the state component $q$ stores the state of $M$ at this point and the top of all the stacks at this point.  If $M'$ needs to recover the symbol on the
next cell $d+1$ to the right of cell $d$,
$M'$ simulates one transition from $I_1, \ldots, I_r$ and $q$ to determine the
symbol written on cell $d+1$ (in $M$, a symbol is written on $S_i$ in each transition of its writing phase due to the normal form). However, if $M'$ needs
the symbol on cell $d-1$ (to the left
of the current cell  $d$), this is more complicated. In this case, $M'$ uses another set of $r$ heads, $J_1, \ldots, J_r$, to
simulate the writing phase of $M$ from the beginning while also remembering the last transition applied in the finite control. After simulating
each transition of $M$, $M'$
checks that the configuration of the $r$ heads $J_1,\ldots, J_r$ are in the same positions as $I_1, \ldots, I_r$ respectively. 
If not, it continues. But if so, it has
recovered the final transition applied which can be ``undone'' to recover the symbol
written in position $d-1$ of the stack (since one symbol is written in each step of the writing phase), and to modify $I_1, \ldots, I_r$ appropriately. 

Thus, $M'$ needs a total of $3r$ heads to simulate $M$,
and $M'$ has only $k-1$ stacks. Clearly, $L(M') = L(M)$.
Afterwards, $M'$ can be again placed in normal form by applying Lemma \ref{normalform}.
\end{proof}

The construction in Lemma \ref{lem-red} can be applied twice to construct
a new $2\DCSA$ with sensing
$M''$ with  $3^2r$ heads and $k-2$ stacks, etc.
By iterating, we have:
\begin{proposition} \label{prop17}
Let $r \geq 1, k \geq 1$. For every $r$-head $k$-stack $2\DCSA$ with sensing $M$, there is
a $(3^kr)$-head $2\DFA$ with sensing $M'$ such that $L(M') = L(M)$.
\end{proposition}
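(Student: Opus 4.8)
The plan is to prove Proposition \ref{prop17} by induction on $k$, the number of checking stacks, using Lemma \ref{lem-red} as the inductive step and the sensing-enabled version of Proposition \ref{Engel} Part 3 as the base case. The key observation is that each application of Lemma \ref{lem-red} removes exactly one stack at the cost of tripling the number of input heads, and that the output of the lemma is again in normal form, so the lemma may be applied repeatedly without any intermediate bookkeeping.

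First I would reduce to machines in normal form: given an arbitrary $r$-head $k$-stack $2\DCSA$ with sensing $M$, invoke Lemma \ref{normalform} to obtain an equivalent machine in normal form, so that Lemma \ref{lem-red} applies. Next I would set up the induction. For the base case $k=0$, a $(3^0 r)$-head $2\DFA$ with sensing is literally an $r$-head $2\DFA$ with sensing, which is exactly a $0$-stack machine, so there is nothing to prove; alternatively one may take $k=1$ as the base case and note that Lemma \ref{lem-red} with $k=1$ already produces a $(3r)$-head $0$-stack machine, i.e.\ a $(3r)$-head $2\DFA$ with sensing, matching $3^1 r$. For the inductive step, suppose every $r'$-head $k$-stack normal-form machine can be converted to a $(3^k r')$-head $2\DFA$ with sensing. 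Given an $r$-head $(k+1)$-stack normal-form machine $M$, apply Lemma \ref{lem-red} once to obtain an equivalent $(3r)$-head $k$-stack normal-form machine $M'$ with $L(M')=L(M)$; then apply the induction hypothesis to $M'$ with $r' = 3r$, yielding a $(3^k \cdot 3r) = (3^{k+1} r)$-head $2\DFA$ with sensing accepting the same language.

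The head-count arithmetic is the bookkeeping core of the argument: starting from $r$ heads and $k$ stacks, the first reduction gives $3r$ heads and $k-1$ stacks, the second gives $3^2 r$ heads and $k-2$ stacks, and after $k$ reductions we reach $3^k r$ heads and $0$ stacks, which is the claimed $2\DFA$ with sensing. I would emphasize that language equivalence is preserved at every step by the respective lemmas, so $L(M') = L(M)$ follows by a trivial composition, and that normal form is re-established after each reduction (as noted at the end of the proof of Lemma \ref{lem-red}), which is precisely what licenses the repeated application.

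I do not expect a genuine obstacle here, since all the substantive work — the elimination of a single stack using the head-tripling technique borrowed from \cite{EngelfrietCheckingStack}, and the maintenance of normal form — is already discharged in Lemmas \ref{normalform} and \ref{lem-red}. The only point requiring mild care is ensuring that the induction is stated uniformly over $r$ (so that the hypothesis can be instantiated at the inflated head count $3r$ rather than at the original $r$), and confirming that the sensing capability and the $2\DFA$ target class are closed under the reduction; both are immediate from the statements of the preceding lemmas.
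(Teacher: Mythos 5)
Your proposal is correct and matches the paper's argument: the paper obtains Proposition \ref{prop17} precisely by iterating Lemma \ref{lem-red} $k$ times (after normalizing via Lemma \ref{normalform}), tripling the head count at each step, and your induction on $k$ is simply a formalized version of that iteration. The one point you flag — re-establishing normal form after each reduction — is indeed handled exactly as you say, by the final sentence of the proof of Lemma \ref{lem-red}.
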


Further, the ability to sense can be easily removed from $2\DFA$s.
\begin{proposition}
\label{removesensing}
Let $r \geq 2$. For every $r$-head $2\NFA$ (resp.\ $2\DFA$) with sensing $M$, there is an
$(r+1)$-head $2\NFA$ (resp.\ $2\DFA$) without sensing $M'$ such that $L(M) = L(M')$.
\end{proposition}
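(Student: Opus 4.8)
The plan is to use the single extra head $H_{r+1}$ as a reversible auxiliary ``counter'' that lets $M'$ decide each sensing test $\sense(i,j)$ \emph{non-destructively}, i.e.\ without permanently disturbing the positions of the $r$ simulated heads. The machine $M'$ keeps heads $1,\dots,r$ synchronized with the corresponding heads of $M$ and simulates $M$ transition by transition; whenever $M$ would consult its sensing information, $M'$ first computes the entire equality pattern of its head positions by a subroutine, stores that pattern in its finite control, and only then carries out the move dictated by $M$'s transition. Since the end-markers $\rhd$ and $\lhd$ are directly readable symbols, every ``has this head reached $\rhd$?'' test used below needs no sensing.

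The core is a routine $\mathrm{Compare}(i,j)$ that determines whether $p_i = p_j$ (the positions of heads $i$ and $j$) and restores every head afterwards. First, $M'$ drives $H_{r+1}$ to $\rhd$. Then it repeatedly performs a single lockstep move in which head $i$ and head $j$ each step left while $H_{r+1}$ steps right, continuing until the first of $i,j$ reaches $\rhd$. After this forward phase, $H_{r+1}$ sits exactly $\min(p_i,p_j)$ cells from $\rhd$, and the finite control has observed whether $i$ and $j$ arrived at $\rhd$ \emph{simultaneously} (equivalently $p_i=p_j$) or not. Now $M'$ reverses every move: heads $i,j$ step right and $H_{r+1}$ steps left, in lockstep, until $H_{r+1}$ returns to $\rhd$. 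Because $H_{r+1}$ had advanced $\min(p_i,p_j)$ cells, the reversal runs for exactly that many steps, restoring $i$ to $p_i$, $j$ to $p_j$, and $H_{r+1}$ to $\rhd$. This subroutine is deterministic and terminating (each head moves at most $|w|+1$ cells), so running it for all $\binom{r}{2}$ pairs and accumulating the answers in the state yields the full sensing pattern, after which $M'$ simulates the transition of $M$. All nondeterminism in $M'$ comes solely from simulating $M$, so the construction preserves determinism and handles the $2\DFA$ case verbatim; clearly $L(M')=L(M)$, and only one extra head is used.

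The crux --- and the one place that requires care --- is making the comparison reversible, since a finite-state control cannot store a head position in order to restore it later. The idea that resolves this is to anchor $H_{r+1}$ at $\rhd$ so that a single head plays two roles at once: during the forward phase its displacement records the number $\min(p_i,p_j)$ of lockstep steps taken, and during the reversal its return to $\rhd$ signals precisely when the original configuration has been recovered. The hypothesis $r \ge 2$ merely guarantees that there is a pair of heads to compare (for $r=1$ sensing is vacuous), and the boundary cases where $p_i$ or $p_j$ already equals $0$ are absorbed into the finite control as forward and reverse phases of length zero.
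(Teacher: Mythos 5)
Your construction is exactly the paper's: the paper also introduces a single extra \emph{sensing head} that moves right from the left end-marker while the two heads being compared move left, using simultaneous arrival at $\rhd$ to decide equality and the sensing head's displacement to restore the original positions. Your write-up just spells out the reversal bookkeeping (the $\min(p_i,p_j)$ accounting and boundary cases) that the paper leaves implicit, and it is correct.
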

\begin{proof}
Call the $r$ heads $H_1, \ldots, H_r$. Another head called the {\em sensing head} is created.
The sensing head
can be used to check if any two heads of $H_1, \ldots, H_r$ are on the same input
position by moving the two heads towards the left end-marker while moving the
sensing head to the right from the left end-marker, allowing to test for equality
plus allowing recovery of their original position. 
\end{proof}

Combining together the previous two propositions, we obtain:
\begin{corollary} \label{head-stack-dfa}
\label{alltoDFA}
Let $r \geq 1, k \geq 1$. For every $r$-head $k$-stack $2\DCSA$ (with or without sensing) $M$, there is
a $(3^kr + 1)$-head $2\DFA$ $M'$ such that $L(M') = L(M)$.
\end{corollary}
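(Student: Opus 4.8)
The plan is to obtain $M'$ by simply chaining together the two preceding propositions, so that essentially no new construction is required.

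First I would dispose of the ``without sensing'' case: an $r$-head $k$-stack $2\DCSA$ that never invokes a $\sense(i,j)$ test is a special instance of one that may, so it suffices to establish the claim for machines with sensing, and the sensing-free case then follows a fortiori. Next I would apply Proposition \ref{prop17} to $M$, yielding a $(3^kr)$-head $2\DFA$ with sensing, call it $M_1$, with $L(M_1) = L(M)$. Finally I would apply Proposition \ref{removesensing} to $M_1$ to delete the sensing capability at the cost of a single extra head, producing a $(3^kr+1)$-head $2\DFA$ (deterministic, since Proposition \ref{removesensing} is stated to preserve determinism) $M'$ with $L(M') = L(M_1) = L(M)$, which is exactly the head count claimed.

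The one genuine point to verify --- and the only possible obstacle --- is that Proposition \ref{removesensing} carries the hypothesis $r \geq 2$, so I must confirm that $M_1$ already has at least two heads before its sensing is removed. Since $r \geq 1$ and $k \geq 1$, we have $3^kr \geq 3^1 \cdot 1 = 3 \geq 2$, so the hypothesis is satisfied in every case and the composition is legitimate. As a sanity check, taking $r = k = 1$ recovers the bound $3^1 \cdot 1 + 1 = 4 = 3r+1$ of Proposition \ref{Engel} Part 3, consistent with the known one-stack case.
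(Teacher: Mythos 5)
Your proposal is correct and is essentially identical to the paper's proof, which likewise obtains the corollary by composing Proposition \ref{prop17} with Proposition \ref{removesensing}; your explicit check that $3^k r \geq 3 \geq 2$ satisfies the $r \geq 2$ hypothesis of Proposition \ref{removesensing} is a detail the paper leaves implicit, and your sanity check against Proposition \ref{Engel} Part 3 is a nice confirmation.
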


It is worth pointing out that we did not need to create $2\DCSA$s with sensing, but  instead could have built the sensing
head from the proof of Proposition \ref{removesensing} directly into the proof of Lemma \ref{lem-red}; but this would have unnecessarily increased the number of heads when iterating
Lemma \ref{lem-red}.

Later, we will give an analogue of Proposition \ref{prop17} for a restricted version of
multi-head multi-stack $2\NCSA$.
It follows from Corollary \ref{alltoDFA} (by setting $r = 1$)
that every language that is accepted by a $k$-stack $2\DCSA$
can be accepted by a $(3^k+1)$-head $2\DFA$.
However, we will show that we can improve this result
to $2 \cdot 3^{k-1}+1$ heads.

\begin{lemma} \label{lem-red1}
Let $k \ge 2$. For every $k$-stack $2\DCSA$ $M$, there is
a $2$-head $(k-1)$-stack $2\DCSA$ $M'$
such that $L(M') = L(M)$. 
\end{lemma}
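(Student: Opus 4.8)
Prove that a $k$-stack $2\DCSA$ $M$ (with $k\ge 2$, no sensing) can be simulated by a $2$-head $(k-1)$-stack $2\DCSA$ $M'$, with no sensing on either side.

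The plan is to mimic the single-stack-elimination argument of Lemma \ref{lem-red}, but spend the head budget far more carefully so that only two heads are needed in total rather than $3r$. Let me think about what Lemma \ref{lem-red} actually costs when $r=1$. It takes a $1$-head $k$-stack machine and produces a $3$-head $(k-1)$-stack machine, using three groups of $r=1$ heads: $H_1$ (simulate the original head), $I_1$ (a "pointer" head marking where the $d$-th symbol of the eliminated stack was written), and $J_1$ (a scanning head used to recover the symbol one cell to the left). I want to get this down to two heads, which is the content of this lemma and the promised improvement from $3^k+1$ to $2\cdot 3^{k-1}+1$ heads.

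Let me think about the key difference. The expensive part in Lemma \ref{lem-red} is recovering the symbol to the *left* of the current stack position during the read phase, which required the third head group $J$ to re-simulate the write phase from scratch. But the eliminated stack $S_i$ is a *checking* stack, and it is sensing-free here, so I can exploit its structure directly. My first step is to single out which stack $S_i$ to eliminate (the first to enter its read phase, exactly as in Lemma \ref{lem-red}), simulating the common write phase with one head $H_1$. My second step is the read-phase simulation: because $M$ is deterministic and during the write phase of $S_i$ the original head traces out a deterministic trajectory, the content of $S_i$ at cell $d$ is a deterministic function of the configuration of $M$ at the moment cell $d$ was pushed. The idea is to maintain a single pointer head $I_1$ recording where the original head stood when the current stack cell was written, together with the relevant finite-state data. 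To move the stack head of $S_i$ one cell right, I advance $I_1$ by re-running one write-phase transition from the stored configuration. The hard part is moving one cell left: rather than re-scanning the whole write phase (which needed a fresh head), I want to reuse $H_1$ itself. Since in the second step $H_1$ is free whenever $M$ is in the read phase (the original input head's behaviour during $S_i$'s read phase must be recovered, but can be regenerated deterministically from the write-phase trace), I can let $H_1$ serve double duty as the backward-scanning head, restoring it afterward. This is exactly where the saving from three heads to two comes from, and it is the main obstacle: I must argue that the two heads can be time-shared without losing either the simulation of $M$'s genuine input head or the left-recovery scan.

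Concretely, the steps in order are: (1) Put $M$ in normal form via Lemma \ref{normalform}, so that exactly one symbol is written on each active stack per move and the last-written symbol is remembered in the state; since $M$ has no sensing this introduces none. (2) Identify $S_i$, the first stack to enter its read phase, by the bounded-loop argument of Lemma \ref{lem-red} (no stack enters the read phase iff $M$ loops within $l(|w|+2)$ moves when $r=1$), using $H_1$ to simulate and the finite control plus one counting head to bound the move count. (3) Let $M'$ keep the other $k-1$ stacks directly in its own $k-1$ stacks. (4) Simulate the read phase of $S_i$ using head $I_1$ as the "where-was-this-cell-written" pointer and recovering rightward symbols by one forward re-simulation step; recover leftward symbols by temporarily detaching $H_1$, scanning back to reconstruct the predecessor write-configuration, reading off the needed symbol, and restoring $H_1$. (5) Re-apply Lemma \ref{normalform} to put $M'$ back in normal form. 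The correctness claim is $L(M')=L(M)$, and the principal risk is interference between the two heads during the left-recovery; I expect to handle this by observing that during $S_i$'s read phase the original input head's trajectory is itself a deterministic, finitely-describable function of the stored write-phase data, so $H_1$'s true position can always be recomputed and thus safely borrowed and returned.
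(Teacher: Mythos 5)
Your proposal has a genuine gap, and it sits exactly where you flagged the ``principal risk.'' The claim that $H_1$ is free during $S_i$'s read phase --- because ``the original input head's behaviour during $S_i$'s read phase \ldots{} can be regenerated deterministically from the write-phase trace'' --- is false. The input head of $M$ remains live throughout the computation: during $S_i$'s read phase its trajectory depends on the interleaved evolution of the state, the input symbols, and the read heads of \emph{all} $k$ stacks, not on write-phase data alone. Its position is unbounded information that cannot be stored in the finite control, and ``recomputing'' it would mean re-running the simulation up to the current moment, which itself requires the symbol-recovery mechanism you are building --- an infinite regress. Even granting a borrow-and-restore of $H_1$, the backward-recovery step needs to detect when the re-simulated head's configuration matches the pointer $I_1$, i.e., a head-position equality test. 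Without sensing this costs a further head: the paper's own Proposition \ref{removesensing} performs the test by marching the two compared heads left while a \emph{third} head moves right to permit recovery of their positions. With only two heads total, testing $H_1$ against $I_1$ destroys $I_1$, and $I_1$ is the sole record of where cell $d$ was written (the cell index $d$ itself is stored nowhere else), so the simulation cannot be resumed. This is why iterating Lemma \ref{lem-red} at $r=1$ yields $3$ heads plus one more for de-sensing, and why your time-sharing idea does not close the gap to $2$.

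The paper's actual proof takes a different route that sidesteps symbol recovery entirely: it builds a two-way deterministic $\GSM$ $T$ that maps $w$ to $w\#z$, where $z$ is the interleaved record of everything $M$ writes on all $k$ stacks up to the moment the first stack $S_p$ enters its read phase, and then constructs a $2$-head $(k-1)$-stack $2\DCSA$ $M'$ on inputs of the form $w\#z$: one head simulates $M$'s input head on $w$, the other verifies $z$ and then simulates the eliminated stack $S_p$ by walking two-way on $z$ (skipping symbols outside $\Gamma_p$), while the remaining $k-1$ stacks are kept as genuine stacks. Since the eliminated stack's contents now sit on the input tape, moving left in $S_p$ is just a head move --- no re-simulation, no equality test, no sensing. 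Finally $T^{-1}(L(M')) = L(M)$, and the conclusion follows from the closure of $r$-head $k$-stack $2\DCSA$ languages under inverse $2\GSM$ mappings (from the Engelfriet et al.\ result cited in the paper, generalizing Aho--Ullman). If you want to repair your write-up, the fix is not a cleverer head schedule but this change of representation: push the eliminated stack's content onto the input via the inverse-$\GSM$ closure, which is precisely what makes two heads suffice.
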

\begin{proof}
We will need the following result concerning
two-way deterministic generalized sequential machines
($2\GSM$s) with input end-markers and accepting states
in \cite{EngelfrietCheckingStack},
the proof of which is a generalization
of an earlier result in \cite{AhoUllman1970}.
The result states:
Let $X$ be an arbitrary storage structure, and $r \ge 1$.
Then the class of languages accepted by $r$-head $2\DFA$s
augmented with storage structure $X$ is closed under inverse $2\GSM$
mappings.

Thus, in particular, the class of languages accepted
by $r$-head $k$-stack $2\DCSA$s is closed under
inverse $2\GSM$ mappings.

Let $\Sigma$ be the input alphabet,  $\Gamma_1, \ldots, \Gamma_k$
be the stack alphabets of stacks $S_1, \ldots, S_k$ respectively of $M$
(which we assume without loss of generality to be disjoint),
and $\Gamma = \Gamma_1 \cup \cdots \cup \Gamma_k$.

First construct a 
$2\GSM$ $T$ which, for every 
input $w \in \Sigma^*$ 
of $M$,  outputs $T(w) = w \# z$,  where $z \in \Gamma^*$ is the string
which represents 
the ``combined'' strings $M$ writes on the $k$ stacks
just before one of the stacks enters the reading phase.
(Note that the stacks may have separate reading phases.)  $T$ does this
by first outputting the input $w$ followed by $\#$ and then simulating
the writing phase of $M$ until one of the
stacks enters the reading phase by outputting the stack contents.  So, e.g., if in a step, $M$ writes $s_i$  on
stack $S_i$  (where $s_i$ is a symbol or $\lambda$), then in the simulated
step, $T$ outputs $s_1 \cdots s_k$. (Note that if none of the stacks
enters the reading phase, $T$ does not halt.) If stack $S_p$ ($1 \le p \le k$)
enters the reading phase first,
$T$ enters an accepting state.

Now construct a 2-head $(k-1)$-stack $2\DCSA$ $M'$ which, when given input $w \# z$,
first simulates the writing phase of $M$ on input $w$ by using one 
head on $w$ and the other input head to check that $z$ is the string
representing the ``combined''  strings written on the $k$ stacks and that
$S_p$ is the first stack
that enters the writing phase.
Then $M'$ scans $z$ and writes into the $k-1$ stacks other than $S_p$ the strings
that would have been written in these stacks.
Then $M'$ continues the simulation by using one input head 
on $w$ and the other head on $z$ to simulate
stack $S_p$. Note that 
when working on $z$, the input head simulating stack $S_p$ skips all 
symbols not in its stack alphabet, $\Gamma_p$.
The other $k-1$ stacks are simulated faithfully.
Then $T^{-1} (L(M')) = L(M)$. The result 
follows from the result from \cite{EngelfrietCheckingStack}.
\end{proof}

This gives a nice characterization of
2-head $2\DCSA$s:
\begin{corollary} \label{prop-red3}
A language $L$ is accepted by a 2-head $2\DCSA$
if and only if $L$ can be accepted by a 2-stack $2\DCSA$.
\end{corollary}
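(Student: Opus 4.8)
The plan is to prove the two implications separately, with essentially all the work going into the forward direction. For the direction \textbf{2-stack $2\DCSA$ $\Rightarrow$ 2-head $2\DCSA$}, I would simply invoke Lemma~\ref{lem-red1} with $k=2$: it produces, from any 2-stack $2\DCSA$, an equivalent 2-head $1$-stack $2\DCSA$, which by the naming convention (dropping the ``$k$-'' prefix when $k=1$) is exactly a 2-head $2\DCSA$. So this direction is immediate. One could also try to route both directions through the $2\DFA$ characterizations of Proposition~\ref{Engel} and Corollary~\ref{alltoDFA}, but the head counts there ($7$ heads one way, $10$ heads the other) blow up and do not close up to the exact equivalence, so a direct construction for the converse is preferable.

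For the converse, \textbf{2-head $2\DCSA$ $\Rightarrow$ 2-stack $2\DCSA$}, I would trade the extra input head for an extra stack, using the copy-the-input idea already noted in the introduction for turning a two-way input tape into a checking stack. Given a 2-head $1$-stack $2\DCSA$ $M$ with input heads $H_1,H_2$ and stack $S$, I would build a $1$-head $2$-stack $2\DCSA$ $M'$ with a single input head $H$ and stacks $S_1',S_2'$ as follows. In a first phase, $M'$ uses $H$ to scan $w$ once from left to right, pushing onto $S_2'$ a copy of $\rhd w \lhd$ (with dedicated stack symbols encoding the end-markers), while $S_1'$ only executes $\stay$; then $S_2'$ enters its reading phase and $H$ is moved back to the first input cell. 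In a second phase $M'$ simulates $M$ step for step: $H$ plays the role of $H_1$, the two-way read head of $S_2'$ (moving $\up$/$\down$) plays the role of $H_2$, and $S_1'$ faithfully mirrors every operation of $S$.

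The observation making this correct is that a checking stack in its reading phase is precisely a two-way read-only tape, so the read-only copy of the input on $S_2'$ reproduces the behavior of $H_2$ exactly, including its resting on the end-markers; determinism is preserved since every phase is driven deterministically by $M$. The main point to verify is the per-stack write-then-read discipline. Here $S_2'$ performs only pushes in the copy phase and only $\stay$, $\down$, $\up$ thereafter, so it is a legal checking stack; and $S_1'$ performs only $\stay$ during the copy phase (permitted in a write phase) and thereafter mirrors $S$, which is itself a checking stack, so $S_1'$ writes then reads as required. Crucially, the definition constrains each stack's phase sequence \emph{independently}, so $S_2'$ may be in its reading phase while $S_1'$ is still writing, with no need for the two stacks to be synchronized.

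I expect the only genuine obstacle to be exactly this phase bookkeeping --- in particular, arguing that leaving $S_1'$ idle (via $\stay$) throughout the copy phase does not violate the checking-stack restriction, and that the stacks need not share a common phase boundary --- together with the minor detail of encoding $\rhd$ and $\lhd$ on $S_2'$ so that all $|w|+2$ positions reachable by $H_2$ are also reachable by the read head of $S_2'$. Both become routine once the ``reading phase equals two-way read-only tape'' principle is isolated, after which the equivalence stated in the corollary follows by combining this construction with the $k=2$ case of Lemma~\ref{lem-red1}.
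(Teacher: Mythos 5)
Your proof is correct and takes essentially the same route as the paper: the paper likewise disposes of the 2-stack-to-2-head direction by invoking Lemma~\ref{lem-red1} with $k=2$, and handles the converse by copying the input onto one stack and simulating the second input head on that read-only copy while the other stack mirrors the original stack. Your extra care about the per-stack phase discipline and the end-marker encoding merely makes explicit what the paper's brief construction leaves implicit.
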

\begin{proof}
Let $M$ be a 2-head $2\DCSA$.  Construct a 2-stack
$2\DCSA$ $M'$ which first copies the input on
one stack. Thus $M'$ has now two copies of the
input, and $M'$ simulates $M$ using the two
copies. 
The converse follows from Lemma \ref{lem-red1}.
\end{proof}

Next, part 1 of the following proposition uses Lemma \ref{lem-red1} to improve  Corollary \ref{alltoDFA} when there is one head,
and part 2 converts multiple heads to stacks.
\begin{proposition} \label{prop-red2}
$~~~$
\begin{enumerate}
\item
Let $k \ge 2$. For every $k$-stack $2\DCSA$ $M$, there is 
a $(2 \cdot 3^{k-1} +1)$-head $2\DFA$ $M'$
such that $L(M') = L(M)$.
\item
Let $k \ge 1$. For every $(k+1)$-head $2\DFA$ $M$, there is
a $k$-stack $2\DCSA$ $M'$ such that
$L(M') = L(M)$.
\end{enumerate}
\end{proposition}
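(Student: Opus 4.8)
The plan is to obtain each of the two parts as a short composition of results already established earlier in the section, rather than by a fresh simulation.

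For Part 1, the idea is to eliminate one stack \emph{cheaply} before invoking the expensive iteration of Lemma \ref{lem-red} that underlies Corollary \ref{alltoDFA}. Starting from the $k$-stack $2\DCSA$ $M$ (which is $1$-head by convention), I would first apply Lemma \ref{lem-red1} once to produce a $2$-head $(k-1)$-stack $2\DCSA$ $M_1$ with $L(M_1) = L(M)$. I would then apply Corollary \ref{alltoDFA} to $M_1$ with $r = 2$ and $k-1$ stacks, obtaining a $2\DFA$ with $3^{k-1}\cdot 2 + 1 = 2 \cdot 3^{k-1} + 1$ heads that accepts the same language, which is exactly the claimed bound. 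The improvement over the naive route---applying Corollary \ref{alltoDFA} directly to $M$ with $r = 1$, which gives $3^{k}+1 = 3\cdot 3^{k-1}+1$ heads---comes precisely from Lemma \ref{lem-red1}, which removes a stack at a cost of only $2$ heads instead of the factor of $3$ charged per stack in the iteration of Lemma \ref{lem-red}.

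For Part 2, the approach runs in the opposite direction: I would convert heads into stacks by copying the input. Given a $(k+1)$-head $2\DFA$ $M$, I build a $k$-stack $2\DCSA$ $M'$ that keeps one head as a genuine two-way input head and uses the $k$ checking stacks to emulate the other $k$ heads. In the write phase, $M'$ scans $w$ once and, at each step, pushes the current input symbol onto all $k$ stacks simultaneously (using that a single transition instructs every stack), so that after the scan each stack holds $Z_b w Z_t$; $M'$ then repositions its input head and every stack read head to the cell corresponding to the first input symbol. In the read phase, $M'$ simulates $M$ step for step: head $1$ of $M$ is tracked by the input head of $M'$, while head $j+1$ is tracked by the read head of stack $j$, a rightward (respectively leftward) move of that head being simulated by $\up$ (respectively $\down$), and the stack markers $Z_b,Z_t$ playing the roles of the input end-markers $\rhd,\lhd$. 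Since $M$ is deterministic and the simulation is faithful, $M'$ is a deterministic $k$-stack $2\DCSA$ with $L(M') = L(M)$.

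I do not anticipate a serious obstacle in either part: Part 1 is a bookkeeping composition of two established results, and Part 2 is a direct step-by-step simulation. The only points requiring care are (i) confirming that Corollary \ref{alltoDFA} applied with $r = 2$ to a $(k-1)$-stack machine really collapses to $2 \cdot 3^{k-1}+1$ heads, and (ii) in Part 2, ensuring the write phase terminates cleanly after one left-to-right pass and that the stack end-markers encode the input end-markers correctly, so that each simulated stack head behaves exactly like a two-way input head of $M$.
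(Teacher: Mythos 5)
Your proposal is correct and matches the paper's proof exactly: Part 1 is obtained by applying Lemma \ref{lem-red1} once and then Corollary \ref{alltoDFA} (with $r=2$ and $k-1$ stacks, giving $2\cdot 3^{k-1}+1$ heads), and Part 2 by copying the input onto the $k$ stacks and simulating the remaining heads with the stack read heads. The extra detail you supply for the write phase and end-marker handling in Part 2 is fine but not needed beyond the paper's one-line argument.
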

\begin{proof}
For Part 1, we first apply Lemma \ref{lem-red1},
and then apply
Corollary \ref{alltoDFA}.
For Part 2, $M'$ copies the input into the $k$ stacks and then
simulates $M$ using the input and the stacks.
\end{proof}

From Corollary \ref{alltoDFA}, Proposition \ref{prop-red2},
and the known equivalence of multi-head $2\DFA$s and
$\log n$ space-bounded $2\DTM$s \cite{Ibarra-JCSS71}, we obtain an alternate proof of the following from \cite{VogelWagner}:
\begin{corollary} \label{equiv}
The languages accepted by the following coincide:
\begin{multicols}{2}
\begin{itemize}
\item multi-stack $2\DCSA$s,
\item multi-head multi-stack $2\DCSA$s,
\item multi-head $2\DFA$s, 
\item $\log n$ space-bounded $\DTM$s.
\end{itemize}
\end{multicols}
\end{corollary}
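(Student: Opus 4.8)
The plan is to prove a cycle of inclusions among the first three classes and then invoke the known Turing-machine equivalence for the fourth. Write $\mathcal{A}$, $\mathcal{B}$, $\mathcal{C}$, $\mathcal{D}$ for the four classes in the order listed: multi-stack $2\DCSA$, multi-head multi-stack $2\DCSA$, multi-head $2\DFA$, and $\log n$ space-bounded $\DTM$. First I would observe $\mathcal{A} \subseteq \mathcal{B}$ directly from the definitions, since a multi-stack $2\DCSA$ is exactly the $r = 1$ special case of a multi-head multi-stack $2\DCSA$. For $\mathcal{B} \subseteq \mathcal{C}$ I would apply Corollary \ref{alltoDFA}: every $r$-head $k$-stack $2\DCSA$, for any $r,k \ge 1$, is simulated by a $(3^k r + 1)$-head $2\DFA$, which is a multi-head $2\DFA$.

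Next, for $\mathcal{C} \subseteq \mathcal{A}$ I would use Proposition \ref{prop-red2}(2): every $(k+1)$-head $2\DFA$ with $k \ge 1$ is simulated by a $k$-stack $2\DCSA$, and since a $2\DCSA$ is single-headed by convention, this output already lies in $\mathcal{A}$. An arbitrary multi-head $2\DFA$ has some number $r \ge 1$ of heads; if $r \ge 2$ then $r = k+1$ for some $k \ge 1$ and Proposition \ref{prop-red2}(2) applies, while if $r = 1$ the machine is an ordinary $2\DFA$, whose (regular) language is accepted by a stack-ignoring $1$-stack $2\DCSA$. Either way its language lies in $\mathcal{A}$, so $\mathcal{C} \subseteq \mathcal{A}$. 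Chaining $\mathcal{A} \subseteq \mathcal{B} \subseteq \mathcal{C} \subseteq \mathcal{A}$ forces $\mathcal{A} = \mathcal{B} = \mathcal{C}$. Finally, to bring in $\mathcal{D}$, I would cite the equivalence of multi-head two-way finite automata with $\log n$ space-bounded (two-way-input) deterministic Turing machines from \cite{Ibarra-JCSS71}, giving $\mathcal{C} = \mathcal{D}$ and hence equality of all four classes.

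The proof is pure assembly, as every simulation needed has already been constructed in the preceding lemmas, so I expect no genuine mathematical obstacle. The only points requiring care are bookkeeping ones. The hard part, such as it is, will be verifying that the inclusions close into a full cycle rather than a one-directional tower, and that each degenerate boundary case is absorbed by exactly one of the cited statements: in particular that the $\mathcal{B} \subseteq \mathcal{C}$ step must invoke Corollary \ref{alltoDFA}, which covers all $k \ge 1$, rather than Proposition \ref{prop-red2}(1), whose sharper head count requires $k \ge 2$; that the single-head $2\DFA$ case is handled by the stack-ignoring embedding noted above; and that ``$\log n$ space-bounded $\DTM$'' here refers to a machine with a two-way read-only input tape, so that the two-way characterization of \cite{Ibarra-JCSS71} applies directly.
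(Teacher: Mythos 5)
Your proposal is correct and follows essentially the same route as the paper, which derives the corollary by combining Corollary~\ref{alltoDFA} (multi-head multi-stack $2\DCSA$ to multi-head $2\DFA$), Proposition~\ref{prop-red2} part~2 (multi-head $2\DFA$ to multi-stack $2\DCSA$), and the known equivalence of multi-head $2\DFA$s with $\log n$ space-bounded $\DTM$s from \cite{Ibarra-JCSS71}. Your explicit handling of the single-head $2\DFA$ edge case and the closing of the cycle of inclusions is just the bookkeeping the paper leaves implicit.
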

Since languages accepted by  $\log n$ space-bounded $\DTM$s are in $\P$,
 it follows that languages accepted by multi-head multi-stack $2\DCSA$s
are in $\P$.

Looking now at the relationship between the $k$-head $2\DCSA$ and the $k$-stack $2\DCSA$ for $k \ge 2$, any language accepted by a $k$-head $2\DCSA$ can be accepted by
a $k$-stack $2\DCSA$ (this is obvious).  By Proposition \ref{prop-red2},
any language
accepted by a $k$-stack  $2\DCSA$ can be accepted by a $(2 \cdot 3^{k-1}+1)$-head $2\DFA$ which, in turn, can be accepted by a $(2 \cdot 3^{k-1})$-head $2\DCSA$ (again, this is obvious).  Hence, we have the following:
\begin{corollary} \label{head-stack}
For $k \ge 2$:
\begin{enumerate}
\item
Any language accepted by a $k$-head $2\DCSA$ can be accepted by a $k$-stack $2\DCSA$.
\item
Any language accepted by a $k$-stack  $2\DCSA$ can be accepted by a $(2 \cdot 3^{k-1})$-head $2\DCSA$.
\end{enumerate}
\end{corollary}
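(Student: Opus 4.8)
The plan is to obtain both parts by composing results already in hand, so that the only genuine content lies in the two ``obvious'' conversions between heads and stacks; I would justify each of these by the standard device of storing a copy of the input on a checking stack and scanning that copy in two-way mode.

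For Part 1, I would take a $k$-head (one-stack) $2\DCSA$ $M$ and build a one-head $k$-stack $2\DCSA$ $M'$, generalizing the $k=2$ construction of Corollary \ref{prop-red3}. First, $M'$ performs a preliminary write phase in which its single input head sweeps the input once from $\rhd$ to $\lhd$, pushing each scanned symbol simultaneously onto $k-1$ of its stacks; these $k-1$ stacks then each hold a copy of the input and pass into their read phase, after which $M'$ returns its input head to $\rhd$. At this point $M'$ has $k$ independently addressable copies of the input available --- one through its input head and $k-1$ through the stacks read two-way --- which it uses to simulate the $k$ heads of $M$, while its single remaining stack simulates the checking stack of $M$. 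Since the $1 + (k-1) = k$ effective heads match $M$ exactly and one stack is reserved for the checking stack, the counts work out, and clearly $L(M') = L(M)$.

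For Part 2, I would simply compose two earlier results. By Proposition \ref{prop-red2}(1), every $k$-stack $2\DCSA$ with $k \ge 2$ is equivalent to a $(2 \cdot 3^{k-1}+1)$-head $2\DFA$. I would then invoke Proposition \ref{Engel}(2) with $r = 2 \cdot 3^{k-1}$, which asserts that any $(r+1)$-head $2\DFA$ is accepted by an $r$-head $2\DCSA$; here the one-for-one trade of a head for a stack is again realized by copying the input onto the checking stack and reading it two-way to supply the $(r+1)$-st head. Composing yields a $(2 \cdot 3^{k-1})$-head $2\DCSA$ equivalent to $M$, as required (and one checks $r = 2 \cdot 3^{k-1} \ge 1$, so Proposition \ref{Engel}(2) applies).

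I do not expect a serious obstacle, since each step invokes a cited construction. The one point I would be careful about is the phase discipline of the checking stacks in Part 1: the $k-1$ copy-stacks must finish writing before the main simulation begins, whereas the last stack should begin its own write phase only once the simulation of $M$ starts pushing onto its checking stack. Because phases are tracked per stack, the write-then-read order is preserved on every stack, and $M$ itself already respects that order on its checking stack, $M'$ is a legitimate $k$-stack $2\DCSA$; the only other thing to confirm is that resetting the input head between the copying and simulation stages is consistent with two-way operation, which it is, since the head may always return to $\rhd$.
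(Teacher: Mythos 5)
Your proposal is correct and follows essentially the same route as the paper: the paper disposes of Part 1 as ``obvious'' via exactly your copy-the-input-onto-$k-1$-stacks simulation (the $k=2$ case being Corollary \ref{prop-red3}), and proves Part 2 by composing Proposition \ref{prop-red2}(1) with the trivial trade of one $2\DFA$ head for a checking stack, which is precisely the content of Proposition \ref{Engel}(2) that you cite. Your explicit check of the per-stack write-then-read phase discipline just fills in detail the paper leaves implicit.
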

It would be interesting to know if in the second item above, the
$(2 \cdot 3^{k-1})$ can be improved for $k \ge 3$, noting that
for $k = 2$, the `6' can be improved to `2', since as shown
in Corollary \ref{prop-red3}, $2$-head $2\DCSA$ and
$2$-stack $2\DCSA$ are equivalent. Note that for $k=1$, the models
are identical.

The following demonstrates a hierarchy based on read heads and stacks:
\begin{proposition} \label{general}
For $r,k \ge 1$:
\begin{enumerate}
\item
$r$-head $k$-stack $2\DCSA$ is weaker than
$(r+2)$-head $k$-stack $2\DCSA$.
\item
$r$-head $k$-stack $2\DCSA$ is weaker than
$r$-head $(k+2)$-stack $2\DCSA$.
\end{enumerate}
\end{proposition}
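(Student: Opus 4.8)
Both inclusions are immediate: an $r$-head $k$-stack $2\DCSA$ is the special case of an $(r+2)$-head $k$-stack (respectively $r$-head $(k+2)$-stack) $2\DCSA$ that never moves its two extra input heads (respectively never writes on its two extra stacks). So the whole content of the proposition is strictness, and the plan is to reduce it to the classical strict, infinite hierarchy of two-way multi-head finite automata, namely that for every $m$ the languages accepted by $m$-head $2\DFA$s form a proper subclass of those accepted by $(m+1)$-head $2\DFA$s. Throughout, write $H_m$ for the class of $m$-head $2\DFA$ languages and $C(r,k)$ for the class of $r$-head $k$-stack $2\DCSA$ languages.

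The separation is to be obtained by sandwiching $C(r,k)$ between $H_m$-classes using the conversions already established. Corollary \ref{alltoDFA} supplies the upper bound $C(r,k) \subseteq H_{3^k r + 1}$, while the input-copy simulation noted at the start of this section (a stack holding a copy of the input and scanned two-way behaves exactly like a two-way input head) supplies the lower bound $H_{r+k} \subseteq C(r,k)$. The reason the upper bound carries the factor $3^k r$ rather than $r+k$ is precisely the feature to exploit: during its write phase a checking stack can record, from the input, a string of length polynomial in $|w|$ and then scan it two-way, whereas a $2\DFA$ has no scratch storage and, as the proof of Lemma \ref{lem-red} shows, must recompute each stack symbol on demand at a cost of a factor $3$ in heads per stack. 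The plan is therefore to exhibit, for each $r$ and $k$, a witness language $L$ that is accepted by the larger model but cannot be accepted by any $(3^k r + 1)$-head $2\DFA$; then $L \notin C(r,k)$ by the upper bound, while $L \in C(r+2,k)$ (respectively $C(r,k+2)$) by construction, giving the proper inclusion.

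Concretely, $L$ should force the larger machine to write, and then verify the input against, a computed string whose descriptional complexity is high enough that recomputing it needs more than $3^k r + 1$ heads. For Part 2 a natural candidate is an iterated-product language such as $\{a_1^{i_1} \# \cdots \# a_{t}^{i_{t}} \# c^m \mid m = i_1 i_2 \cdots i_t\}$, generalizing the multiplication language $\{a^i b^j c^m \mid i\cdot j = m\}$ in which two stacks handle $i\cdot j=m$ by nested sweeps: $t$ stacks realize a $t$-fold product by nested sweeps, and checking the product against $c^m$ should require many two-way input heads of a $2\DFA$. For Part 1 the analogous candidate is a grid/tensor-equality language in which the input must be compared against a string addressed by an $(r+2)$-tuple of head positions, so that the two extra heads are exactly what lets the $2\DCSA$ verify it. I would not try to reduce Part 2 to Part 1 through a head/stack exchange, since the exchange is asymmetric --- a stack is worth a factor $3$ in heads in the upper bound but only one head in the lower bound $H_{r+k} \subseteq C(r,k)$ --- so both parts must be handled by their own witnesses.

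The main obstacle is the two-way multi-head $2\DFA$ lower bound for $L$: proving that strictly more than $3^k r + 1$ heads are needed, matched exactly against the $3^k r + 1$ upper bound of the smaller model. This is a crossing-sequence / Kolmogorov-style counting argument in the spirit of the multi-head hierarchy, and making the threshold fall strictly between the two models is the delicate step. It is also why the gap in the statement is $2$ rather than $1$: the conversions are lossy --- the factor $3$ per stack in Corollary \ref{alltoDFA}, together with the extra ``$+1$'' head incurred when removing sensing in Proposition \ref{removesensing} --- so a single extra head or stack need not push a language's head-complexity past the smaller model's threshold, whereas two extra heads or stacks provide enough slack to guarantee that it does.
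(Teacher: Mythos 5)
There is a genuine gap, and it sits exactly where you flag it: your argument never produces the separation, only a plan for one. Your sandwich gives $C(r,k)\subseteq H_{3^kr+1}$ (Corollary \ref{head-stack-dfa}) and $H_{r+k}\subseteq C(r,k)$, so a witness for Part 1 must lie in $C(r+2,k)\setminus H_{3^kr+1}$. But note that the only membership tool your sandwich provides for the larger model is $H_{r+k+2}\subseteq C(r+2,k)$, and since $(3^kr+1)-(r+k+2)=(3^k-1)r-k-1\ge 0$ for all $r,k\ge 1$, every language obtained that way already lies in $H_{3^kr+1}$ and cannot separate. So your witness must use the checking stacks essentially, and you would then need to prove that an explicit language (your iterated-product or grid language) is not accepted by any $(3^kr+1)$-head $2\DFA$. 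No such lower bound is supplied, and none is known to be provable by routine crossing-sequence counting: the known two-way multi-head hierarchy results \cite{Monien,Ibarramultihead} are themselves obtained by padding/translational or diagonalization arguments, and their witnesses, which separate $m$ from $m+1$ heads, are not known to lie in $C(r+2,k)$ when $m=3^kr+1$. The paper sidesteps all of this: it first gets a separation $C(r,k)\subsetneq C(r',k)$ for some unspecified large $r'$ by exactly your sandwich plus the $2\DFA$ hierarchy, and then invokes a translational technique of \cite{Ibarramultihead} (Lemmas 1, 2 and Theorem 1(a) there), valid for $r$-head $2\DFA$s augmented with an arbitrary storage structure $X$ --- in particular $k$ checking stacks --- which compresses any such separation to gap $2$. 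That padding argument, not ``slack'' in the lossy conversions, is the actual reason the statement says $r+2$ rather than $r+1$; your closing explanation of the gap is incorrect as a mechanism.

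Your refusal to derive Part 2 from Part 1 is also a misstep. The direction needed is only the cheap half of the head/stack exchange: one extra checking stack can simulate one extra two-way input head by copying the input onto the stack and scanning it two-way (as in Proposition \ref{prop-red2}, Part 2). Hence an $(r+2)$-head $k$-stack $2\DCSA$ can be simulated by an $r$-head $(k+2)$-stack $2\DCSA$, so the Part 1 witness in $C(r+2,k)\setminus C(r,k)$ lies in $C(r,k+2)\setminus C(r,k)$, which is exactly how the paper proves Part 2. The asymmetry you cite (a stack costing a factor of $3$ heads in the upper-bound conversion) is irrelevant here, since only the one-stack-per-head lower-bound direction is used; insisting on a second independent witness construction for Part 2 doubles the unproven lower-bound burden for no benefit.
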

\begin{proof}
Consider part 1.
From Corollary \ref{head-stack-dfa},
%
%
there is an $r_1 > r$ such that
every language accepted by an $r$-head $k$-stack $2\DCSA$
can be accepted by an  $r_1$-head $2\DFA$. 
Now there is an infinite hierarchy of multi-head $2\DFA$s
with respect to their recognition power in terms of the number
of heads \cite{Ibarramultihead,Monien1,Monien}.
Hence, there is an $r' > r_1$
such that the class of languages accepted by $r_1$-head
$2\DFA$s is properly contained in the class of languages
accepted by $r'$-head $2\DFA$s. It follows that 
for any $r, k \ge 1$, there is an $r' > r$ such that the class of 
languages accepted by $r$-head $k$-stack $2\DCSA$s is
properly contained in the class of languages accepted
by $r'$-head $k$-stack $2\DCSA$s. We now show that
$r'$ can be reduced to $r+2$.

In \cite{Ibarramultihead}, using a
translational technique, the following result
was shown: Let $X$ be an arbitrary storage structure
(e.g., $k$-checking stacks, a pushdown, or a combination of both).
If for every $r \ge 1$, there is an $r' > r$ such the class of languages
accepted by $r$-head $2\DFA$s augmented with $X$
is properly contained in the class of languages accepted by
$r'$-head $2\DFA$s augmented with $X$, then
$r'$ can be reduced to 2. The proof of this consisted of Lemma 1, Lemma 2,
and Theorem 1 (part a) in \cite{Ibarramultihead}. The proof is
for multi-head $2\DFA$s, but one can easily check that the proof
applies directly to multi-head $2\DFA$s augmented with $X$,
hence, in particular, when $X$ consists of $k$ checking stacks
and/or a pushdown stack (since the proof does not involve
modification of how the storage structure operates).  

Part 2 follows from part 1 since an $(r+2)$-head $k$-stack $2\DCSA$
can be simulated by an $r$-head $(k+2)$-stack $2\DCSA$.
(Just copy the input on 2 new stacks and then simulate
2 input heads on the 2 new stacks.)
\end{proof}

It is an interesting open question whether in Proposition \ref{general},
$r+2$ can be reduced to $r+1$ in part 1 (and, consequently also reduce
$k+2$ to $k+1$ in part 2).
In \cite{Monien1,Monien}, it was shown that $r$-head $2\DFA$s
are weaker than $(r+1)$-head $2\DFA$s, but it is not clear that
the techniques in these papers can be used to resolve this open 
question. However,
for the special case when $r = k = 1$, we can improve Proposition \ref{general}: 
\begin{proposition} \label{stack1-2-4}
For $r \ge 1$:
\begin{enumerate}
\item
$1$-head $1$-stack $2\DCSA$ is weaker than $2$-head $1$-stack $2\DCSA$. 
\item
$1$-head $1$-stack $2\DCSA$ is weaker than $1$-head $2$-stack $2\DCSA$.
\end{enumerate}
\end{proposition}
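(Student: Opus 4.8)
The plan is to treat both parts as proper-containment statements. In each, the inclusion $\subseteq$ is immediate (the extra head or stack can simply be left idle), so all the work lies in the \emph{strict} separation. The idea is to sandwich the two $2\DCSA$ classes between two consecutive levels of the strict multi-head $2\DFA$ hierarchy of \cite{Monien1,Monien}, exploiting the fact that a $1$-head $1$-stack $2\DCSA$ sits at exactly the $2$-head $2\DFA$ level.

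For Part 1, I would first apply Proposition \ref{Engel} Part 1 to identify the $1$-head $1$-stack $2\DCSA$ languages with the $2$-head $2\DFA$ languages. Next, applying Proposition \ref{Engel} Part 2 with $r = 2$ gives that every $3$-head $2\DFA$ language is accepted by a $2$-head $1$-stack $2\DCSA$. Finally, I would invoke the strict head hierarchy for two-way deterministic multi-head finite automata \cite{Monien1,Monien}, which in particular yields that the $2$-head $2\DFA$ languages are properly contained in the $3$-head $2\DFA$ languages. Chaining these together,
\[
\text{$1$-head $1$-stack $2\DCSA$} = \text{$2$-head $2\DFA$} \subsetneq \text{$3$-head $2\DFA$} \subseteq \text{$2$-head $1$-stack $2\DCSA$},
\]
so the first class is properly contained in the last, which is exactly Part 1.

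For Part 2, I would reduce to Part 1 using Corollary \ref{prop-red3}, which states that the $2$-head $1$-stack $2\DCSA$ languages coincide with the $1$-head $2$-stack $2\DCSA$ languages. Substituting this equality into the separation established in Part 1 gives that $1$-head $1$-stack $2\DCSA$ is properly contained in $1$-head $2$-stack $2\DCSA$, as required.

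The proof carries essentially no new technical content: everything is assembled from the cited conversions and the known $2\DFA$ head hierarchy. The one point that must be checked is that the head counts line up so that a genuine strict separation is caught \emph{strictly between} the two target classes. This works precisely because a $1$-head $1$-stack $2\DCSA$ lands at exactly $2$ heads while the larger class already absorbs all $3$-head $2\DFA$s, so the $2$-versus-$3$ separation of \cite{Monien1,Monien} is exactly the leverage needed, and no wider gap (as in Proposition \ref{general}) is required here. The main thing to keep an eye on is therefore only that each invoked conversion preserves language equality, or one-directional containment, in the direction claimed, which it does as stated.
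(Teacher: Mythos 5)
Your proof is correct and takes essentially the same route as the paper: identify $1$-head $1$-stack $2\DCSA$ with $2$-head $2\DFA$ via Proposition \ref{Engel} Part 1, separate using the $2$-versus-$3$ head hierarchy of \cite{Monien1,Monien}, and absorb $3$-head $2\DFA$ into $2$-head $1$-stack $2\DCSA$ (the paper states this simulation as immediate, which is exactly your instance of Proposition \ref{Engel} Part 2 with $r=2$). The only cosmetic difference is in Part 2, where you invoke the full equivalence of Corollary \ref{prop-red3}, while the paper needs only the trivial one-directional simulation of a $2$-head $1$-stack $2\DCSA$ by a $1$-head $2$-stack $2\DCSA$; both are valid.
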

\begin{proof}
A $1$-head $1$-stack $2\DCSA$ is equivalent
to a $2$-head $2\DFA$ (by Proposition \ref{Engel}, part 1)
which, in turn, is weaker than a $3$-head $2\DFA$ \cite{Monien1,Monien}.
Part 1 then follows, since a $3$-head $2\DFA$ can easily be simulated
by a $2$-head $1$-stack $2\DCSA$.
Part 2 follows from Part 1, since a $2$-head $1$-stack
$2\DCSA$ can trivially be simulated by a $1$-head $2$-stack $2\DCSA$.
\end{proof}

The constructions and proofs for multi-head and multi-stack $2\DCSA$s
can be used to obtain similar results for more general models. For
example, consider the model of 
$r$-head $k$-stack $2\DCSA$ augmented with a pushdown store called $r$-head $k$-stack
$2\DCSPA$. We say such a machine is {\em ordered} if
the machine can only start using the pushdown
after all the checking stacks are done writing.
If a machine is ordered, this essentially means it behaves as if the pushdown is not there until the writing 
on the checking stacks is finished.
Call this model $r$-head $k$-stack ordered $2\DCSPA$.
Because of the restriction on the machine
regarding when it can start using the pushdown
and the known result that $r$-head $2\DPDA$
is weaker than $(r+1)$-head $2\DPDA$ \cite{Ibarramultihead},
one can verify that all the results we have obtained so far
in this section would hold when in the statements of the results,
$2\DFA$ and $2\DCSA$ are replaced by $2\DPDA$ and ordered $2\DCSPA$, respectively.
So, in particular, as in 
Proposition \ref{Engel} part (1), Corollary \ref{head-stack-dfa},
Corollary \ref{prop-red3},
Proposition \ref{prop-red2},
Corollary \ref{head-stack}, and Proposition \ref{general}, we have:
\begin{proposition}
$~~$
\begin{enumerate}
\item
A language is accepted by an ordered $\DCSPA$ if and only if
it can be accepted by a $2$-head $\DPDA$. 
\item
For $r \geq 1, k \geq 1$, any language accepted by an
$r$-head $k$-stack ordered $2\DCSPA$
(with or without sensing) can be accepted by
a $(3^kr + 1)$-head $2\DPDA$.
\item
A language is accepted by a 2-head ordered $2\DCSPA$ if and only if
it can be accepted by a 2-stack ordered $2\DCSPA$.
\item
For $k \ge 2$, any language accepted by a $k$-stack ordered $2\DCSPA$
can be accepted by a
$(2 \cdot 3^{k-1} +1)$-head $2\DPDA$.
\item
For $k \ge 1$, any language accepted by a $(k+1)$-head $2\DPDA$
can be accepted by a $k$-stack ordered $2\DCSPA$.
\item
For $k \ge 2$, any language accepted by a $k$-head ordered $2\DCSPA$
can be accepted by a $k$-stack ordered $2\DCSPA$.
\item
For $r, k \ge 1$,
$r$-head $k$-stack ordered $2\DCSPA$ is weaker than $(r+2)$-head $k$-stack ordered $2\DCSPA$. 
\item
For $r, k \ge 1$,
$r$-head $k$-stack ordered $2\DCSPA$ is weaker than $r$-head $(k+2)$-stack ordered $2\DCSPA$.
\end{enumerate}
\end{proposition}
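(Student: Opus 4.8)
The plan is to observe that in an ordered $2\DCSPA$ the pushdown is completely inert during the writing phase of the checking stacks, so it can be treated as an arbitrary auxiliary storage structure $X$ that every construction from this section carries along untouched. Concretely, all of the reductions we have proved---Lemma \ref{normalform}, Lemma \ref{lem-red}, Proposition \ref{prop17}, Proposition \ref{removesensing}, Corollary \ref{alltoDFA}, Lemma \ref{lem-red1}, Corollary \ref{prop-red3}, Proposition \ref{prop-red2}, Corollary \ref{head-stack}---operate by manipulating only the input heads and the checking stacks during their writing phases (eliminating a stack at the cost of heads, or recovering stack contents by re-simulating the writing phase). None of these constructions touch the pushdown, and, because the machine is ordered, no transition uses the pushdown until every checking stack has entered its reading phase. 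Hence each construction goes through verbatim on a machine additionally equipped with a pushdown, and orderedness is preserved since the writing phase is simulated faithfully.

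First I would redo the stack-elimination lemmas (Lemma \ref{lem-red} and Lemma \ref{lem-red1}) in the presence of $X$. The cleanest route is through the $2\GSM$-closure result from \cite{EngelfrietCheckingStack} quoted in the proof of Lemma \ref{lem-red1}: it already asserts closure under inverse $2\GSM$ mappings for $r$-head $2\DFA$s augmented with an \emph{arbitrary} storage structure $X$, so taking $X$ to be the $(k-1)$ remaining checking stacks together with the pushdown immediately yields the analogue of Lemma \ref{lem-red1}. This delivers parts 3, 4, and 6 exactly as Corollary \ref{prop-red3}, Proposition \ref{prop-red2}(1), and Corollary \ref{head-stack} were obtained, and part 1 as the analogue of Proposition \ref{Engel}(1) via the specific two-way construction of \cite{EngelfrietCheckingStack} with the pushdown carried along. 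Part 5 is the trivial direction: an ordered $2\DCSPA$ simulates a $(k+1)$-head $2\DPDA$ by copying the input into its $k$ stacks during the writing phase and then running the pushdown computation, reading the copied input two-way off the stacks. Part 2 follows by iterating the stack-elimination construction $k$ times and then removing sensing exactly as in Corollary \ref{alltoDFA}, with the pushdown carried along so that the target model is a $(3^kr+1)$-head $2\DPDA$.

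For the hierarchies in parts 7 and 8, the argument of Proposition \ref{general} transfers once we replace the multi-head $2\DFA$ hierarchy by the known hierarchy for multi-head $2\DPDA$s from \cite{Ibarramultihead}, namely that $r$-head $2\DPDA$ is strictly weaker than $(r+1)$-head $2\DPDA$. Combining this with part 2 gives an initial separation: for some $r' > r$, the class of $r$-head $k$-stack ordered $2\DCSPA$ languages is properly contained in the class of $r'$-head $k$-stack ordered $2\DCSPA$ languages. The translational technique of \cite{Ibarramultihead} (its Lemma 1, Lemma 2, and Theorem 1(a)), which applies uniformly to $r$-head $2\DFA$s augmented with any storage structure $X$, then reduces the gap from $r'$ to $r+2$ when $X$ consists of the $k$ checking stacks together with a pushdown. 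Part 8 follows from part 7 by simulating two input heads on two extra checking stacks, exactly as in Proposition \ref{general}(2).

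The main obstacle I anticipate is bookkeeping the orderedness constraint through the head-reduction constructions: one must check that when a stack is eliminated in favor of input heads, the simulating machine still refrains from using the pushdown until all of its remaining (simulated) checking stacks have finished writing. This is not difficult, but it is the one place where the ``ordered'' hypothesis is genuinely used---the recovery procedures re-run the writing phase to reconstruct stack symbols, and we must confirm that these auxiliary re-simulations never prematurely activate the pushdown. Since the recovery procedures only read input and rewrite checking-stack contents and conclude before any stack enters its reading phase, the pushdown indeed stays dormant, and orderedness is maintained.
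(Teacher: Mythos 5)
Your proposal is correct and takes essentially the same route as the paper, whose proof is exactly this observation: because an ordered machine's pushdown is inert until all checking stacks finish writing, every construction of the section (the normal form, the stack-elimination lemmas via the inverse-$2\GSM$ closure for an arbitrary storage structure $X$, and the head/stack trade-offs) carries over verbatim with the pushdown carried along as part of $X$, combined with the $r$-head versus $(r+1)$-head $2\DPDA$ hierarchy and the translational technique of \cite{Ibarramultihead}, which the paper explicitly notes applies to $2\DFA$s augmented with any $X$, including $k$ checking stacks plus a pushdown. Your explicit verification that the writing-phase re-simulations in the recovery procedure never prematurely activate the pushdown is a detail the paper leaves implicit under ``one can verify,'' but it is the same argument.
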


The analogue of Proposition \ref{stack1-2-4}
with $2\DCSA$ replaced by ordered $2\DCSPA$ also holds
--- the proof is analogous, using the fact that
$r$-head $2\DPDA$ is weaker than $(r+1)$-head $2\DPDA$
\cite{Ibarramultihead}.

Similarly, Proposition \ref{equiv} with
$\log n$ space-bounded $\DTM$s replaced 
by $2\DPDA$s augmented with $\log n$ space-bounded~read/write tape,
would also hold.  This latter model, 
$\log n$ space-bounded auxiliary $2\DPDA$ \cite{Cook71},
was shown to be equivalent to $\log n$ space-bounded
auxiliary $2\NPDA$ which, in turn,
is equivalent to polynomial time-bounded $\DTM$.
Summarizing, we have:
\begin{proposition} \label{equiv1}
The languages accepted by the following coincide:
\begin{itemize}
\item multi-stack ordered $2\DCSPA$s,
\item multi-head multi-stack ordered $2\DCSPA$s,
\item multi-head $2\DPDA$s, 
\item $\log n$ space-bounded auxiliary $2\DPDA$s,
\item $\log n$ space-bounded auxiliary $2\NPDA$s,
\item polynomial time-bounded $\DTM$s.
\end{itemize}
\end{proposition}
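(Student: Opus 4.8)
The plan is to prove the six-way coincidence by assembling it from three blocks of equivalences, mirroring the structure of Corollary~\ref{equiv} but with a pushdown store attached to every model throughout. The guiding principle is to treat the pushdown as an additional storage structure that is orthogonal to the head- and worktape-machinery, so that the constructions used in the checking-stack case carry over essentially unchanged.

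First I would collapse the three stack- and head-based models --- multi-stack ordered $2\DCSPA$s, multi-head multi-stack ordered $2\DCSPA$s, and multi-head $2\DPDA$s --- into a single class. This is exactly the ordered-$2\DCSPA$ analogue of the first three items of Corollary~\ref{equiv}, and both directions are already supplied by the preceding (eight-part) proposition: its parts~2 and~4 turn any $r$-head $k$-stack ordered $2\DCSPA$ into a multi-head $2\DPDA$, while part~5 turns any $(k+1)$-head $2\DPDA$ back into a $k$-stack (hence multi-stack) ordered $2\DCSPA$. The trivial inclusions --- a multi-stack ordered $2\DCSPA$ is a multi-head multi-stack ordered $2\DCSPA$, and so on --- close the loop, so these three classes coincide.

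Next I would bridge multi-head $2\DPDA$s with $\log n$ space-bounded auxiliary $2\DPDA$s, which is the pushdown-augmented form of the known equivalence between multi-head $2\DFA$s and $\log n$ space-bounded $2\DTM$s~\cite{Ibarra-JCSS71}. In one direction, the $m$ input-head positions of a multi-head $2\DPDA$ are integers at most $n+1$ and so can be stored jointly on a $\log n$ work tape; retaining the pushdown as the auxiliary store yields a $\log n$ space-bounded auxiliary $2\DPDA$. In the other direction, the $\log n$ work tape and its head are simulated by a fixed number of additional two-way input heads that count positions and cell values, exactly as in the $2\DFA$/$2\DTM$ simulation, while the pushdown is carried over verbatim. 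Finally, Cook's theorem~\cite{Cook71} gives that $\log n$ space-bounded auxiliary $2\DPDA$s, $\log n$ space-bounded auxiliary $2\NPDA$s, and polynomial time-bounded $\DTM$s all accept the same languages; chaining this with the previous two blocks yields the asserted coincidence of all six models.

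The step I expect to require the most care is the bridge in the third paragraph: justifying that Ibarra's simulation between multi-head input and a $\log n$ work tape survives once a pushdown is present. The delicate point is the \emph{orthogonality} claim --- that the head-counting (respectively work-tape) bookkeeping and the pushdown operations can be freely interleaved without interfering --- so that both directions of~\cite{Ibarra-JCSS71} lift without modification. This is the same ``augment with an arbitrary storage structure $X$'' principle already exploited in the proof of Proposition~\ref{general}; once it is granted, the remaining assembly is routine.
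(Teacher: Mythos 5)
Your proposal is correct and follows essentially the same route as the paper: the paper likewise derives the coincidence of the first three models from the ordered-$2\DCSPA$ analogues of the $2\DCSA$ results (the eight-part proposition), then invokes the pushdown-augmented version of the multi-head/$\log n$-worktape correspondence, and closes with Cook's equivalence of $\log n$ space-bounded auxiliary $2\DPDA$s, auxiliary $2\NPDA$s, and $\P$. Your explicit attention to the ``orthogonality'' of the pushdown in lifting the head/worktape simulations is exactly the principle the paper relies on implicitly when it asserts that Corollary~\ref{equiv} holds with $\DTM$s replaced by auxiliary $2\DPDA$s.
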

Thus the class of languages accepted by
multi-stack ordered $2\DCSPA$s is exactly the class $\P$.
It is interesting to note that this model has no
restriction on the space usage in the pushdown
stack and the checking stacks.

The model of the multi-head multi-stack ordered $2\DCSPA$ has the 
restriction that the pushdown can only be used after
all the checking stacks are done writing.  
If this restriction is not present, this model accepts exactly the class of
elementary languages (those accepted by deterministic Turing machines with time complexity obtained by
iterating an exponential function) \cite{VogelWagner}.

\subsection{The Nondeterministic Case}

Next, we consider the nondeterministic version.
It is known that $2\NCSA$s (i.e.,  $1$-stack $2\NCSA$)
are equivalent to $n$ space-bounded $\NTM$s \cite{Ibarra-JCSS71}.
But, as the following proposition shows, $2$-stack $\NCSA$s
(hence, also $2$-stack $2\NCSA$s) are significantly more powerful.
\begin{proposition} \label{re}
A language $L$ is accepted by a $2$-stack $\NCSA$ if and only if $L$ is
accepted by a $\DTM$.
\end{proposition}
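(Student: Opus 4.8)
The claim is that the languages accepted by $2$-stack $\NCSA$s coincide exactly with the recursively enumerable languages, that is, those semidecided by a deterministic Turing machine. The plan is to prove the two inclusions separately. The forward inclusion is routine, while the reverse one --- simulating an arbitrary $\DTM$ by a device whose stacks may each be written only once --- is the substantive part.

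For the forward direction, suppose $L = L(M)$ for a $2$-stack $\NCSA$ $M$. I would build a $\DTM$ $T$ that, on input $w$, performs a breadth-first search of the computation tree of $M$ on $w$. Every configuration of $M$ is a finite string (the state, the remaining input, and the two stack contents with their head markers), and $\delta$ offers only finitely many choices at each step, so the tree is finitely branching and its nodes are effectively enumerable. Since every accepting computation of $M$ is finite, it sits at finite depth; hence $T$ encounters an accepting configuration after finitely many steps exactly when $w \in L$, and may diverge otherwise. This semidecides $L$, so $L$ is accepted by a $\DTM$.

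For the reverse direction, let $T$ be a deterministic Turing machine with $L(T) = L$; I would construct a $2$-stack $\NCSA$ $M$ that guesses and verifies an accepting computation history of $T$. Encode a computation as $h = C_0 \# C_1 \# \cdots \# C_t$, where each $C_i$ is the standard string encoding of an instantaneous description of $T$, $C_0$ is the initial description on $w$, and $C_t$ is accepting. In its \emph{write phase}, $M$ reads the one-way input $w = a_1 \cdots a_n$ while copying it into $C_0$, and then, continuing with moves on the right end-marker, nondeterministically guesses $C_1, \ldots, C_t$; at each step it pushes the same guessed symbol onto both stacks, so that stacks $1$ and $2$ end up holding identical copies of $h$. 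Because $C_0$ is produced directly from the input as it is consumed, it is automatically the correct initial description, which is precisely what resolves the one-way access to $w$.

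In the ensuing \emph{read phase}, $M$ must check that $C_i \vdash_T C_{i+1}$ for every $i$, together with the fact that $C_t$ is accepting. Here the two stacks serve as two independently positioned two-way read heads over the same string $h$: the heads remain exactly one configuration-block apart and sweep through $h$ once, the stack-$1$ head scanning $C_i$ while the stack-$2$ head scans $C_{i+1}$ in lock-step. For each aligned position $M$ verifies the standard local condition that each symbol of $C_{i+1}$ is consistent with the three-symbol window of $C_i$ around the same position, the only changes occurring at the head of $T$. The \textbf{main obstacle} is this read-phase bookkeeping: keeping the two heads aligned as they cross the $\#$ separators, coping with the fact that $C_{i+1}$ may be one cell longer than $C_i$ when $T$ extends its tape, and confirming that purely two-way read-only scans suffice for the local check. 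This is exactly where \emph{two} checking stacks are needed: the usual ``two stacks simulate a Turing tape'' argument relies on repeated pushing and popping, which checking stacks forbid, so I instead substitute nondeterministic guessing of the entire history together with two read-only copies that can be scanned in parallel.
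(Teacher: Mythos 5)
Your proof is correct, but the substantive direction (simulating a $\DTM$) takes a genuinely different route from the paper. The paper does not guess a computation history at all: it invokes Minsky's theorem that a deterministic two-counter machine with one-way input is equivalent to a $\DTM$, has the $2$-stack $\NCSA$ nondeterministically write a unary string $1^n$ on \emph{both} stacks (guessing $n$ to exceed the maximum counter value ever attained), and then simulates the two counters by the positions of the two stack heads inside these unary stacks --- moving up/down encodes increment/decrement, and the bottom marker gives the zero test. That argument uses nondeterminism exactly once (to guess the bound $n$) and outsources all the simulation machinery to a known equivalence, so it is very short; your history-checking argument is self-contained, avoiding the counter-machine detour, at the price of the alignment bookkeeping you flag: you should pin down the standard fix of padding all IDs $C_0,\ldots,C_t$ with blanks to a common (guessed) length $m$, so that a single lock-step left-to-right sweep with the two stack heads offset by one block verifies both the block boundaries (the heads must read $\#$ simultaneously) and the three-symbol local window condition, with incorrectly padded guesses simply rejected on that branch. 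Interestingly, your technique is essentially the one the paper itself uses elsewhere, in its remark that a read-deterministic $1$-head $1$-stack $2\NCSA$ simulates an $n$ space-bounded $\NTM$ by guessing a sequence of IDs and deterministically checking it (there the two-way input head plays the role your second stack plays here); like the paper's construction, yours also shows en passant that read-deterministic machines suffice, since all your nondeterminism lives in the write phase. Your forward direction (breadth-first search of the finitely branching computation tree) is the same routine observation the paper dismisses as obvious.
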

\begin{proof}
Let $M$ be a deterministic 2-counter machine with a one-way
read-only input.  It is well-known that such a machine is
equivalent to a $\DTM$ \cite{Minsky}.  Construct a $2$-stack $\NCSA$ $M'$
which, when given input $w$, first writes (without moving the input head) 
$1^n$ for nondeterministically
chosen $n$ ($n$ is guessed to be bigger than or equal to the maximum value placed on any counter) on the two stacks.  $M'$ then simulates the computation
of $M$ on input $w$ using the two stacks to simulate the counters.
That a 2-stack $\NCSA$ can be simulated by an $\NTM$ and, hence by
a $\DTM$, is obvious. 
\end{proof}

Hence, two-way multi-head multi-stack $2\DCSA$ is significantly less powerful
(all languages accepted are in $\P$)
than even one-way one-head $2$-stack $\NCSA$ (which accept all recursively enumerable 
languages), demonstrating the power of nondeterminism with
this model.

\subsection{A Hybrid Case}

Now we consider a restricted version of multi-head multi-stack $2\NCSA$.  
A {\it write-deterministic}
$r$-head $k$-stack $2\NCSA$ is one where the machine operates deterministically until every stack has
entered the reading phase, and thereafter can work nondeterministically.   Then we have:
\begin{proposition} \label{prop-write}$~~$
\begin{enumerate}
\item Let $h \ge 2$.  For every $h$-head $2\NFA$ $M$ and $r \ge 1, k \geq 1$ such that $h = r+k$, there is a
write-deterministic $r$-head $k$-stack  $2\NCSA$ $M'$ such that $L(M') = L(M)$.
\item Let $r \ge 1, k\ge 1$.  For every write-deterministic  $r$-head $k$-stack $2\NCSA$ $M$, there is
a $(3^kr+1)$-head $2\NFA$ $M'$ such that $L(M') = L(M)$.
\end{enumerate}
It follows that write-deterministic multi-head multi-stack $2\NCSA$
languages are in $\NLOG$ and, hence, in $\P$.
\end{proposition}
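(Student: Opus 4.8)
The plan is to establish the two inclusions separately and then read off the complexity consequence. Part 1 is the easy direction, in which the $k$ stacks are used to manufacture $k$ extra input heads; Part 2 is the hard direction, which reuses the stack-elimination machinery of Lemma \ref{lem-red} and Proposition \ref{prop17}, the whole point being that this machinery survives the passage to write-determinism. The closing sentence then follows because a device with a fixed number of two-way input heads and no other storage can be simulated in nondeterministic logarithmic space.

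For Part 1, I would have $M'$ use its (deterministic) writing phase to copy the input onto all $k$ stacks: using a single input head to sweep from $\rhd$ to $\lhd$, $M'$ pushes the current input symbol simultaneously onto each of the $k$ stacks, and then returns that head to the left end-marker. This is entirely deterministic, so it respects write-determinism, and when it finishes, every stack holds a verbatim copy of the input and enters its reading phase. At that point $M'$ simulates the $h = r+k$ heads of the $\NFA$ $M$: the $r$ genuine input heads of $M'$ play the role of $r$ of $M$'s heads, while the $k$ stacks, each scanned two-way in read mode, play the role of the remaining $k$ heads (an $\up$ or $\down$ on a stack mirrors a $+1$ or $-1$ move of the corresponding head, with the stack markers signalling the ends of the input). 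Since all of this occurs after every stack is reading, $M'$ may make exactly the nondeterministic choices of $M$, giving $L(M') = L(M)$.

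For Part 2, the key observation is that in a write-deterministic machine the writing phase is deterministic, which is precisely the property that the constructions of Lemma \ref{normalform}, Lemma \ref{lem-red}, and Proposition \ref{prop17} rely upon. Viewing $M$ (which does not sense) as a trivial sensing machine, I would first place it in the normal form of Lemma \ref{normalform}; that construction only touches the writing phase, so it preserves write-determinism. I would then apply the stack-elimination step of Lemma \ref{lem-red} $k$ times, each application removing one stack at the cost of tripling the number of heads. The symbols of the eliminated stack are recovered by re-running its writing phase with auxiliary heads, and because that writing phase is deterministic, the recovery is unambiguous exactly as in the $\DCSA$ case. After $k$ iterations one obtains a $(3^k r)$-head $2\NFA$ with sensing---a $2\NFA$ because no stack remains, which makes the write-determinism constraint vacuous. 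Finally, since $3^k r \ge 2$, Proposition \ref{removesensing} removes the sensing at the cost of one more head, yielding the required $(3^k r + 1)$-head $2\NFA$.

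The hard part will be checking that the deterministic recovery of the eliminated stack's symbols is compatible with the nondeterministic reading phase. The resolution is the clean separation between the two phases: all stack contents are frozen once writing ends, so recovery is a deterministic function of the input alone and can be interleaved with $M$'s nondeterministic reading moves without any conflict, the nondeterministic choices of $M$ being carried over verbatim. For the final complexity claim, a configuration of a $(3^k r + 1)$-head $2\NFA$ consists of its state together with a fixed number of head positions, each storable in $O(\log n)$ bits; a nondeterministic Turing machine can therefore simulate it in logarithmic space, placing the language in $\NLOG$, and hence in $\P$ by the standard inclusion $\NLOG \subseteq \P$.
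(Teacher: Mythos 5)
Your proposal is correct and follows essentially the same route as the paper: Part 1 copies the input onto the $k$ stacks during a deterministic write phase and simulates the $r+k$ heads, and Part 2 applies Lemma \ref{normalform}, iterates the stack-elimination of Lemma \ref{lem-red} $k$ times, and removes sensing via Proposition \ref{removesensing} to reach $3^k r + 1$ heads. The paper merely asserts that ``the proofs of Lemma \ref{lem-red} through Corollary \ref{alltoDFA} apply directly,'' whereas you usefully spell out why they do --- namely that both the loop-detection counting argument and the re-simulation used to recover eliminated stack symbols depend only on determinism of the writing phase, which write-determinism guarantees.
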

\begin{proof}  For part 1, given $M$, we construct $M'$ which when given input $w$, first copies $w$ on each of
the $k$ stacks.  Then $M'$ simulates $M$ using the $r$ input heads (on $w$) and the $k$ stacks (each containing $w$).
For part 2,  we can again use $r$-head $k$-stack $2\NCSA$s with sensing. By applying Lemma \ref{normalform}, the resulting machine is a write-deterministic $r$-head $k$-stack $2\NCSA$
in normal form. Then one can
verify that the proofs of Lemma \ref{lem-red} through Corollary \ref{alltoDFA} apply directly.
\end{proof}

\begin{remark}
Denote the nondeterministic version of an $r$-head $k$-stack $2\DCSPA$
by $r$-head $k$-stack $2\NCSPA$. We can then restrict this
nondeterministic model to be {\it write-deterministic}.  Then
an analogue of Proposition \ref{prop-write} holds when
$2\NFA$ and $2\NCSA$ are replaced by $2\NPDA$ and 
ordered $2\NCSPA$, respectively. It follows that write-deterministic multi-stack 
ordered $2\NCSPA$ is equivalent to multi-head $2\NPDA$ which is equivalent to 
$\log n$ space-bounded auxiliary $2\NPDA$ and which, in turn, is
equivalent to $\log n$ space-bounded $2\DPDA$
(by Proposition \ref{equiv1}). Thus, the models, 
write-deterministic multi-stack ordered $2\NCSPA$ and multi-head $2\NPDA$, can be 
added to the list of equivalences in Proposition \ref{equiv1}.
\end{remark}

We can define a {\it read-deterministic} $r$-head $k$-stack $2\NCSA$
as one where the machine can
operate nondeterministically until every stack
has entered the reading phase, and thereafter
must work deterministically.   However, it is easy to show that a read-deterministic $1$-head $1$-stack $2\NCSA$ $M$ can simulate an $n$ space-bounded $\NTM$ $Z$: $M$ simply guesses and writes on its stack a sequence of IDs on $Z$ on an input of length $n$ and deterministically checks that the sequence is an accepting computing. This is done by using the two-way input head to compare each position of each ID with corresponding positions of the next ID.  Similarly, any recursively enumerable language
can be accepted by a read-deterministic $1$-head $2$-stack $2\NCSA$ (in fact, even with one-way
input) as seen in the proof of Proposition \ref{re}.

\section{Synchronous Multi-Stack $2\DCSA$ and $2\NCSA$}  

Here, we will look at $k$-stack $2\DCSA$ ($2\NCSA$)
where the stacks do not have separate reading phases.
Thus, on any computation on any
input (accepted or not), when a stack enters the
reading phase, all other
stacks can no longer write.  We call such a machine {\em synchronous} and we refer
to these machines by
{\em $k$-stack synchronous $2\DCSA$ ($2\NCSA$)}. 
Note that $1$-stack synchronous $2\DCSA$ ($1$-stack synchronous $2\NCSA$)
is the same as $2\DCSA$ ($2\NCSA$).

\subsection{The Deterministic Case}

Clearly, languages accepted by multi-stack synchronous $2\DCSA$s
can be accepted by multi-stack $2\DCSA$s.
We will show perhaps surprisingly below that the converse is also true.

Proposition \ref{Engel} (which is a result
from \cite{EngelfrietCheckingStack}) states:
a language $L$ can be accepted by a $2\DCSA$
if and only if it can be accepted 
by a 2-head $2\DFA$.
We can generalize this result as follows:

\begin{proposition} \label{Prop32}
Let $k \geq 1$. A language $L$ is accepted by a $k$-stack synchronous $2\DCSA$ 
if and only if $L$ can be accepted by a $(k+1)$-head $2\DFA$.
\end{proposition}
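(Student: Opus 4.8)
The plan is to prove both inclusions, with the forward (stacks-to-heads) direction resting on the observation that synchrony removes all stack \emph{reading} from the write phase, so the entire write phase of a single-input-head machine can be reproduced by a two-way transducer. For the easier direction, suppose $L$ is accepted by a $(k+1)$-head $2\DFA$ $M$. I would build a $k$-stack synchronous $2\DCSA$ $M'$ that, on input $w$, first pushes a copy of $w$ onto each of its $k$ stacks, advancing the single input head one cell and pushing one symbol onto every stack at each step. Since all $k$ stacks are written in lock-step and stop writing simultaneously when the input head reaches $\lhd$, the machine is synchronous and every stack holds $w$. Then $M'$ simulates $M$ by using its input head for the first head of $M$ and the read head of stack $j$ for the $(j+1)$-st head of $M$, so $L(M')=L(M)$. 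This is the copying idea already used in Proposition \ref{prop-write} and Corollary \ref{prop-red3}.

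For the harder direction, let $M$ be a $k$-stack synchronous $2\DCSA$; by Lemma \ref{normalform} I may assume $M$ is in normal form, so each move of the single unified write phase pushes exactly one symbol onto each of the $k$ stacks and records the current top symbols in the finite control. As in Lemma \ref{lem-red1} I would also assume the stack alphabets $\Gamma_1,\dots,\Gamma_k$ are pairwise disjoint. The crucial point supplied by synchrony is that, because no stack may be read until all stacks have stopped writing, no move of the write phase depends on a symbol stored inside any stack; together with normal form this means the write phase is a deterministic function of the input alone, computable by a device that has only the single input head and finite control.

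Exploiting this, I would construct a $2\GSM$ $T$ (with end-markers and accepting states) which on input $w$ outputs $w\,\#\,z$, where $z=\gamma_1^{(1)}\cdots\gamma_k^{(1)}\,\gamma_1^{(2)}\cdots\gamma_k^{(2)}\cdots$ is the interleaving of the strings written on the $k$ stacks, emitted one symbol per stack per simulated write move, up to the point where the synchronous read phase begins (or where $M$ accepts within the write phase). Then I would build a $(k+1)$-head $2\DFA$ $M''$ which, given $w\,\#\,z$, keeps its first head on the $w$-block to track the input head of $M$ and uses its remaining $k$ heads on the $z$-block to track the $k$ stack heads of $M$, each head navigating its own stack by moving past and ignoring symbols outside its alphabet $\Gamma_j$ (the skipping technique of Lemma \ref{lem-red1}). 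Since $T$ reproduces the full stack contents, $M''$ faithfully simulates the read phase of $M$, so $T^{-1}(L(M''))=L(M)$; as the class of $(k+1)$-head $2\DFA$ languages is closed under inverse $2\GSM$ mappings (the result from \cite{EngelfrietCheckingStack} quoted in the proof of Lemma \ref{lem-red1}, taking $r=k+1$ and empty storage), it follows that $L$ is accepted by a $(k+1)$-head $2\DFA$.

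The main obstacle --- and the place where synchrony does all the work --- is justifying that a single two-way transducer can output the contents of all $k$ stacks at once. In the general, non-synchronous setting of Lemma \ref{lem-red1}, a stack may enter its reading phase while others are still writing, and continuing the write simulation would then require \emph{reading} that stack, which a $2\GSM$ cannot do; this is precisely why Lemma \ref{lem-red1} removes only one stack and why the general head bound is $3^k$. Synchrony eliminates this difficulty in a single stroke, letting $T$ emit all $k$ stacks and collapsing the bound to $k+1$ heads. The remaining technical points to discharge are the disjointness of the stack alphabets, the handling of the dummy symbols introduced by the normal form (use per-stack dummies $\#_j\in\Gamma_j$ so the skipping in $M''$ is unaffected), and the case where $M$ accepts already during the write phase, which is handled by letting $T$ output at that moment as well.
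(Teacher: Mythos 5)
Your proposal is correct and follows essentially the same route as the paper's proof: the easy direction copies the input onto the $k$ stacks, and the hard direction builds a $2\GSM$ $T$ with $T(w)=w\#z$ interleaving the combined stack contents (possible exactly because synchrony confines all writing before any reading), then simulates the read phase with a $(k+1)$-head $2\DFA$ whose $k$ heads on $z$ skip symbols outside their own stack alphabet, concluding via closure of multi-head $2\DFA$ languages under inverse $2\GSM$ mappings. Your extra touches (normal form with per-stack dummy symbols, explicit handling of acceptance during the write phase) are harmless refinements of the same argument.
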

\begin{proof}
Let $M$ be
a $(k+1)$-head $2\DFA$ $M$. We construct a
$k$-stack synchronous $2\DCSA$ $M'$ to simulate $M$ by first
copying the input $x$ into the $k$ stacks and then simulating $M$
using one head on the input and $k$ heads on the $k$ stacks.

To  prove the converse, we generalize
the construction of Proposition \ref{Engel} Part 1 in \cite{EngelfrietCheckingStack}. 
Let $M$ be a $k$-stack synchronous $2\DCSA$.
Let $\Sigma$  be the input alphabet and $\Gamma_1, \ldots, \Gamma_k$
be the disjoint stack alphabets of stacks $S_1, \ldots, S_k$, respectively.
Let $\Gamma = \Gamma_1 \cup \cdots \cup \Gamma_k$.

First construct a $2\GSM$ 
$T$ which for every input $w \in \Sigma^*$ 
of $M$,  outputs $T(w) = w \# z$,  where $z \in \Gamma^*$ is
the string which represents 
the ``combined'' strings $M$ writes on its $k$ stacks.  $T$ does this
by first outputting the input $w$ followed by $\#$ and then simulating
the writing phase of $M$.  So, e.g., if in a step, $M$ writes $s_i$  on
stack $S_i$  (where $s_i$ is a symbol or $\lambda$), then in the simulated
step, $T$ outputs $s_1 \cdots s_k$.  When $M$ completes the writing phase,
$T$ enters an accepting state.

Now construct a $(k+1)$-head $2\DFA$ $M'$ which when given input $w \# z$
first simulates the writing phase of $M$ on input $w$ by using one 
head on $w$ and another head to check that $z$ is a string representing
the ``combined''  strings written on the $k$ stacks. Then $M'$ simulates 
the reading phase using one head on $w$ and $k$ heads on $z$.  Note 
that when working on $z$, the input head simulating stack $S_i$ skips all 
symbols in $(\Gamma - \Gamma_i)$.  Then $T^{-1} (L(M')) = L(M)$. The result 
follows since languages accepted by $k$-head $2\DFA$s are closed under
inverse $2\GSM$s \cite{EngelfrietCheckingStack}.
\end{proof}


\begin{proposition} \label{hier}
$~~$
\begin{enumerate}
\item $(k+1)$-stack synchronous $2\DCSA$s are more powerful than
$k$-stack synchronous $2\DCSA$s.
\item The languages accepted by the following
models coincide:
\begin{multicols}{2}
\begin{itemize}
\item multi-stack synchronous $2\DCSA$s, 
\item multi-stack $2\DCSA$s,
\item multi-head multi-stack $2\DCSA$s,
\item multi-head $2\DFA$s, 
\item $\log n$ space-bounded $\DTM$s.
\end{itemize}
\end{multicols}
\end{enumerate}
\end{proposition}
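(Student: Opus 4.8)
The plan is to reduce both parts to Proposition \ref{Prop32}, which already equates the $k$-stack synchronous $2\DCSA$ languages with the $(k+1)$-head $2\DFA$ languages; after that, Part 2 is a bookkeeping consequence of results proved earlier, and Part 1 is a direct translation of the known strict head hierarchy for $2\DFA$s.

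For Part 2, I would show that multi-stack synchronous $2\DCSA$s accept exactly the multi-head $2\DFA$ languages. By Proposition \ref{Prop32}, for each fixed $k$ the $k$-stack synchronous $2\DCSA$ languages coincide with the $(k+1)$-head $2\DFA$ languages; taking the union over all $k \ge 1$, the class accepted by multi-stack synchronous $2\DCSA$s equals the class accepted by multi-head $2\DFA$s. Since multi-head $2\DFA$s already appear in the list of equivalent models in Corollary \ref{equiv}, together with multi-stack $2\DCSA$s, multi-head multi-stack $2\DCSA$s, and $\log n$ space-bounded $\DTM$s, adjoining the synchronous model to that list is immediate. This is the precise sense in which the synchronous restriction collapses: the construction of Proposition \ref{Prop32} shows that no expressive power is lost by forcing all stacks to share a single write-to-read boundary.

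For Part 1, I would again pass through Proposition \ref{Prop32}: a $k$-stack synchronous $2\DCSA$ is equivalent to a $(k+1)$-head $2\DFA$, and a $(k+1)$-stack synchronous $2\DCSA$ is equivalent to a $(k+2)$-head $2\DFA$. Thus the inclusion of the $k$-stack synchronous class in the $(k+1)$-stack synchronous class corresponds to the trivial inclusion of $(k+1)$-head in $(k+2)$-head $2\DFA$s, and it remains only to argue properness. This follows directly from the strict head hierarchy for $2\DFA$s of \cite{Monien1,Monien}, namely that $r$-head $2\DFA$s are strictly weaker than $(r+1)$-head $2\DFA$s: applying it with $r = k+1$ produces a language accepted by some $(k+2)$-head $2\DFA$ but by no $(k+1)$-head $2\DFA$, which translates back through Proposition \ref{Prop32} into a language accepted by a $(k+1)$-stack synchronous $2\DCSA$ but by no $k$-stack synchronous $2\DCSA$.

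I expect no serious obstacle, since all the genuine work has been front-loaded into Proposition \ref{Prop32} and the cited head hierarchy. The one point deserving care is that the equivalence of Proposition \ref{Prop32} is uniform and head-for-head, each $k$ stacks matching $k+1$ heads, so that the strict separation between consecutive head counts transfers exactly to a strict separation between consecutive stack counts. This uniformity is what yields the tighter ``$+1$'' gap in Part 1, in contrast to the ``$+2$'' gap obtained in the non-synchronous setting of Proposition \ref{general}.
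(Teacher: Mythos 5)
Your proposal is correct and follows essentially the same route as the paper's proof: Part 1 combines Proposition \ref{Prop32} with the strict head hierarchy for $2\DFA$s of \cite{Monien1,Monien}, and Part 2 combines Proposition \ref{Prop32} with Corollary \ref{equiv}. The extra remarks on uniformity of the head-for-stack correspondence are sound and merely spell out what the paper leaves implicit.
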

\begin{proof} 
The first part follows from Proposition \ref{Prop32} and
the fact that $(k+1)$-head $2\DFA$s
are more powerful than $k$-head $2\DFA$s
(even on over unary alphabet) \cite{Monien}.
Part 2 follows
from Proposition \ref{Prop32} and Corollary \ref{equiv}.
\end{proof}

Specifically, to convert a $k$-stack $2\DCSA$ to a synchronous machine, the following increase in stacks
is sufficient:
\begin{corollary} \label{sim1}
Let $k \ge 2$.  If $L$ is accepted by a $k$-stack $2\DCSA$, then $L$ can be
accepted by a $(2 \cdot 3^{k-1})$-stack synchronous $2\DCSA$.
\end{corollary}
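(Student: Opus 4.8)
The plan is to compose two conversions that are already available in the excerpt: first reduce the given $k$-stack $2\DCSA$ to an equivalent multi-head $2\DFA$ using as few heads as possible, and then turn that $2\DFA$ back into a synchronous multi-stack $2\DCSA$, matching head count to stack count.

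First I would invoke Proposition \ref{prop-red2}, Part 1. Since $k \ge 2$, this takes the $k$-stack $2\DCSA$ $M$ accepting $L$ and produces a $(2 \cdot 3^{k-1} + 1)$-head $2\DFA$ $M_1$ with $L(M_1) = L$. This is precisely the step that exploits the multi-head reduction (Lemma \ref{lem-red1} together with Corollary \ref{alltoDFA}) to bring the head count down to $2 \cdot 3^{k-1} + 1$ rather than the naive $3^k + 1$, which is what makes the bound in the corollary sharp in the intended sense.

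Next I would apply the ``if'' direction of Proposition \ref{Prop32}, namely the implication that an $h$-head $2\DFA$ can be simulated by an $(h-1)$-stack synchronous $2\DCSA$. Setting $h = 2 \cdot 3^{k-1} + 1$, this yields a $(2 \cdot 3^{k-1})$-stack synchronous $2\DCSA$ $M_2$ with $L(M_2) = L(M_1) = L$, which is exactly the claimed conclusion. Recall that this direction is realized by copying the input into each of the $h-1$ stacks and then simulating the $h$ heads with one head on the input and one head per stack; because all stacks receive identical copies of the input in lockstep, they all enter the reading phase at the same moment, so the simulating machine is automatically synchronous.

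The construction is therefore immediate once the two propositions are in hand, and the only thing to verify is the arithmetic match between head count and synchronous-stack count: $2 \cdot 3^{k-1} + 1$ heads correspond to exactly $2 \cdot 3^{k-1}$ synchronous stacks. The one point deserving attention is confirming that the $2\DFA$-to-$2\DCSA$ direction genuinely produces a \emph{synchronous} machine rather than one with separate reading phases; this is guaranteed because the simulating machine writes identical copies of the input to every stack simultaneously and moves all of them into the reading phase at once, so no stack can continue writing after another has begun reading.
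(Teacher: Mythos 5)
Your proposal is correct and follows exactly the paper's own proof: apply Proposition~\ref{prop-red2}, Part~1, to obtain a $(2 \cdot 3^{k-1}+1)$-head $2\DFA$, then convert it to a $(2 \cdot 3^{k-1})$-stack synchronous $2\DCSA$ via the converse direction of Proposition~\ref{Prop32}. Your additional check that the copy-the-input construction is genuinely synchronous is a sound (if not strictly necessary) observation, since that direction of Proposition~\ref{Prop32} already produces a synchronous machine by construction.
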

\begin{proof}
If $L$ is accepted by a $k$-stack $2\DCSA$,  then $L$ can be
accepted by a $(2 \cdot 3^{k-1} +1)$-head $2\DFA$
(by Proposition \ref{prop-red2}, part 1).
Then $L$ can be accepted by a $(2 \cdot 3^{k-1})$-stack synchronous $2\DCSA$
(by Proposition \ref{Prop32}).
\end{proof}

It is an interesting open question whether the number of additional stacks
needed in the simulation in Corollary \ref{sim1}  can be reduced and,
in particular, whether $6$-stacks can be reduced when $k=2$.
However, we conjecture that for $k \ge 2$,  there are languages accepted
by $k$-stack $2\DCSA$s that cannot be accepted by $k$-stack synchronous $2\DCSA$s.

In \cite{EngelfrietCheckingStack}, the following was shown:
\begin{proposition} \cite{EngelfrietCheckingStack} For $k \ge 1$, a $(k+1)$-head $2\DFA$ can be simulated by a $k$-head $2\DCSA$, which in turn can be simulated by a
$(3k+1)$-head $2\DFA$.  
\end{proposition}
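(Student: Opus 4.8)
The plan is to prove the two simulations separately: the first direction admits a direct construction, while the second is best handled by reducing to the stack-elimination machinery already assembled in this section.

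For the first direction, suppose $L$ is accepted by a $(k+1)$-head $2\DFA$ $M$. I would build a $k$-head $2\DCSA$ $M'$ whose write phase simply reads the input once from left to right, pushing a copy of each input symbol onto its single checking stack; the stack markers $Z_b,Z_t$ then play the role of the input end-markers $\rhd,\lhd$ on this stored copy. Once the copy is complete, $M'$ enters its read phase, and since the stack read head can move freely inside the stack (instructions $\down,\up,\stay$), it behaves exactly like a two-way input head scanning a second copy of the input. Thus $M'$ devotes its $k$ genuine input heads to $k$ of the heads of $M$ and the stack read head to the $(k+1)$-th head, faithfully reproducing each transition of $M$. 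This is precisely Proposition~\ref{Engel} part~2 specialized to $r=k$.

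For the second direction, I would not construct the $2\DFA$ from scratch but invoke the results already in hand. A $k$-head $2\DCSA$ is a $k$-head $1$-stack $2\DCSA$, which is trivially such a machine with (unused) sensing. Applying Corollary~\ref{alltoDFA} with head parameter $k$ and stack parameter $1$ yields a $(3^{1}\cdot k + 1)=(3k+1)$-head $2\DFA$ accepting the same language. Unwinding the chain, Lemma~\ref{normalform} first puts the machine in normal form, Lemma~\ref{lem-red} (with $r=k$) removes the single stack at the cost of tripling the head count to $3k$ while retaining sensing, and Proposition~\ref{removesensing} then eliminates sensing by adding one dedicated sensing head, for the total of $3k+1$ heads.

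I would expect the main obstacle to lie entirely within Part 2, specifically the stack-elimination step, which is exactly why I would lean on Lemma~\ref{lem-red} rather than reprove it. The difficulty is that, once the checking stack is gone, the simulating automaton must recover any needed stack symbol on demand during the read phase without ever storing the stack contents. Moving the simulated stack head one cell to the right is cheap: one replays a single transition of the write phase from the remembered configuration, and the normal form guarantees that exactly one symbol is written per write step. Moving it one cell to the left is the genuine hard case, since the write phase cannot be run backwards directly; it must instead be re-simulated from the beginning until the configuration of the replay heads coincides with the remembered head positions at the current cell, whereupon the last transition is undone. Detecting this coincidence is what forces the three groups of $k$ heads (simulation, remembered configuration, and re-simulation) together with the extra head needed to compare head positions once sensing is removed, accounting precisely for the $3k+1$ bound.
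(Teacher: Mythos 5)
Your proposal is correct and takes essentially the paper's own route: the first direction is the standard copy-the-input-onto-the-checking-stack simulation (the same construction the paper uses in Proposition~\ref{Engel} part~2 and Proposition~\ref{prop-red2} part~2), and the second direction is exactly the specialization of the paper's in-house machinery --- Lemma~\ref{normalform}, Lemma~\ref{lem-red} with one stack and $r=k$ heads, and Proposition~\ref{removesensing} --- packaged as Corollary~\ref{alltoDFA} with parameters $(r,k)=(k,1)$, yielding $3^{1}k+1=3k+1$ heads. Your account of the hard case (recovering a stack symbol to the left by re-running the write phase with a third group of $k$ heads and undoing the last transition, with sensing traded for one extra head) faithfully reflects the Engelfriet-style argument the paper adapts, so there is nothing to add.
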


It was left open whether less than $(3k+1)$ heads can be used to
simulate a $k$-head $2\DCSA$ for $k \ge 2$. However, for $k =1$, 
a $2$-head $2\DFA$ can simulate a $1$-head $2\DCSA$ (Proposition \ref{Engel} Part 1).
But for multi-stack synchronous $2\DCSA$ and multi-head $2\DFA$, we have 
a tight result in Proposition \ref{Prop32}: $k$-stack synchronous $2\DCSA$s are equivalent
to $(k+1)$-head $2\DFA$s for $k \ge 1$.

Interestingly, there is a restricted version of a $k$-head
$2\DCSA$ that is equivalent to a $k$-stack synchronous $2\DCSA$
(which, by Proposition \ref{Prop32}, would also then be
equivalent to a $(k+1)$-head $2\DFA$). The  result 
is interesting in that it gives the exact
trade-off between a synchronous $2\DCSA$ with 1 input head and $k$ stacks
and a restricted $2\DCSA$ with $k$ input heads and 1 stack.

Define an {\em $r$-head $l$-write-restricted $2\DCSA$}, or simply
$r$-head write-restricted $2\DCSA$, to be an $r$-head $2\DCSA$, where
the writing phase consists of $1 \leq l \le r$ writing subphases where subphase $i$ only involves moving one input head $h_i$
when writing on the stack, i.e., the symbols written during
subphase $i$ depend only on the states and the symbols scanned by head $h_i$ and the other
heads do not move from their current position during the subphase (which for all heads that are not $h_1, \ldots, h_i$ is the left end-marker).  At the end of
subphase $i$, head $h_i$  remains at its current position until the end of the write phase. Then the writing continues with
subphase $i+1$ with a different head.   After writing subphase $l$, the machine enters the
reading phase with all heads participating in the computation.
Clearly, for $r = 1$, the restricted version is the same as the unrestricted version.  
An $r$-head $1$-write-restricted $2\DCSA$ only allows one input head to be used during the write phase. The first lemma shows that any number of
phases can be reduced to one.

\begin{lemma} \label{new1}
A language $L$ is accepted by an $r$-head write-restricted $2\DCSA$ if and only if it can
be accepted by an $r$-head $1$-write-restricted $2\DCSA$.
\end{lemma}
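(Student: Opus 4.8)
The forward implication is immediate: an $r$-head $1$-write-restricted $2\DCSA$ is by definition an $r$-head write-restricted $2\DCSA$ (one with a single writing subphase), so the plan is to prove the converse. Given an $r$-head $l$-write-restricted $2\DCSA$ $M$, relabel the heads so that writing subphase $i$ moves only head $h_i$, ending with $h_i$ parked at some position $p_i$, for $1 \le i \le l$. First I would put $M$ into normal form using Lemma \ref{normalform}, so that at every moment of the write phase the current top-of-stack symbol is recorded in the finite control. The structural fact this yields, and on which everything rests, is that subphase $i$ is a \emph{single-head} computation driven by $h_i$: its head trajectory, the symbols it writes, its final position $p_i$, and the state in which it exits are all determined by the input together with the state $s_i$ and top-of-stack symbol $\tau_i$ on entering subphase $i$.

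The construction of $M'$ then has two stages. In the write stage, $M'$ uses a single input head $g_1$ to run subphases $1, \ldots, l$ of $M$ one after another: it reproduces subphase $i$ exactly, moving $g_1$ along the trajectory $h_i$ would follow and performing the identical $\push$ and $\stay$ operations, and between consecutive subphases it returns $g_1$ to the left end-marker $\rhd$ while executing $\stay$ on the stack. Hence $M'$ writes precisely the stack contents $M$ writes, using only one moving head, so $M'$ is $1$-write-restricted. While doing so, $M'$ accumulates in its finite control the tuple $(s_1,\tau_1), \ldots, (s_l,\tau_l)$ of entry states and entry top-of-stack symbols, together with the state $s_{\mathrm{read}}$ in which $M$ enters its reading phase; since $l \le r$ is fixed and each component ranges over a finite set, this tuple fits in the control of $M'$.

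In the read stage, $M'$ must recreate the configuration $M$ holds at the start of its reading phase, namely $h_i$ parked at $p_i$ for $i \le l$ and the remaining heads at $\rhd$, and then simulate $M$'s reading phase directly with all heads. Since $M'$ moved only $g_1$ while writing, it first returns $g_1$ to $\rhd$; then, for $i = 1, \ldots, l$, it replays subphase $i$ using head $g_i$ (which is sitting at $\rhd$), starting from the recorded $s_i$ and $\tau_i$ and updating the simulated top-of-stack in the control on each push, driving $g_i$ through the subphase-$i$ trajectory and leaving it at $p_i$. Throughout this repositioning $M'$ holds the stack head fixed via $\stay$, which is legal in the reading phase, so the real stack contents (equal to those of $M$) are untouched and the stack head remains at the top, exactly as in $M$. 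Once all heads are placed and the state is set to $s_{\mathrm{read}}$, $M'$ simulates $M$'s reading phase verbatim, giving $L(M') = L(M)$.

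The main obstacle is precisely this read stage: a $1$-write-restricted machine may move only one head while writing, so $M'$ cannot leave a head parked at each intermediate $p_i$ as $M$ does, and these positions must be recovered after the fact. The resolution rests on the defining property of write-restriction, namely that the transitions of subphase $i$ depend only on the state, the symbol under $h_i$, and (after normalization) the top-of-stack symbol, and \emph{not} on the parked heads. I would verify this has two consequences: it lets the single head $g_1$ faithfully carry out subphase $i$ during the write stage even though no head occupies $p_1, \ldots, p_{i-1}$, and it lets $M'$ replay each subphase during the reading phase from the finitely-much recorded data $(s_i,\tau_i)$, tracking the simulated top-of-stack in the control rather than consulting the real stack. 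Confirming that this replay leaves the genuine stack contents and the stack head in the state that $M$ expects at the onset of its reading phase is the last routine check.
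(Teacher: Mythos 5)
Your proof is correct, but it reaches the converse by a genuinely different mechanism than the paper's. The paper's machine $M_1$ stores each head position \emph{on the stack itself}: after simulating the $h_i$-subphase with its single head, it walks that head back to $\rhd$ while pushing a unary padding block $\#_i^{d_i}$, so the stack becomes $w_1\#_1^{d_1}w_2\#_2^{d_2}\cdots w_l\#_l^{d_l}$; in the read phase these blocks are consumed to physically restore head $h_i$ to position $d_i$ (valid because heads are frozen after their subphase ends), and the subsequent simulation of $M$'s reading phase simply skips the padding symbols. You instead keep the stack content \emph{identical} to (the normalized) $M$'s and record only finitely much data per subphase --- the entry state, plus the stack top tracked via Lemma \ref{normalform} --- recovering each parked position $p_i$ by deterministically \emph{replaying} subphase $i$ with head $g_i$ during the read phase while holding the real stack fixed with $\stay$. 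Both arguments hinge on the same defining property of write-restriction, namely that subphase behavior depends only on the state and the symbol under the active head and not on the parked heads (the paper needs this too, since its single write head also simulates subphases with no heads parked at $p_1,\ldots,p_{i-1}$). The trade-off: your version needs no padding alphabet, so the read-phase simulation is verbatim rather than $\#$-skipping, but it leans essentially on determinism --- a nondeterministic replay could follow a branch inconsistent with the stack actually written, leaving heads at positions matching no single run of $M$ --- whereas the paper's position-recording technique is information-theoretic and would carry over essentially unchanged to write-restricted $2\NCSA$s. Since the lemma concerns $2\DCSA$s only, your replay is sound as stated, and the final check you defer does go through: $\stay$ is a legal read instruction, the replay never touches the genuine stack, and after repositioning the configuration (state $s_{\mathrm{read}}$, head positions, stack contents, stack head at top) coincides exactly with $M$'s at the onset of its reading phase.
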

\begin{proof}
We need only to show that an $r$-head $l$-write-restricted $2\DCSA$ $M$ where $r \ge 2$ and
$l \ge 2$ can be simulated by
an $r$-head $1$-write-restricted $2\DCSA$ $M_1$.  Let the input heads of $M$ be
$h_1, \ldots, h_r$ and $\#_1, \ldots, \#_l$ be new symbols.
We shall refer to the writing subphases of $M$ as 
$h_1$-subphase, $\ldots$, $h_l$ subphase. 
The corresponding input heads of $M_1$ are also called $h_1, \ldots, h_r$.
The single writing subphase of $M_1$ will only involve head $h_1$.

$M_1$ begins by simulating the $h_1$-subphase of $M$.  When the subphase
ends, the stack will contain some string $w_1$ and the input head
$h_1$ will be in some position on the input, which is some
distance (number of cells) $d_1$ from the left end-marker.
$M_1$ then moves $h_1$ to the left end-marker while
writing $\#_1^{d_1}$ on the stack.
The stack of $M_1$ will then contain $w_1 \#_1^{d_1}$.

Then $M_1$, again using
$h_1$, simulates the $h_2$-subphase of $M$ just like above, after which,
the stack of $M_1$ will contain $w_1 \#_1^{d_1} w_2 \#_2^{d_2}$.   
The process is repeated until the $h_l$-subphase of $M$ has been simulated.
The stack of $M_1$ will then contain $w_1 \#_1^{d_1} w_2 \#_2^{d_2}
\cdots w_l \#_l^{d_l}$. 

Next, $M_1$ enters the stack and, for each $1 \le i \le l$,
uses the segment $\#_i^{d_i}$ of the 
stack to restore input head $h_i$ to the position it was at on the 
input when the $h_l$-subphase ended. (Note that heads
$h_{l+1}, \ldots, h_{r}$, which were not involved
in the writing phase, are on the left end-marker.) 

Finally, $M_1$ simulates the reading phase of $M$. However, in the
simulation, $M_1$ ignores the symbols $\#_1, \ldots, \#_l$ 
on the stack. 
Clearly, $L(M) = L(M_1)$.
\end{proof}

\begin{proposition} \label{new2}
Let $k \geq 1$. The languages accepted by the following models coincide:
\begin{itemize}
\item
$(k+1)$-head $2\DFA$
\item
$k$-stack synchronous $2\DCSA$
\item
$k$-head write-restricted $2\DCSA$
\end{itemize}
\end{proposition}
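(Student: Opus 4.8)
The plan is to lean on Proposition \ref{Prop32}, which already establishes that $k$-stack synchronous $2\DCSA$ and $(k+1)$-head $2\DFA$ accept the same languages. Given that equivalence, it suffices to show that $k$-head write-restricted $2\DCSA$ is equivalent to $(k+1)$-head $2\DFA$, and then the triangle closes. I would prove the two inclusions separately.

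For $(k+1)$-head $2\DFA \to k$-head write-restricted $2\DCSA$: given a $(k+1)$-head $2\DFA$ $N$ with heads $g_1,\dots,g_{k+1}$, I would construct $M'$ whose single write subphase uses only head $h_1$ to scan $w$ left-to-right and copy it onto the one stack. Since only $h_1$ moves while writing, $M'$ is $1$-write-restricted, hence write-restricted. In the reading phase $M'$ simulates $N$ by assigning $g_1,\dots,g_k$ to its input heads $h_1,\dots,h_k$ on $w$ and $g_{k+1}$ to the two-way stack head reading the copy of $w$ (using $\down,\up,\stay$, with $Z_b,Z_t$ playing the role of the end-markers). This simulation is faithful, so $L(M') = L(N)$.

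For the converse, $k$-head write-restricted $2\DCSA \to (k+1)$-head $2\DFA$: by Lemma \ref{new1} I may assume $M$ is $1$-write-restricted, and by a normal-form argument as in Lemma \ref{normalform} I may assume the last-written stack symbol is recorded in the state and exactly one symbol is written per write step. The point of the write-restriction is that, since only $h_1$ moves during writing and the top symbol is kept in the state, the entire write phase is a deterministic single-head computation on $w$; hence a $2\GSM$ $T$ can, on input $w$, output $T(w)=w\#z$, where $z$ is the string pushed onto the stack. I then build a $(k+1)$-head $2\DFA$ $M'$ that on input $w\#z$ simulates the reading phase of $M$, placing $k$ heads on the $w$-block and one head on the $z$-block to play the role of the two-way stack head (the $z$-verification step of Proposition \ref{Prop32} is available but is unnecessary here, since only inputs of the form $T(w)$ matter). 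Then $T^{-1}(L(M'))=L(M)$, and since the class of languages accepted by $(k+1)$-head $2\DFA$s is closed under inverse $2\GSM$ mappings \cite{EngelfrietCheckingStack}, $L(M)$ is accepted by a $(k+1)$-head $2\DFA$.

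The main obstacle is the converse direction, specifically verifying that the write phase really is realizable by a single-head $2\GSM$: this is exactly where the write-restriction and the normal form are essential, since without them either several input heads would move during writing (so $z$ would not be a $2\GSM$ image of $w$) or the write transitions would have to consult the stack top rather than the state. Once $z$ is produced as a $2\GSM$ output and supplied as an extra two-way head, the inverse-$2\GSM$ closure of \cite{EngelfrietCheckingStack} does the remaining work and yields exactly $k+1$ heads, rather than the $3k+1$ heads that the general Corollary \ref{alltoDFA} would give for an arbitrary $k$-head $1$-stack $2\DCSA$.
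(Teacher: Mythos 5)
Your proof is correct, but it closes the triangle along a different edge than the paper does. Both arguments take the equivalence of $(k+1)$-head $2\DFA$ and $k$-stack synchronous $2\DCSA$ from Proposition \ref{Prop32}, both treat the inclusion of $(k+1)$-head $2\DFA$ in $k$-head write-restricted $2\DCSA$ as the easy copy-the-input simulation, and both invoke Lemma \ref{new1} to reduce to the $1$-write-restricted case. The difference is in the remaining inclusion: the paper simulates a $k$-head $1$-write-restricted $2\DCSA$ directly by a $k$-stack synchronous $2\DCSA$ $M'$ --- $M'$ first copies the input into stacks $s_2,\ldots,s_k$, then replays the single writing subphase on $s_1$ using its lone input head in the role of $h_1$, and finally simulates the reading phase with the input head and stacks $s_2,\ldots,s_k$ mimicking the $k$ heads of $M$ --- so no transducer machinery is needed beyond what Proposition \ref{Prop32} already used. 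You instead go from the write-restricted model directly to the $(k+1)$-head $2\DFA$, re-instantiating the Engelfriet-style $2\GSM$ argument of Proposition \ref{Prop32} and Lemma \ref{lem-red1}: the write-restriction (plus the normal form of Lemma \ref{normalform}, which moves the stack-top dependence into the state) is exactly what makes the write phase a single-head deterministic computation realizable by a $2\GSM$ $T$ with $T(w)=w\#z$, after which inverse-$2\GSM$ closure finishes the job. Your route buys an explicit, tight head-count conversion without the detour through synchronous machines, and it isolates precisely where write-restriction beats the generic $3k+1$-head bound of Corollary \ref{alltoDFA}; the paper's route is more elementary (a pure machine construction) and yields the synchronous characterization as the pivot. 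One small point you should make explicit: even though, as you observe, letter-by-letter verification of $z$ is dispensable on inputs in the range of $T$, your $2\DFA$ $M'$ cannot simply start in the reading phase --- it must first re-run the deterministic single-head write phase on the $w$-block (without writing, using the normal form to track the stack top in the state) in order to recover $M$'s state and $h_1$'s position at the moment the reading phase begins; in the paper's Proposition \ref{Prop32} this re-run is exactly what the ``verification'' pass accomplishes as a side effect.
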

\begin{proof}
The equivalence of the first two models above has already been established
in Proposition \ref{Prop32}.  That the first model can be simulated by the 
third model is obvious (just copy the input onto the stack and 
use the stack to simulate one head). There remains to show that
any language accepted by a $k$-head write-restricted $2\DCSA$ $M$ can be simulated by
a $k$-stack synchronous $2\DCSA$ $M'$. By Lemma \ref{new1}, we can assume that
$M$ is a $k$-head $1$-write-restricted $2\DCSA$. Let
the input heads of $M$ be $h_1, \ldots, h_k$, and $h_1$ is
the head involved in the single writing phase. $M'$ has
a single head $h$ and checking
stacks $s_1, \ldots, s_k$, and operates as follows:
$M'$ first copies the input into stacks $s_2, \ldots, s_k$.
Then it simulates the single writing phase of $M$ on
stack $s_1$.  Finally,
$M'$ simulates the reading phase of $M$ using stack $s_1$ and the
input and stacks $s_2, \ldots, s_k$ to mimic the $k$ input heads of $M$.
\end{proof}

From Propositions \ref{hier} and \ref{new2}, we have: 
\begin{corollary} \label{new3}
$(r+1)$-head write-restricted $2\DCSA$s are more powerful than $r$-head write-restricted $2\DCSA$s.
\end{corollary}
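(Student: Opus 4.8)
The plan is to obtain Corollary \ref{new3} purely by chaining the two results it cites, using the tight three-way equivalence of Proposition \ref{new2} as a translation device that carries the known strict hierarchy for synchronous machines over to the write-restricted models. The key observation is that the equivalence in Proposition \ref{new2} is \emph{parameter-preserving} in exactly the way we need: for each $k \ge 1$, a $k$-head write-restricted $2\DCSA$ accepts precisely the languages of a $k$-stack synchronous $2\DCSA$. Thus the head-count in the write-restricted model matches the stack-count in the synchronous model index for index, so any separation phrased in terms of the number of stacks transfers directly to a separation in terms of the number of heads.

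Concretely, I would first apply Proposition \ref{new2} with $k = r$ to conclude that the class of languages accepted by $r$-head write-restricted $2\DCSA$s equals the class accepted by $r$-stack synchronous $2\DCSA$s. I would then apply the same proposition with $k = r+1$ to conclude that the languages accepted by $(r+1)$-head write-restricted $2\DCSA$s equal those accepted by $(r+1)$-stack synchronous $2\DCSA$s. Finally, I would invoke Proposition \ref{hier}, part 1 (with its parameter set to $r$), which asserts that $(r+1)$-stack synchronous $2\DCSA$s are strictly more powerful than $r$-stack synchronous $2\DCSA$s. Composing these three facts, the $r$-head write-restricted class equals the $r$-stack synchronous class, which is properly contained in the $(r+1)$-stack synchronous class, which in turn equals the $(r+1)$-head write-restricted class; hence the $r$-head write-restricted class is properly contained in the $(r+1)$-head write-restricted class, as required.

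There is no substantive obstacle here: all of the genuine content lies in the earlier results, namely the head hierarchy for $2\DFA$s (from \cite{Monien}) that drives Proposition \ref{hier} and the careful simulations (Lemma \ref{new1} and Proposition \ref{new2}) establishing the write-restricted/synchronous equivalence. The only point that warrants a line of explicit comment is that the containment of the $r$-head write-restricted class in the $(r+1)$-head write-restricted class, together with its strictness, both come for free through the equivalence: the proper inclusion $r$-stack synchronous $\subsetneq (r+1)$-stack synchronous of Proposition \ref{hier} already delivers both directions, so no separate argument for the non-strict inclusion is needed. In short, the corollary is an immediate consequence of Propositions \ref{hier} and \ref{new2}, and I would present it as a two- or three-sentence deduction rather than a construction.
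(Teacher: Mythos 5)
Your proposal is correct and is exactly the paper's argument: the paper derives Corollary~\ref{new3} directly ``from Propositions~\ref{hier} and \ref{new2},'' i.e., by the same chaining of the parameter-preserving equivalence ($r$-head write-restricted $2\DCSA$ $\equiv$ $r$-stack synchronous $2\DCSA$, for each $r \ge 1$) with the strict stack hierarchy of Proposition~\ref{hier}, part~1. Your observation that both the inclusion and its strictness come for free through the equivalence matches the paper's (unstated but implicit) reasoning, so nothing further is needed.
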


\begin{remark}
As above, we can define a synchronous version of the
$k$-stack ordered $2\DCSPA$. 
We can also define an {\em $l$-write-restricted ordered $2\DCSPA$}.
Again all the results above hold when
$2\DFA$, $2\DCSA$, synchronous $2\DCSA$, $\log n$ space-bounded $\DTM$, and
$r$-head $l$-write-restricted $2\DCSA$ 
are replaced by 
$2\DPDA$, ordered $2\DCSPA$, synchronous ordered $2\DCSPA$, 
$\log n$ space-bounded auxiliary $2\DPDA$,
and $r$-head $l$-write-restricted ordered $2\DCSPA$, respectively, noting that 
$k$-head $2\DPDA$ is weaker than $(k+1)$-head
$2\DPDA$ \cite{Ibarramultihead}.     
\end{remark}

\subsection{The Nondeterministic Case}

Next, we consider $k$-stack synchronous $2\NCSA$.
The characterization is simple, as we have:
\begin{proposition} \label{2rncsa}
Let $k \ge 1$. A language $L$ is accepted by a $k$-stack synchronous $2\NCSA$ if and only if
$L$ is accepted by a $k$-stack $2\NCSA$.
\end{proposition}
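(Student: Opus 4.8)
The forward implication is immediate: a $k$-stack synchronous $2\NCSA$ is by definition a special case of a $k$-stack $2\NCSA$ (one whose accepting computations simply never read from one stack while another is still writing), so every synchronous machine is already a $k$-stack $2\NCSA$. The content of the proposition is the converse, and the plan is to exploit nondeterminism to collapse the possibly staggered write phases of a general machine into a single synchronous one. Given a $k$-stack $2\NCSA$ $M$, I would build a $k$-stack synchronous $2\NCSA$ $M'$ that operates in two stages. In the (synchronous) write stage, $M'$ ignores the input and nondeterministically guesses, for each stack $j$, a string $x_j \in \Gamma^*$ intended to be the final content of that stack in some accepting computation of $M$; it writes these guesses one stack at a time, keeping the remaining stacks in their write phase by means of $\stay$ moves, so that no stack enters its reading phase until all $k$ guesses have been written. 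In the read stage, all stacks of $M'$ are simultaneously in their reading phase, and $M'$ simulates the \emph{entire} computation of $M$ (both $M$'s write and read phases) step for step, verifying as it goes that each $x_j$ is exactly what $M$ would have written.

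The verification uses the single stack head that $M'$ has over each stack $j$ to track one position playing two roles according to the phase that $M$ assigns to stack $j$, which $M'$ records in its state. While $M$ has stack $j$ in its write phase, the head marks a \emph{write pointer} that starts at the bottom marker $Z_b$: each time $M$ applies $\push(y)$ to stack $j$, $M'$ moves its stack-$j$ head one cell $\up$, reads the guessed symbol there, and checks it equals $y$, rejecting the branch on any mismatch (a $\stay$ move of $M$ leaves the pointer fixed). When $M$ first applies a read instruction to stack $j$, $M'$ checks that the pointer has reached the topmost written symbol, i.e.\ that a further $\up$ move would meet $Z_t$; this guarantees $x_j$ is neither too short nor too long and leaves the head exactly where $M$'s stack head sits at the end of its write phase. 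From that moment the same head tracks $M$'s reading position, and $M'$ simulates the $\down,\up,\stay$ instructions of $M$ faithfully on the now-verified guessed content. The single input head of $M$ is simulated directly by the single input head of $M'$, and $M'$ accepts exactly when its simulation of $M$ enters a final state with every verification having succeeded.

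The point that needs the most care, and the reason the construction works with the same number $k$ of stacks (in contrast to the deterministic situation, where Corollary~\ref{sim1} must increase the number of stacks), is the interaction between stacks: in $M$ the symbols pushed onto a stack still in its write phase may depend on symbols already read from another stack that entered its read phase earlier. Guessing all the $x_j$ up front and then replaying the whole computation resolves this automatically, since during the read stage $M'$ reads the actual guessed symbols of the already-fixed stacks and checks every subsequent push of $M$ against the corresponding guess; a computation branch of $M'$ survives precisely when the guessed contents are globally consistent with a genuine accepting computation of $M$. Hence $L(M') = L(M)$, and since the case $k=1$ is trivial (a $1$-stack machine is synchronous by definition), the equivalence holds for all $k \ge 1$.
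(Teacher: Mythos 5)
Your proof is correct and takes essentially the same approach as the paper: there, $M'$ likewise nondeterministically writes guessed strings $z_1, \ldots, z_k$ on the $k$ stacks before any stack reads, moves all stack heads to the bottom, and then simulates $M$ on $w$ while checking that each $z_j$ is exactly the string $M$ writes on stack $j$ during its writing phase. Your extra details (the write-pointer verification, the length check when a simulated stack enters its read phase, and the trivial $k=1$ case) are simply elaborations of steps the paper's terse proof leaves implicit.
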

\begin{proof}
It is sufficient to show that for $k \ge 2$,
every $k$-stack $2\NCSA$ $M$ can be simulated
by a $k$-stack synchronous $2\NCSA$ $M'$.

When given an input $w$, $M'$ first nondeterministically
writes strings $z_1, \ldots, z_k$ on the $k$ stacks 
while reading $\lambda$ on the input.  Then $M'$
moves all the stack heads to the bottom of their respective
stacks.  Then $M'$ simulates $M$ on $w$ while checking
that each $z_i$ is the string written by $M$ on stack $i$
during the writing phase.
\end{proof}

It follows from Proposition \ref{2rncsa} that all the results
concerning $2\NCSA$s in the previous section apply to synchronous $2\NCSA$s.

Finally, we can define an $r$-head write-restricted $2\NCSA$ as the nondeterministic 
version of an $r$-head write-restricted $2\DCSA$. Using the construction in the proof
above, it is clear that an $r$-head write-restricted $2\NCSA$ is equivalent to
an $r$-head $2\NCSA$ which, in turn, has been shown 
to be equivalent to an $n^r$ space-bounded $\NTM$
\cite{Ibarra-JCSS71}.

\subsection{Some Decidability Questions Concerning Synchronous Machines}
\label{sec:phasedmachines}

Here, we will address decidability questions regarding whether $k$-stack
machines are synchronous.
%

\begin{proposition} \label{separate-1}
For any $k \ge 2$, it is undecidable, given a $k$-stack $2\DCSA$ $M$,
whether it is synchronous. 
\end{proposition}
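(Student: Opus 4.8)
The plan is to prove undecidability by reducing from a known undecidable problem about deterministic machines. A natural candidate is the halting problem for deterministic two-counter machines (equivalently, the emptiness or membership problem for a $\DTM$), since such machines are referenced in the proof of Proposition~\ref{re} and their halting is undecidable by \cite{Minsky}. The key idea is that ``being synchronous'' is a property about the \emph{global timing} of when stacks enter their reading phases, so I want to engineer a $k$-stack $2\DCSA$ whose stacks enter their reading phases at coordinated times if and only if some underlying computation halts.

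First I would take an arbitrary deterministic two-counter machine $Z$ and construct a $2$-stack $2\DCSA$ $M$ (the general $k \ge 2$ case then follows by leaving the extra $k-2$ stacks inert, i.e., having them write nothing and immediately enter a trivial reading phase in lockstep). The machine $M$ is built so that on every input it carries out a simulation of $Z$ in its write phase, using the two stacks much as in Proposition~\ref{re} to track the counter values. The crucial design choice is to make $M$ write on the two stacks in such a way that stack $S_1$ enters its reading phase exactly when $Z$ halts (say, reaches its halting state), while stack $S_2$ is made to enter its reading phase at a different, deliberately mismatched moment --- for instance, after a fixed or independently computed number of steps that can only coincide with $S_1$'s reading phase when $Z$ does \emph{not} halt (or vice versa). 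The point is to arrange the two triggers so that ``$S_1$ and $S_2$ enter reading simultaneously on every computation'' is logically equivalent to ``$Z$ fails to halt,'' making $M$ synchronous if and only if $Z$ does not halt.

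The main obstacle I anticipate is handling the definition of synchronous carefully: a machine is synchronous if on \emph{every} computation on \emph{every} input, no stack writes after another stack has started reading. This is a universally quantified condition over all inputs, not just accepted ones, so I must ensure the simulation of $Z$ is input-independent (e.g., $Z$ is a fixed machine run on empty input, and $M$ ignores its input during the simulation, or runs the same simulation regardless of input). I also need to make sure that when $Z$ does halt, there genuinely exists a computation witnessing asynchrony --- that is, one stack provably continues writing after the other has entered its reading phase --- and that when $Z$ does not halt, the write phases are infinite on both stacks in a matched way, so neither ever enters reading ``first'' in a way that violates synchrony. Since checking stack computations may be infinite, I must confirm that the synchrony condition is vacuously satisfied (or appropriately satisfied) along nonterminating computations, which requires pinning down exactly how the normal-form definitions treat stacks that never leave the write phase.

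Finally I would argue correctness: if $Z$ halts, then on the relevant computation $M$ has $S_1$ enter reading while $S_2$ is still writing (or at a strictly different step), so $M$ is \emph{not} synchronous; conversely, if $Z$ does not halt, the coordinated construction guarantees the stacks' writing phases never produce a violating interleaving, so $M$ \emph{is} synchronous. Since halting of $Z$ is undecidable, deciding whether the constructed $M$ is synchronous is undecidable, and the construction is effective and uniform in $k$, giving the result for all $k \ge 2$. The delicate part, and where I would spend the most care, is the exact step-counting gadget that makes the two reading-phase triggers provably coincide precisely in the non-halting case.
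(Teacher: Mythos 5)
Your reduction source (halting of a deterministic two-counter machine) and your target equivalence ($M$ synchronous iff $Z$ does not halt) match the paper, as does the inert-stack padding for $k>2$; but the core simulation mechanism in your sketch does not work, and the idea that repairs it is missing. You propose to simulate $Z$ ``in the write phase, using the two stacks much as in Proposition~\ref{re}'' --- but the construction in Proposition~\ref{re} is essentially nondeterministic: it \emph{guesses} a bound $n$ and writes $1^n$ on the stacks, and the counters are then simulated by moving the stack heads inside the already-written content, i.e., during the \emph{read} phase. A deterministic machine can neither guess that bound nor simulate counters while writing (a checking stack cannot pop, so a counter represented by stack height during the write phase cannot be decremented). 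Your fallback --- making the simulation input-independent because synchrony is universally quantified over inputs --- makes this worse: a fixed deterministic computation ignoring its input has no access to unbounded space and cannot faithfully simulate an arbitrary two-counter machine. The paper's key idea is exactly the opposite: it \emph{exploits} the universal quantification. On input $1^n$, the machine copies $1^n$ into $S_1$, writes a single $\#$ into $S_2$, and then simulates $Z$ using the two-way input head and the $S_1$ head as the two counters (so $S_1$ is already in its read phase), aborting if either counter would exceed $n$. Thus ``$M$ violates synchrony on \emph{some} input'' becomes ``there \emph{exists} $n$ bounding a halting computation of $Z$,'' which is precisely ``$Z$ halts.''

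There is also a definitional slip in your violation gadget: you aim to make the stacks ``enter reading simultaneously'' iff $Z$ does not halt, but synchrony does not require simultaneous entry into reading phases --- it only forbids a stack from \emph{pushing} after another stack has entered its read phase. What you need (and what you leave unresolved as a ``step-counting gadget'') is a concrete late push: in the paper, when the bounded simulation sees $Z$ halt, $M$ pushes a second $\#$ onto $S_2$ while $S_1$ is already reading, and in all other outcomes (counter overflow, or looping within the bound) $S_2$ is never touched again, so no violation occurs --- in particular, infinite computations are harmless because a non-halting run simply never performs the offending push. Without the input-as-bound device and this explicit push-after-read trigger, your two correctness directions cannot be established as sketched.
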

\begin{proof}
Clearly, we only need to prove the case when $k =2$.
It is known that it is undecidable, given a deterministic 2-counter machine $Z$ with
no input starting from both counters being zero,
whether it will halt \cite{Minsky}. Let $Z$ be such a 2-counter machine.
We construct a
2-stack $2\DCSA$ $M$ over unary input that is synchronous if and only if $Z$ does not halt. $M$ has
stacks $S_1$ and $S_2$ which, when given
a unary input string $1^n$, first copies $1^n$ in $S_1$ and
writes $\#$ in $S_2$.  Then $M$ simulates $Z$ using the
input head and $S_1$ (just by moving both tape heads) but can only continue the simulation if neither
counter exceeds $n$. Hence, as $M$ is simulating $Z$, $M$
has already started reading from $S_1$; and if $M$ will later write to $S_2$, then
$M$ is not synchronous.
If $Z$ halts without any of its counters
exceeding the value $1^n$, $M$ writes another $\#$ in $S_2$
and accepts.  If $Z$ exceeds the value $n$ in one of
its counters, $M$ detects this and accepts without modifying $S_2$.
Clearly, if $Z$ goes into an infinite loop without exceeding
the value $n$ in any of its counters, $M$ also goes into
an infinite loop. It follows that $M$ is synchronous if
and only if, for every input $1^n$ to $M$, $M$ only writes one $\#$ if and only if
there is no $n$ where $Z$ halts with both counters at most $n$, if and only if 
$Z$ does not halt, which is undecidable.
\end{proof}

Proposition \ref{separate-1} still holds 
when the machine has a one-way input for any $k \ge 3$: 
\begin{corollary} \label{separate-2}
For any $k \ge 3$, it is undecidable, given a $k$-stack $\DCSA$ $M$,
whether it is synchronous. 
\end{corollary}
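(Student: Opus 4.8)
The plan is to adapt the construction underlying Proposition~\ref{separate-1} to one-way input. There, the two-stack $2\DCSA$ $M$ uses its two-way input head together with stack $S_1$ to simulate the two counters of a deterministic two-counter machine $Z$, while $S_2$ serves only as a ``synchronicity flag'': $M$ writes $\#$ on $S_2$ during the write phase and writes a second $\#$ on $S_2$ exactly when $Z$ halts within the bound $n$, which happens only after $S_1$ has already entered its reading phase. The only feature of $M$ that genuinely requires a two-way head is the reading phase, since the write phase is a single left-to-right pass copying $1^n$ onto $S_1$. I will therefore replace the two-way input head by a third checking stack $S_3$ that holds a copy of the input and is scanned two-way during its reading phase. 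As noted in Section~\ref{sec:intro}, this is exactly the ``$+1$ stack'' cost of converting two-way input to one-way input, which is why the bound moves from $k \ge 2$ to $k \ge 3$.

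Concretely, I would build a $3$-stack one-way $\DCSA$ $M'$ whose single left-to-right pass over $1^n$ copies the input \emph{simultaneously} onto $S_1$ and $S_3$ and writes a single $\#$ onto $S_2$; after this pass all three stacks have just completed their write phases. $M'$ then simulates $Z$ using the reading head of $S_1$ for one counter and the reading head of $S_3$ for the other, accepting without touching $S_2$ if a counter would exceed $n$, looping if $Z$ loops within the bound, and writing a second $\#$ on $S_2$ and then accepting if $Z$ halts within the bound. Exactly as in Proposition~\ref{separate-1}, the second $\#$ is written only after $S_1$ and $S_3$ have entered their reading phases, so $M'$ is synchronous if and only if $Z$ never halts. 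Since halting of a two-counter machine is undecidable~\cite{Minsky}, synchronicity of a $3$-stack one-way $\DCSA$ is undecidable. For $k > 3$ I would pad with $k-3$ dummy stacks that write a single symbol at the start and then remain in their write phase forever, never reading and never writing again; these affect neither acceptance nor synchronicity, so the same reduction yields the result for every $k \ge 3$.

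The main obstacle is ensuring that the auxiliary stacks enter their reading phases at the \emph{right} moment. The naive ``copy the whole input onto $S_3$ first, then read it'' simulation would put $S_3$ into its reading phase while $S_1$ is still being written during the copy, forcing $M'$ to be non-synchronous regardless of $Z$ and destroying the reduction. The key design decision that avoids this is to interleave the copying so that $S_1$ and $S_3$ complete their write phases together and enter their reading phases simultaneously, leaving $S_2$'s late write as the sole possible synchronicity violation. One should likewise check that the padding stacks never trigger a premature read: keeping them in their write phase throughout is the cleanest way to guarantee they contribute no spurious violation, so that synchronicity of $M'$ remains determined solely by the behavior of $S_2$, i.e., by whether $Z$ halts.
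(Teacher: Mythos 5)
Your proposal is correct and takes essentially the same route as the paper: the paper's proof likewise replaces the two-way input of the Proposition~\ref{separate-1} machine by an extra checking stack ($S_0$ there, your $S_3$) holding a copy of the one-way input, with the counter simulation and the late $\#$-write on $S_2$ carried over unchanged. Your explicit interleaving of the copy onto $S_1$ and $S_3$ in a single pass (so no stack enters its reading phase before the others finish writing) and the dummy-stack padding for $k>3$ are exactly the details the paper's one-line proof leaves implicit, and both are handled correctly.
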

\begin{proof}
$M$ with stacks $S_0, S_1, S_2$ first
copies its one-way unary input $1^n\#$ (where $\#$ is the
right end-marker for the one-way input) in $S_0$.
Then it proceeds as in the proof of Proposition \ref{separate-1}
with $S_0$ acting as the two-way input.
\end{proof}

Note that Proposition \ref{separate-1} and Corollary \ref{separate-2}
also hold for nondeterministic machines.
We now show that Corollary \ref{separate-2} does not hold
for $k =2$, even when the machine is nondeterministic.
\begin{proposition} \label{separate-3}
It is decidable, given a 2-stack $\NCSA$ $M$,
whether it is synchronous.
\end{proposition}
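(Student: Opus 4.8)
The plan is to reduce the synchronicity test for $M$ to the emptiness problem for one-way single-stack $\NCSA$s, which is decidable \cite{GGH2}. First I would restate non-synchronicity operationally: $M$ is \emph{not} synchronous iff it has a computation on some input reaching a step (a \emph{violation}) at which one stack, say $S_a$, has already entered its reading phase while the other stack $S_b$ performs a $\push(y)$. As the two roles are symmetric, it suffices to decide separately whether a violation with $(a,b)=(1,2)$ exists and whether one with $(a,b)=(2,1)$ exists, and then report non-synchronicity iff at least one does.

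The crucial observation is that up to the \emph{first} violation only the reading stack's contents ever matter. Look at the prefix of a computation that ends at the first violation step $t$. Since $M$ is a checking stack automaton, each stack is write-then-read, so $S_b$ performing a $\push$ at step $t$ forces $S_b$ to still be in its write phase throughout the prefix, hence $S_b$ has executed no read ($\down$ or $\up$) instruction and its contents are never consulted before $t$. Therefore, to regenerate this prefix I need to maintain faithfully only the contents of $S_a$ (which $M$ scans two-way once $S_a$ is reading), while for $S_b$ it suffices to remember in the finite control its current top symbol: this symbol starts as $Z_b$, is updated on each $\push$, and is left unchanged on each $\stay$, because $S_b$'s head never leaves the top before $t$.

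This yields a direct construction. For $(a,b)=(1,2)$ I would build a one-way single-stack $\NCSA$ $M_1$ that simulates $M$ as follows: $M_1$ uses its checking stack to simulate $S_1$ (writing, then reading, exactly as $M$ does on $S_1$), and it keeps in its state $M$'s control state, the phase of $S_1$, the top symbol of $S_2$, and a flag recording whether $S_1$ has yet entered its reading phase. At each simulated step $M_1$ updates the top-of-$S_2$ component from the $S_2$-instruction, but aborts the branch if that instruction is a read, since then $S_2$ leaves its write phase and can never again $\push$; and $M_1$ enters a final state exactly when the flag says $S_1$ is reading and the current $S_2$-instruction is a $\push$ (it may first scan off the remaining input if one insists on a whole-input acceptance convention). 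Then $L(M_1)\neq\emptyset$ iff a $(1,2)$-violation exists, and a symmetric $M_2$ handles $(2,1)$. Since emptiness of one-way $\NCSA$s is decidable \cite{GGH2}, checking $L(M_1)=\emptyset$ and $L(M_2)=\emptyset$ decides whether $M$ is synchronous.

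The step I expect to be the main obstacle is justifying that the non-reading stack can be discarded; everything rests on the fact that at the first violation the pushing stack has not yet read, so its (unbounded) contents are irrelevant and only a single symbol need be tracked. This is precisely what breaks in the harder settings and explains the boundary: with a two-way input head $M_1$ would instead be a (two-way) $2\NCSA$, and with $k\ge 3$ stacks one would be left with a genuine two-stack checking machine, whose emptiness problems are undecidable (cf.\ Proposition \ref{re}), matching Proposition \ref{separate-1} and Corollary \ref{separate-2}.
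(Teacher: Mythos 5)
Your proof is correct, and it rests on exactly the observation that powers the paper's argument --- up to the first synchronicity violation the pushing stack has never been read, so its unbounded contents are irrelevant and only its top symbol need be carried in the finite control --- but your reduction target is genuinely different. The paper also branches on which stack, say $S_1$, enters its reading phase first, but it then builds a two-way $\NFA$ $M'$ whose \emph{input} is the candidate contents $w$ of $S_1$: since $M$'s input is one-way, $M'$ can guess it symbol-by-symbol on the fly, verify during the simulated write phase that $S_1$ writes exactly $w$, use its two-way head on $w$ to simulate $S_1$'s read phase, reject if $S_2$ enters its stack (your abort-on-read branches correspond exactly to this Case~1 rejection), and accept precisely when $S_2$ pushes after $S_1$ has entered reading; synchronicity thereby reduces to emptiness of a $2\NFA$, i.e., of a regular language. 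You instead keep $M$'s actual one-way input and use a genuine checking stack to simulate $S_1$ faithfully, reducing to emptiness of a one-way single-stack $\NCSA$, which is decidable \cite{GGH2} --- a fact the paper itself invokes elsewhere. What each buys: the paper's route lands in the computationally lightest target (regular-language emptiness) at the cost of the input-guessing trick, which is also where one-wayness of the input is exploited in that construction; your route is more direct, with no re-encoding of stack contents as input and no on-the-fly guessing, but leans on the heavier $\NCSA$ emptiness result. One point in your write-up is worth keeping even against the paper's version: the explicit maintenance of $S_2$'s top symbol in the finite control is needed in the paper's construction as well (transitions of $M$ depend on the scanned symbol of every stack) but is left implicit there. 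Your closing diagnosis of where the argument breaks --- two-way input would force a $2\NCSA$ simulator, and $k \ge 3$ stacks would leave a two-stack checking machine with undecidable emptiness --- matches the boundary drawn by Proposition \ref{separate-1} and Corollary \ref{separate-2}.
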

\begin{proof}
Given a 2-stack $\NCSA$ $M$ with stacks
$S_1$ and $S_2$, we will construct a $2\NFA$
$M'$ such that $M$ is synchronous 
if and only if $L(M') = \emptyset$, which
is decidable.

Let $\Sigma$ be the input alphabet of $M$,
and $\Gamma$ be the stack alphabet of $S_1$ and $S_2$.
The input to $M'$ is a string 
$w \in \Gamma^*$ (with left and right end-markers).
$M'$ when given $w$ first guesses
which of the two stacks enters the reading
phase first.

Suppose $M'$ guesses that $S_1$ enters
the reading phase first.  $M'$ guesses
the one-way input $x\lhd$ ($\lhd$ the end-marker) to $M$  (where $x  \in \Sigma^*$)
symbol-by-symbol
and simulates the writing and reading phases
of stack $S_1$ (on input $x\lhd$) by verifying that it writes $w$
(the input to $M'$) during the write stage while not applying any transitions that read from $S_2$,
then simulating the read stage of $M$ by using the two-way input $w$,
 until either:

{\em Case 1}: $S_2$ enters
its stack after $S_1$ has entered its stack, or

{\em Case 2}: $S_2$ writes a symbol after $S_1$ has
entered its stack.

\noindent
Note that $M'$ can guess
the end-marker $\lhd$ (which is the last
symbol of $x\lhd$) either during the 
writing  phase of  $S_1$ or during the
reading phase of $S_1$. 
If Case 1 occurs, then $M'$ rejects $w$.
If Case 2 occurs, then $M'$ accepts $w$.

The case when $M'$ guesses that $S_2$ enters
the reading phase first is handled similarly.
Clearly, $M$ is synchronous
if and only if $L(M') = \emptyset$.
\end{proof}

\section{Space Complexity of Multi-Stack $2\NCSA$}

This section will consider two measures of space complexity for
multi-stack $2\NCSA$s.
\begin{definition}
Let $k \geq 1$, and let $M$ be a $k$-stack $2\NCSA$. For $S(n)$ a non-zero function, define the following space complexity measures.
\begin{itemize}
\item We say $M$ is $S(n)$-limited if, for every $w \in L(M)$ with $w$ of length $n$, there is an accepting computation in which the string written on stacks $2$ through $k$ have length at most $O(S(n))$.

\item We say $M$ is $S(n)$-bounded if, for every $w \in L(M)$ with $w$ of length $n$, there is an accepting computation in which the string written on stacks $1$ through $k$ have length at most $O(S(n))$.
\end{itemize}
\end{definition}
Notice that with the definition of $S(n)$-limited, there is one stack that is unbounded, while the others are all at most $S(n)$ in size; whereas with $S(n)$-bounded, every stack is at most $S(n)$ in size.

\subsection{Characterizations of $S(n)$-Limited Multi-Stack $2\NCSA$s}

We will show that a $k$-stack $S(n)$-limited $2\NCSA$s can be characterized in
terms of space-bounded $\NTM$s.
We will use the following characterization 
of $r$-head $2\NCSA$ (i.e., a 1-stack $2\NCSA$ with $r$ two-way heads)
in \cite{Ibarra-JCSS71}:

\begin{proposition} \cite{Ibarra-JCSS71} \label{JCSS71}
\label{NCSANTM}
Let $r \ge 1$.  A language $L$ can be accepted by an $r$-head $2\NCSA$
if and only if $L$ can be accepted by an $n^r$ space-bounded $\NTM$.
\end{proposition}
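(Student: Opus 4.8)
The plan is to prove the two inclusions separately, guided by the principle that $r$ two-way heads over an input of length $n$ can address $(n+2)^r = O(n^r)$ positions (by treating the heads as the digits of a radix-$(n{+}2)$ counter, exactly as in the counting technique of Lemma \ref{lem-red}), and that this addressing capacity is precisely what matches the $O(n^r)$ worktape cells of the $\NTM$.

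For the ($\Leftarrow$) direction, given an $n^r$ space-bounded $\NTM$ $Z$, I would build an $r$-head $2\NCSA$ $M$ that writes a full computation history of $Z$ on its (unbounded) checking stack. During the write phase $M$ nondeterministically writes $C_0 \# C_1 \# \cdots \# C_t$, where each $C_i$ encodes an instantaneous description of $Z$ (worktape contents, state, head positions), padded to length $\Theta(n^r)$; note the number $t$ of descriptions may be exponential, but a checking stack has no length restriction. In the read phase $M$ verifies that $C_0$ is initial, $C_t$ is accepting, and $C_{i+1}$ follows from $C_i$ by one move of $Z$. The only nontrivial point is the local consistency check $C_i \vdash_Z C_{i+1}$: the two blocks have length $\Theta(n^r)$, so to align position $p$ of $C_i$ with position $p$ of $C_{i+1}$, $M$ holds $p$ in its $r$ input heads (as a radix-$(n{+}2)$ counter) while the two-way stack head walks between the two blocks; since a move of $Z$ alters only the cells in a fixed window around the worktape head, a finite window around each position suffices.

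For the ($\Rightarrow$) direction, given an $r$-head $2\NCSA$ $M$ (which, by Lemma \ref{normalform}, we may assume pushes exactly one symbol per write step and records the current top symbol in its state), I would build an $n^r$ space-bounded $\NTM$ $Z$. The simulation is direct once a length bound $|z| \le 2^{O(n^r)}$ on the checking-stack contents in some accepting computation is available: then a stack pointer occupies $O(n^r)$ cells, and $Z$ maintains on its worktape the read-phase configuration of $M$ --- its state, its $r$ input-head positions, and this pointer. To perform a read-phase step $Z$ needs the stack symbol under the stack head, which it cannot store wholesale; instead it recomputes that symbol on demand by re-running the write phase, counting pushes until the push index matches the pointer, reusing the configuration-replay idea of Lemma \ref{lem-red} and guessing $M$'s write-phase choices consistently.

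The main obstacle is this forward direction, and in particular establishing that the checking stack may be taken of length at most $2^{O(n^r)}$ in an accepting computation; this is what confines the pointer --- and with it $Z$'s entire workspace --- to $O(n^r)$ cells, forcing the exact $r \leftrightarrow n^r$ correspondence. A naive shortest-computation argument does not suffice, since distinct stack lengths automatically yield distinct full configurations, so non-repetition alone places no ceiling on the length; the bound must instead be extracted from the interaction between which strings the write-phase transducer can produce and the two-way $(r{+}1)$-head read-phase predicate on the pair (input, stack), by pumping a long accepting computation down to a short one. Getting this length bound right, together with the consistent on-demand recovery of stack symbols within the space budget, is the delicate core that makes the equivalence of \cite{Ibarra-JCSS71} a genuine theorem rather than a routine simulation.
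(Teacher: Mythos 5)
The paper does not actually prove this proposition --- it is imported verbatim from \cite{Ibarra-JCSS71} --- so your proposal can only be judged against the known argument. Your ($\Leftarrow$) direction is essentially right and matches the technique the paper itself uses in the proof of Proposition \ref{nlogn}: guess a padded computation history $ID_1\#\cdots\#ID_s$ on the unbounded stack and verify block lengths and successor consistency by counting to $(n+2)^r$ with the $r$ heads used as a radix-$(n+2)$ counter. (One small mechanical point: you cannot simultaneously ``hold $p$'' in the heads and decrement them to locate position $p$; you must restore the counter by re-incrementing on the walk back, or encode alternate IDs reversed --- routine either way.)

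The ($\Rightarrow$) direction, however, has a genuine gap, and it is exactly where you lean on Lemma \ref{lem-red}: the configuration-replay recovery of stack symbols is sound only because the machine there is \emph{deterministic}, so re-running the write phase reproduces a unique history. For a nondeterministic $M$, each on-demand replay is a fresh sequence of guesses; an $\NTM$ has no mechanism to ``guess consistently'' across separate replays, and it cannot store the choice sequence, which is as long as the write phase itself (up to exponentially many steps). Inconsistent replays simulate a stack whose cell contents vary between visits, which overaccepts --- e.g., a read phase that checks the stack twice against two different patterns could pass both. Your acknowledged second obstacle, the length bound, compounds this: the pumping argument you gesture at (distinct crossing behaviors of the read phase, viewed as a two-way automaton with $N = O(n^r)$ states on the stack string) naturally bounds \emph{ordered} crossing sequences, giving only $2^{O(n^r \log n)}$, so your pointer costs $\Theta(n^r \log n)$ bits and overflows the space budget by a $\log$ factor. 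The known way to land exactly on $n^r$ space abandons the pointer-and-replay architecture altogether: the $\NTM$ guesses the stack string $z$ in a single left-to-right sweep, never storing it, and maintains at the current boundary (i) the current write-phase configuration, $O(\log n)$ bits, and (ii) the \emph{unordered set} of crossing elements $(q, i_1,\ldots,i_r, \mathrm{direction})$ of the read-phase head at that boundary --- a bit vector of size $O(n^r)$ --- checking local matching consistency cell by cell. Dropping the order is justified by a path-plus-disjoint-cycles excision argument: a degree-consistent matching decomposes into one initial-to-accepting path plus cycles, and since the stack is immutable the cycles can be discarded without disturbing consistency. That unordered-set idea is the missing core; without it neither your length bound nor your space budget goes through.
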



\begin{proposition} \label{nlogn} Let $S(n)$ be a non-zero function.
A language $L$ is accepted by a $2$-stack $S(n)$-limited $2\NCSA$
if and only if $L$ can be accepted by an $nS(n)$ space-bounded $\NTM$.
\end{proposition}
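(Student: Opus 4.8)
The plan is to prove both directions by reducing to the known characterization in Proposition~\ref{JCSS71}, which states that an $r$-head $2\NCSA$ accepts exactly the $n^r$ space-bounded $\NTM$ languages. The key observation is that a $2$-stack $S(n)$-limited $2\NCSA$ has one unbounded checking stack (stack $1$) together with a second stack whose contents never exceed $O(S(n))$ on accepted inputs. Since a checking stack of bounded height $O(S(n))$ can store only $O(S(n))$ symbols, it can be simulated by an $O(S(n))$ space-bounded worktape, and conversely. So morally the device is ``one genuine checking stack plus $O(S(n))$ bits of rewritable workspace,'' and the target bound $n S(n)$ should arise as the product of the $n$-space contributed by the single unbounded checking stack and the $S(n)$-space from the bounded stack.

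For the forward direction, I would take a $2$-stack $S(n)$-limited $2\NCSA$ $M$ and build an $nS(n)$ space-bounded $\NTM$ $Z$ simulating it. The standard way to simulate a single checking stack on a space-bounded $\NTM$ (as underlies Proposition~\ref{JCSS71}) is that $Z$ guesses and writes the stack contents, but the stack can be of length polynomial (up to $cn$) in the input since each write phase is length-bounded by the number of distinct configurations; for a $1$-head machine this is $O(n)$ space. Here the unbounded stack $1$ contributes an $O(n)$-length string, but crucially the configuration of the whole machine also includes the contents of the bounded stack $2$, which carries $O(S(n))$ symbols. The trick is that while replaying stack $1$'s write phase, $Z$ must simultaneously track the $O(S(n))$ symbols on stack $2$; thus each of the $O(n)$ positions of stack $1$ is effectively tagged with the $O(S(n))$-bit snapshot of stack $2$, giving the product $O(nS(n))$ space. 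I would formalize this by having $Z$ store stack $1$ explicitly ($O(n)$ cells), store stack $2$ explicitly ($O(S(n))$ cells), and use the input head positions plus finite-state information to drive the simulation, with the whole tape fitting in $O(nS(n))$.

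For the converse, given an $nS(n)$ space-bounded $\NTM$ $Z$, I would build a $2$-stack $S(n)$-limited $2\NCSA$ $M'$. Here the idea is to let stack $1$ (the unbounded one) encode the ``coarse'' or $O(n)$-indexed part of $Z$'s worktape, and stack $2$ (the $O(S(n))$-bounded one) encode the remaining factor of $O(S(n))$, so that together they represent an $O(nS(n))$-cell worktape. The natural construction is to view $Z$'s tape of length $nS(n)$ as an $n \times S(n)$ array: $M'$ writes onto stack $1$ a block-structured encoding of length $O(n)$ whose block granularity is indexed against the input of length $n$ (which it can do because writing is allowed to scale with $n$), while stack $2$ of bounded height $O(S(n))$ holds the current column being worked on. The reading phases, combined with the two-way input head and the down/up read instructions, let $M'$ navigate the array and verify $Z$'s transitions. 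The main obstacle --- and the step I would spend the most care on --- is exactly this bookkeeping of the product structure $n \times S(n)$ across the two stacks: ensuring that $M'$ can reposition among the $O(n)$ indices of stack $1$ (using two-way read sweeps over the unbounded stack, as in Example~\ref{division}) while simultaneously maintaining and updating the $O(S(n))$-symbol column on stack $2$ without violating the checking-stack write-once discipline. I would handle the write-once constraint by having $M'$ guess the final contents of both stacks up front (nondeterminism is available) and then verify, during a deterministic-style read phase, that the guessed contents are consistent with a valid accepting computation of $Z$, analogous to the guess-and-check technique used for read-deterministic machines earlier in the paper.
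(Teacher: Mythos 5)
Your proposal has genuine gaps in both directions, and they share a common root: a misallocation of where the space goes between the two stacks. In your forward direction (machine to Turing machine), the simulation rests on the claim that the unbounded stack can be assumed to have length $O(n)$ ``since each write phase is length-bounded by the number of distinct configurations.'' That bound is valid only for \emph{deterministic} machines; a $2\NCSA$ can nondeterministically push fresh guessed symbols forever without repeating a configuration forcing a loop, and accepting computations genuinely need stacks far longer than $nS(n)$ --- indeed, the other direction of this very proposition writes a sequence $ID_1 \# \cdots \# ID_s$ of configurations of the simulated $\NTM$ on the unbounded stack, whose total length can be exponential in $nS(n)$ since $s$ can be as large as the number of configurations. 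So your $Z$ cannot ``store stack $1$ explicitly'' in $O(n)$ cells, and the product analysis collapses. The paper never stores the unbounded stack at all: it encodes the \emph{bounded} stack into the input, replacing $a_1 \cdots a_n$ by $a_1 \# u_1 \# \cdots \# a_n \# u_n$ with all $u_i$ equal to the bounded stack's content (so a single input-head position simultaneously encodes the input position and the stack-$2$ head position, with $\#$-segments on the one remaining stack used to transport the offset between blocks), obtaining a $1$-stack $2\NCSA$ over inputs of length $\Theta(nS(n))$. Then Proposition \ref{JCSS71} with $r=1$ --- whose proof is exactly where arbitrarily long checking stacks are tamed --- gives a linear-space $\NTM$ on the padded input, and a wrapper $\NTM$ that guesses the padding runs in space at most $2n + nS(n)$.

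In your converse direction, the $n \times S(n)$ array scheme caps stack $1$ at $O(n)$ symbols and asks stack $2$ to hold a ``current column,'' but a checking stack is write-once: once the read phase begins nothing can be updated, so no evolving column can be maintained. Your fallback --- guess both stacks up front and verify --- supplies only $O(nS(n))$ read-only symbols in total, enough to encode \emph{one} tape configuration of $Z$ but not a computation history, and during the read phase the machine's only mutable memory is its finitely many head positions, which cannot track $Z$'s worktape across possibly exponentially many steps. The definition of $S(n)$-limited places \emph{no} bound on stack $1$, and the correct construction exploits this: write the entire guessed ID-sequence on the unbounded stack (each ID padded to a common length $nt$), write a unary ruler $1^t$ with nondeterministically guessed $t \le S(n)$ on stack $2$, and use the input head (counting to $n$) together with the stack-$2$ head (counting to $t$) as a two-digit counter to $nt$, which suffices to compare corresponding positions of consecutive IDs. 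Your guess-and-verify instinct is right, but the missing idea in both directions is the same: the bounded stack serves as a counter/ruler of length $S(n)$ to be multiplied against the length-$n$ input, while the unbounded stack (or, in the other direction, the padded input handed to Proposition \ref{JCSS71}) carries the actual computation.
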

\begin{proof}
Let $Z$ be an $nS(n)$ space-bounded $\NTM$. We can assume without loss of
generality that $Z$ is an $\NTM$ with a single read/write worktape that initially contains the input of size $n$, and can only grow to the right, since $Z$ allows at least linear space.  Construct
a 2-stack $S(n)$-limited $2\NCSA$ $M$ with stacks $S_1$ (the unbounded stack) 
and $S_2$ (the $S(n)$-bounded stack) which,
when given an input $w$ of length $n$, operates as follows:
\begin{enumerate}
\item
$M$ writes $ID_1 \# \cdots \#ID_s$ on stack $S_1$ for some
nondeterministically
chosen configurations $ID_1, \ldots , ID_s$
of the worktape of $Z$ (a configuration is a string
$uqv$, where $uv$ is a string over the worktape alphabet of $Z$
and $q$ is a state and also the position of the read/write head), and $ID_1$ is
an initial ID starting with the input encoded, and $ID_s$
is an accepting $ID$. It is clear that we can assume each ID is of the same length $m \leq n S(n)$
by padding with blanks and later $M$ will verify they are of the same length.
\item
$M$ writes a unary string $1^t$ on $S_2$, where $t$ is nondeterministically
guessed. 
\item
$M$ checks that each $ID$ on $S_1$ is of length  $nt$.
This can be 
done using the input of length $n$ and stack $S_2$ of length $t$
to count up to $nt$. 
\item
Finally, $M$ checks that $ID_{i+1}$ is a valid successor of $ID_i$
of $Z$ for $1 \le i \le s-1$, again using the input
and stack $S_2$ to repeatedly count to $nt$ thereby enabling the ability to check
that positions of ID$_i$ match ID$_{i+1}$ except close to the read/write head where it verifies that it is changing via a transition of $Z$.
\end{enumerate}
Note that if $w$ (of length $n$) is accepted by $Z$, then it
has an accepting computation that uses at most $nS(n)$ space.
It follows that in step (2) above, there is an $t \le S(n)$ which
would allow $M$ to simulate $Z$ successfully.  Hence, $L(M) = L(Z)$.

To prove the converse, we will use Proposition \ref{JCSS71} (with $r=1$) and translation.
Let $M$ be a 2-stack $S(n)$-limited $2\NCSA$ $M$
with input alphabet $\Sigma$ and stack alphabets $\Gamma_1$ and $\Gamma_2$.
Assume that these alphabets are disjoint, and 
$S_1$ is the unbounded stack, and $S_2$ is the bounded stack.
Let $\#$ be a new symbol.

We first construct a $2\NCSA$ $M'$ (this has only
stack $S_1$), which operates as follows when given input $w$:
\begin{enumerate} 
\item
$M'$ checks that the input is of the form
$w = a_1\#u_1\#a_2\#u_2 \cdots a_n\#u_n$, where
$a_i \in \Sigma$, $u_i \in \Gamma_2^*$.
\item
$M'$ writes a string $z = u_1 \#^{r_0} z_1 \#^{r_1}z_2\#^{r_2} \cdots z_k \#^{r_k}$
on $S_1$, where $z_i \in \Gamma_1, r_i$, and $k$ are nondeterministically chosen.
\item
$M'$ checks that $u_1 = \cdots = u_n$ and $r_1 = \cdots = r_k = |u_1|$.
\item
$M'$ simulates $M$ on input $a_1 \cdots a_n$
and checks that $u_1$ was the string written on the bounded stack
$S_2$ of $M$ and $z_1 \cdots z_k$ was the string written on the 
unbounded stack $S_1$ of $M$.  This is done as follows:

Suppose $M$ is on symbol $a_i$ on the input and has so far written
a string of length $p$ on stack $S_2$ ($p \ge 0)$
and written $z_1 \cdots z_j$ on stack $S_1$ ($j \ge 0$).
 This is represented in $M'$ with its
input head in position $p$ of $u_i$ which is directly
to the right of $a_i$ (if $p = 0$, the head is on $\#$)
and the stack head of $S_1$ on $z_j$ (if $j = 0$, the head is on $\#$). 

We now describe the simulation of a move of $M$.
If, e.g.,  $M$ moves its input head to the left and writes a $(p+1)^{\mbox{st}}$ 
symbol in $S_2$ and writes $z_{j+1}$ in $S_1$, then $M'$ performs
the following sequence of moves:
\begin{enumerate}
\item
$M'$ uses the segment $\#^{r_j}$ in $S_1$ (which is
at the right of $z_j$) to temporarily
remember the number $p$ by moving the input head to the left. 
\item
$M'$ moves its input head to $a_{i-1}$ and
uses the number $p$ remembered in $S_1$ to move the input
head to the $p+1^{\mbox{st}}$ symbol of $u_{i-1}$ and checks
that the symbol in that position is correct. 
\item
$M'$ moves its stack head 
right and checks that $z_{j+1}$ is the correct symbol.
\end{enumerate}

The other moves of $M$ are treated similarly.
\end{enumerate}
$M'$ accepts if and only $M$ accepts.  
Since $M$ has only one stack, $L(M')$
can be accepted by an $|w|$ space-bounded $\NTM$ $Z'$
by Proposition \ref{NCSANTM}.

Finally, we construct from $Z'$ another $\NTM$ $Z$ which,
on input $x = a_1 \cdots a_n$,
writes on a read/write tape $w =  a_1\# u_1\# a_2\# u_2 \cdots  a_n \#u_n$,
where $u_1, \ldots, u_n$ are nondeterministically
guessed strings in $\Gamma_2^*$.  $Z$ then simulates $Z'$ on $w$.
It follows that $Z$ accepts $L(M)$. Also notice that for every
input $x= a_1 \cdots a_n$ to $M$, there is some accepting computation on $x$ where the
second stack is bounded by $S(n)$. Hence, on input $a_1 \cdots a_n \in L(M)$, $Z$ can accept with a computation
that writes a word $w$ on the tape that is of length at most $2n + |u_1|n$, where 
$|u_1| = \cdots = |u_n| \leq S(n)$, and so $|w| \leq 2n + nS(n)$. Hence, $Z$ is $nS(n)$ space-bounded.
\end{proof}

Generalizing Proposition \ref{nlogn}, we have:
\begin{proposition} \label{nlogn1}
Let $k \ge 1$ and let $S(n)$ be a non-zero function. A language $L$ is accepted by a
$(k+1)$-stack $S(n)$-limited $2\NCSA$ if
and only if $L$ can be accepted by an $nS^k(n)$ space-bounded
$\NTM$.
\end{proposition}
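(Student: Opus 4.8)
The plan is to prove Proposition \ref{nlogn1} by induction on $k$, generalizing the two-part simulation strategy already used in Proposition \ref{nlogn} (which is exactly the base case $k=1$, giving $nS(n) = nS^1(n)$ space). The forward direction (simulating the $\NTM$ by the checking stack automaton) and the converse (simulating the automaton by the $\NTM$) will be handled separately, and in both directions the extra stacks beyond the unbounded one will be used as nested counters that multiply space bounds multiplicatively, which is the source of the $S^k(n)$ factor.

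For the direction showing that an $nS^k(n)$ space-bounded $\NTM$ $Z$ can be simulated by a $(k+1)$-stack $S(n)$-limited $2\NCSA$ $M$, I would generalize step (3) of Proposition \ref{nlogn}. As before, $M$ writes a guessed sequence of configurations $ID_1 \# \cdots \# ID_s$ of $Z$ on the unbounded stack $S_1$, where each configuration has common length $m \le n S^k(n)$. The key difference is the counting mechanism used to verify that each $ID$ has the correct length and that consecutive IDs are valid successors: instead of using one bounded stack $S_2$ holding $1^t$ to count up to $nt$, I would use the $k$ bounded stacks $S_2, \ldots, S_{k+1}$, each holding a nondeterministically guessed unary string of length at most $S(n)$, together with the input of length $n$, to count up to $n \cdot t_2 \cdots t_{k+1}$. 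Treating the input head and the $k$ stack heads as an odometer (a mixed-radix counter) lets $M$ count up to a product of size $n \cdot S^k(n)$, which is exactly the space bound of $Z$; since $Z$ on an accepted input of length $n$ uses at most $n S^k(n)$ space, there exist choices $t_i \le S(n)$ making the simulation succeed, so each of the $k$ bounded stacks stays $S(n)$-limited.

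For the converse, I would use induction on $k$, peeling off one bounded stack at a time by reducing a $(k+1)$-stack $S(n)$-limited $2\NCSA$ to a $k$-stack $S(n)$-limited $2\NCSA$ at the cost of multiplying the $\NTM$ space bound by a factor of $S(n)$. Concretely, given $M$ with unbounded stack $S_1$ and bounded stacks $S_2, \ldots, S_{k+1}$, I would encode the contents of one bounded stack, say $S_{k+1}$, into the input alphabet exactly as in the converse of Proposition \ref{nlogn}: the new automaton $M'$ reads an augmented input where each input symbol $a_i$ is followed by a block $\#u_i$ with $u_i \in \Gamma_{k+1}^*$, verifies all the $u_i$ are equal to the (single) string $M$ writes on $S_{k+1}$, and simulates the moves on $S_{k+1}$ by navigating within the repeated blocks of the padded input, leaving $M'$ with only $k$ stacks ($S_1$ unbounded and $S_2, \ldots, S_k$ bounded). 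By the induction hypothesis $L(M')$ is accepted by an $nS^{k-1}(n)$ space-bounded $\NTM$; then a final $\NTM$ $Z$ guesses the padding strings $u_1 = \cdots = u_n$ of length at most $S(n)$ on its tape and simulates this machine, so its input of length $n$ expands to length at most $2n + nS(n)$, and its total space becomes at most $(2n + nS(n)) \cdot S^{k-1}(n) = O(nS^k(n))$, completing the induction.

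The main obstacle I expect is managing the mixed-radix counting in the forward direction so that the heads and bounded stacks genuinely realize a counter of capacity $n \cdot S^k(n)$ while each bounded stack individually stays within $O(S(n))$; the carry propagation across the input head and $k$ stack heads (each of which can only move in a two-way read-only fashion on its fixed guessed unary string) requires care, though it is a direct generalization of the single-stack odometer already justified in Proposition \ref{nlogn}. A secondary technical point is ensuring that the inductive padding in the converse keeps the \emph{same} function $S(n)$ bounding each peeled-off stack while accumulating the space factor cleanly as $S^{k-1}(n) \to S^k(n)$; this hinges on the observation, already used in Proposition \ref{nlogn}, that the padded word has length linear in $n \cdot S(n)$, so each reduction contributes exactly one factor of $S(n)$ to the $\NTM$ space bound.
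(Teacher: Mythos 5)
Your proposal matches the paper's proof essentially step for step: in the forward direction the paper likewise writes the guessed ID sequence on the unbounded stack and uses the $k$ bounded stacks (each holding a guessed unary string of length at most $S(n)$) together with the input as a counter of capacity $nS^k(n)$, and the converse is likewise an induction on $k$ that peels off one bounded stack per step by encoding it into a padded input exactly as in the converse of Proposition \ref{nlogn}, then applies the induction hypothesis and a final guessing $\NTM$. The only cosmetic differences --- you allow distinct guessed lengths $t_2,\ldots,t_{k+1}$ where the paper uses a single common $t$, and your induction is indexed from $k$ to $k+1$ rather than $k+1$ to $k+2$ --- do not change the argument.
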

\begin{proof}
We need only to prove the result when $k \ge 2$.
As in the first part of the proof of
Proposition \ref{nlogn},
an $nS^k(n)$ space-bounded $\NTM$ $Z$ can be simulated 
by a $(k+1)$-stack $S(n)$-limited $2\NCSA$ $M$ with
stacks $S_1, S_2, \ldots, S_{k+1}$, where each of
$S_2, \ldots, S_k$ will now have unary strings of length $t \le S(n)$,
where $t$ is nondeterministically guessed.
These stacks, together with the input, can count up
to $nt^k$ and could check whether a sequence of $ID$'s
written on $S_1$ (each $ID$ is of length $nt^k$)
is a valid accepting computation. 

For the converse, we use induction.  Assume that
for $k \ge 1$, a $(k+1)$-stack $S(n)$-limited $2\NCSA$
can be simulated by an $nS^k(n)$ space-bounded $\NTM$.
Let $M$ be a $(k+2)$-stack $S(n)$-limited $2\NCSA$
with stacks $S_1, S_2, \ldots, S_{k+2}$, where $S_1$
is unbounded and the other stacks are bounded.  
We construct a $(k+1)$-stack $S(n)$-limited
$2\NCSA$ $M'$ which ``encodes'' $S_2$ in the input
as in the proof of Proposition \ref{nlogn}. Then, by
induction hypothesis, we construct from
$M'$ an $nS^k(n)$ space-bounded
$\NTM$ $Z'$ such that $L(Z') =L(M')$. 
Finally, we construct from $Z'$ an $nS^{k+1}(n)$
space-bounded $\NTM$ $Z$ accepting $L(M)$, as in 
Proposition \ref{nlogn}.
\end{proof}

We can use space hierarchy results for $\NTM$s to show hierarchies
of multi-stack space-limited $2\NCSA$s.
For example, since more languages can be accepted with $n\log^{k+1} n$
space-bounded $\NTM$s than $n\log^k n$ space-bounded
$\NTM$s \cite{Seiferas1977}, we have:
\begin{corollary}
Let $k \geq 1$. There are languages accepted by $(k+1)$-stack $\log n$-limited
$2\NCSA$s that cannot be accepted by $k$-stack $\log n$-limited
$2\NCSA$s.
\end{corollary}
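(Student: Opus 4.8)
The plan is to combine the characterization in Proposition~\ref{nlogn1} with a known strict space hierarchy for nondeterministic Turing machines. By Proposition~\ref{nlogn1}, a language is accepted by a $(k+1)$-stack $\log n$-limited $2\NCSA$ if and only if it is accepted by an $n \log^k n$ space-bounded $\NTM$, and a language is accepted by a $k$-stack $\log n$-limited $2\NCSA$ if and only if it is accepted by an $n \log^{k-1} n$ space-bounded $\NTM$ (applying the proposition with $k$ replaced by $k-1$, so that the number of stacks is $k$). Thus the separation between $(k+1)$-stack and $k$-stack $\log n$-limited $2\NCSA$s reduces exactly to the separation between $n \log^k n$ and $n \log^{k-1} n$ space-bounded $\NTM$ languages.

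First I would set $S(n) = \log n$ and invoke Proposition~\ref{nlogn1} twice to get the two $\NTM$ characterizations described above. Next, I would appeal to the nondeterministic space hierarchy theorem of Seiferas~\cite{Seiferas1977}, which establishes that $n \log^{k} n$ space-bounded $\NTM$s accept strictly more languages than $n \log^{k-1} n$ space-bounded $\NTM$s (this is precisely the form of the hierarchy result cited in the text preceding the corollary). Combining these, there is a language $L$ accepted by some $n \log^k n$ space-bounded $\NTM$ but by no $n \log^{k-1} n$ space-bounded $\NTM$; translating both sides through Proposition~\ref{nlogn1}, this $L$ is accepted by a $(k+1)$-stack $\log n$-limited $2\NCSA$ but by no $k$-stack $\log n$-limited $2\NCSA$.

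The only point requiring care is the exact index bookkeeping, namely matching the ``$+1$'' in the number of stacks to the exponent of $\log n$ in the space bound: the $(k+1)$-stack machine corresponds to exponent $k$, and the $k$-stack machine corresponds to exponent $k-1$, so the hierarchy gap needed is between exponents $k-1$ and $k$, consistent with the $n\log^{k+1}n$ versus $n\log^k n$ phrasing in the remark preceding the corollary after reindexing. I would also confirm that the cited hierarchy of Seiferas applies at the relevant constructible space functions $n\log^j n$, which it does since these are fully space-constructible and at least linear, matching the hypotheses used in Proposition~\ref{nlogn1}. I do not anticipate a genuine obstacle here; the corollary is essentially an immediate consequence of the characterization plus the imported hierarchy theorem, so the main task is simply stating the reduction cleanly.
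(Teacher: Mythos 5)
Your proposal is correct and takes essentially the same route as the paper, which obtains this corollary directly from Proposition~\ref{nlogn1} with $S(n)=\log n$ combined with Seiferas's nondeterministic space hierarchy separating $n\log^{k} n$ from $n\log^{k-1} n$ space-bounded $\NTM$s. The only micro-caveat, shared by the paper's own phrasing, is that for $k=1$ the $k$-stack side lies outside the stated range of Proposition~\ref{nlogn1} (you would apply it with parameter $0$), and one instead uses the characterization of $1$-stack $2\NCSA$s as $n$ space-bounded $\NTM$s (Proposition~\ref{JCSS71}), which is exactly the exponent-$0$ case needed.
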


Similarly, since $n^{k+1}$ space-bounded $\NTM$s are computationally more
powerful than $n^k$ space-bounded $\NTM$s \cite{Ibarra-JACM1972}
for any $k \ge 1$, we have:

\begin{corollary}
Let $k \geq 1$. There are languages accepted by $(k+1)$-stack $n$-limited
$2\NCSA$s that cannot be accepted by $k$-stack $n$-limited
$2\NCSA$s.
\end{corollary}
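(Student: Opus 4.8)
The plan is to specialize Proposition \ref{nlogn1} to the linear function $S(n)=n$ and then invoke the known strictness of the nondeterministic space hierarchy at polynomial levels, exactly in parallel with the preceding corollary (which used $S(n)=\log n$ together with \cite{Seiferas1977}). First I would substitute $S(n)=n$ into Proposition \ref{nlogn1}: a language is accepted by a $(k+1)$-stack $n$-limited $2\NCSA$ if and only if it is accepted by an $n\,S^k(n)=n\cdot n^k=n^{k+1}$ space-bounded $\NTM$. Re-indexing the same proposition (replacing $k$ by $k-1$) shows that the $k$-stack $n$-limited $2\NCSA$ languages coincide exactly with the $n\cdot n^{k-1}=n^k$ space-bounded $\NTM$ languages.

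It then remains only to separate $n^{k+1}$ space from $n^k$ space for nondeterministic machines, and this is precisely the cited hierarchy theorem \cite{Ibarra-JACM1972}, which asserts that $n^{k+1}$ space-bounded $\NTM$s are computationally more powerful than $n^k$ space-bounded $\NTM$s for every $k\ge 1$. Composing this strict separation with the two characterizations above yields a language accepted by some $(k+1)$-stack $n$-limited $2\NCSA$ that is accepted by no $k$-stack $n$-limited $2\NCSA$, which is exactly the claimed conclusion.

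Since the heavy lifting is already done in Proposition \ref{nlogn1}, I anticipate no genuine obstacle; the one point that requires a moment of care is the bookkeeping of the exponent, namely verifying that the $k$-stack case truly corresponds to $n^k$ and the $(k+1)$-stack case to $n^{k+1}$ rather than being off by one. For the boundary value $k=1$ I would note that a $1$-stack $n$-limited machine is just a $1$-stack $2\NCSA$ (its single stack being the unbounded one), which by Proposition \ref{NCSANTM} with $r=1$ is characterized by $n=n^1$ space-bounded $\NTM$s, so the formula ``$k$-stack $n$-limited $=n^k$ space'' holds uniformly for all $k\ge 1$. With that indexing confirmed, the separation is immediate from the $\NTM$ space hierarchy of \cite{Ibarra-JACM1972}.
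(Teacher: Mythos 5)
Your proof is correct and follows essentially the same route as the paper: specialize Proposition \ref{nlogn1} to $S(n)=n$, so that $(k+1)$-stack and $k$-stack $n$-limited $2\NCSA$s are characterized by $n^{k+1}$ and $n^{k}$ space-bounded $\NTM$s respectively, and then invoke the separation of \cite{Ibarra-JACM1972}. Your explicit handling of the $k=1$ boundary case via Proposition \ref{NCSANTM} (noting that a $1$-stack $n$-limited machine is just a $2\NCSA$, hence $n$ space) is a detail the paper leaves implicit, and you resolve it correctly.
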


From Propositions \ref{nlogn} and \ref{nlogn1}, we have:
\begin{corollary}
\label{cor2}
The following are equivalent for a language $L$:
\begin{itemize}
\item
$L$ is accepted by a multi-stack polynomial-space-limited $2\NCSA$.
\item
$L$ is accepted by a multi-stack $n$-space-limited $2\NCSA$.
\item
$L$ is accepted by a 2-stack polynomial-space-limited $2\NCSA$.
\item
$L$ is in $\PSPACE$.
\end{itemize}
\end{corollary}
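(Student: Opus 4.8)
The plan is to show that all four items describe exactly the class $\PSPACE$, by reducing everything to the characterizations in Propositions \ref{nlogn} and \ref{nlogn1} together with the fact (from \cite{HU}) that nondeterministic and deterministic polynomial space coincide. Since a polynomial composed with a polynomial is again a polynomial, and $n$ times a polynomial is again a polynomial, the argument is essentially bookkeeping of space bounds, which I would organize as a short chain of inclusions linking each item to $\PSPACE$.

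First I would pin down the third and fourth items. If $L$ is accepted by a $2$-stack $S(n)$-limited $2\NCSA$ with $S(n)$ a polynomial, then by Proposition \ref{nlogn} it is accepted by an $nS(n)$ space-bounded $\NTM$; as $nS(n)$ is a polynomial, $L$ is accepted by a polynomial-space-bounded $\NTM$, hence $L \in \PSPACE$. Conversely, if $L \in \PSPACE$ it is accepted by some $n^d$ space-bounded machine; taking the polynomial $S(n) = n^d$, an $nS(n) = n^{d+1}$ space-bounded $\NTM$ accepts $L$, so the other direction of Proposition \ref{nlogn} gives a $2$-stack polynomial-space-limited $2\NCSA$ for $L$. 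Thus the third item coincides with $\PSPACE$.

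Next I would sandwich the first item. Every $2$-stack polynomial-space-limited machine is in particular a multi-stack polynomial-space-limited machine, so the third class is contained in the first. For the reverse, if $L$ is accepted by a $(k+1)$-stack $S(n)$-limited $2\NCSA$ with $S(n)$ a polynomial, then Proposition \ref{nlogn1} yields an $nS^k(n)$ space-bounded $\NTM$ for $L$; since $S^k(n)$ is a polynomial, so is $nS^k(n)$, whence $L \in \PSPACE$. The first class therefore lies between the third class and $\PSPACE$, and since those already coincide, the first class equals $\PSPACE$ as well. For the second item I would note that $n$ is a polynomial, so every multi-stack $n$-space-limited machine is a multi-stack polynomial-space-limited machine, giving one containment; for the reverse I would take $L \in \PSPACE$ accepted in $n^d$ space and apply Proposition \ref{nlogn1} with $S(n) = n$, so that a $(k+1)$-stack $n$-limited $2\NCSA$ captures exactly the $n \cdot n^k = n^{k+1}$ space-bounded $\NTM$ languages; choosing $k$ large enough that $n^{k+1} \ge n^d$ produces such a machine for $L$.

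The only step that is more than a one-line reduction — the closest thing to an obstacle — is this last reverse inclusion: with the per-stack bound fixed at $S(n) = n$, one cannot raise the individual stack space, so the required polynomial degree must instead be absorbed into the \emph{number} $k$ of bounded stacks, via the exponent in $nS^k(n) = n^{k+1}$. Once that observation is in place, the chain $\text{(3)} \subseteq \text{(1)} \subseteq \PSPACE = \text{(3)}$ and $\text{(2)} \subseteq \text{(1)} = \PSPACE \subseteq \text{(2)}$ closes, and all four items equal $\PSPACE$.
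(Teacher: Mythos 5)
Your proof is correct and takes essentially the same route as the paper, which derives Corollary \ref{cor2} directly from Propositions \ref{nlogn} and \ref{nlogn1} (together with the equivalence of deterministic and nondeterministic polynomial space); your bookkeeping of the bounds, including the key observation that for the $n$-limited case the polynomial degree must be absorbed into the number of stacks via $nS^k(n) = n^{k+1}$, is exactly the intended argument left implicit in the paper.
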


In the future, we plan to investigate whether the power of $S(n)$-limited machines changes if all branches respect the
space bound rather than some accepting branch of each word.

\subsection{Characterizations of $S(n)$-Bounded Multi-Stack $2\NCSA$s}
\label{sec:alllinear}

In this subsection, we will consider $S(n)$-bounded $2\NCSA$s. For the important special
case of $n$-bounded machines, we also refer to these as linear-bounded.
\begin{proposition} 
\label{linear1}
Let $k \geq 1$. For every $(k+1)$-head $2\NFA$ $M$,
there is a $k$-stack linear-bounded $2\NCSA$ $M'$ such that $L(M') = L(M)$.
\end{proposition}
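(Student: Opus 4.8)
The plan is to have $M'$ copy its input onto each of its $k$ stacks during the write phase, and then use its single two-way input head together with the $k$ stack read heads to simulate the $k+1$ heads of $M$ during the read phase. This is the easy (``copy'') direction already used for the synchronous case in Proposition \ref{Prop32} and for the write-deterministic case in Proposition \ref{prop-write} part 1; the only new point to observe is that the resulting machine is linear-bounded.

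First I would describe the write phase. Given input $w$ of length $n$, $M'$ sweeps its input head once from $\rhd$ to $\lhd$, and on each step applies $\push(a)$ to every one of the $k$ stacks, where $a$ is the currently scanned input symbol. This is legitimate because a single transition of a $k$-stack $2\NCSA$ specifies one stack instruction per stack, so all $k$ copies are built in lockstep. After this pass, each stack $S_j$ contains $Z_b\, w\, Z_t$, and $M'$ then moves every stack head down to the bottom and returns its input head to $\rhd$; all stacks have now entered their read phase, which is permitted since no further pushing is required.

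Next I would give the read-phase simulation. Label the heads of $M$ as $h_1,\ldots,h_{k+1}$. In $M'$, the input head tracks $h_{k+1}$, and the read head of stack $S_j$ tracks $h_j$ for $1 \le j \le k$, so that the cell scanned on $S_j$ holds exactly the symbol scanned by $h_j$ (with $Z_b,Z_t$ playing the roles of $\rhd,\lhd$). A transition of $M$ reads the $k+1$ scanned symbols, possibly changes state, and moves each head by $-1,0,+1$; $M'$ reproduces this by reading the $k+1$ symbols (one from its input, $k$ from the stack cells) and translating a $-1/0/+1$ move of $h_j$ into $\down$/$\stay$/$\up$ on $S_j$, and a move of $h_{k+1}$ into the corresponding move of its own input head. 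Nondeterministic choices of $M$ are matched by those of $M'$, and $M'$ accepts exactly when $M$ enters a final state, giving $L(M') = L(M)$.

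Finally, the space bound: each stack holds only the single copy $w$ of length $n$, so every stack has length $n = O(n)$ and $M'$ is linear-bounded, as required. I expect no substantial obstacle here; the only items needing care are the bookkeeping that matches the stack end-markers $Z_b,Z_t$ to the input end-markers $\rhd,\lhd$ and the verification that a stack read head moving via $\down,\up,\stay$ faithfully mimics a two-way input head, both of which are routine and already implicit in the constructions cited above.
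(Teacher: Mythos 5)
Your proposal is correct and follows exactly the paper's proof, which is the one-line construction: $M'$ copies the input into all $k$ stacks and then simulates $M$ using its input head plus the $k$ stack read heads. Your additional details (lockstep pushing, translating head moves into $\down$/$\stay$/$\up$, and the observation that each stack holds only $w$, giving the linear bound) are exactly the routine bookkeeping the paper leaves implicit.
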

\begin{proof}
$M'$ copies the input $x$ into the $k$ stacks and then simulates $M$
using one head on the input and $k$ heads on the $k$ stacks.
\end{proof}
This result holds for the deterministic case as well.

Unfortunately, the converse of Proposition \ref{linear1}
is unlikely, since multi-head $2\NFA$ languages are
in $\NLOG$ (which is a subset of $\P$)
but linear-bounded $\NCSA$s can accept $\NP$-complete
languages \cite{ShamirBeeri} (see also Proposition \ref{detSAT}).  However, we can prove a characterization 
involving homomorphisms.

A {\em letter-to-letter} homomorphism $h: \Sigma^* \rightarrow \Gamma^*$
is a homomorphism where for all $a \in \Sigma$, $h(a)$ is in $\Gamma$.
A $\lambda$-free homomorphism $h$ is one where $h(a) \neq \lambda$ for all $a \in \Sigma$.

\begin{proposition} \label{linear2}
$~~$
\begin{enumerate}
\item
Let $k \ge 2$. For every $k$-head $2\NFA$ $M$
and letter-to-letter homomorphism $h$,
there is a $k$-stack linear-bounded $\NCSA$ 
 $M'$ such that $L(M') = h(L(M))$.
(Note that if $k = 1$, $M$ accepts a regular set
and regular sets are closed under homomorphism, so
$M'$ does not need a stack.)
\item
Let $k \ge 1$. For every $k$-stack linear-bounded $2\NCSA$ $M$,
there is a $(k+1)$-head $2\NFA$ $M'$ and a letter-to-letter
homomorphism $h$ such that $L(M) = h(L(M'))$.
\end{enumerate}
\end{proposition}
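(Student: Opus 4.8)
The plan is to prove both directions of Proposition~\ref{linear2}, handling the two parts with symmetric constructions that convert between multi-head $2\NFA$s and linear-bounded checking-stack machines while tracking the homomorphism $h$.

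For Part~1, I would start with a $k$-head $2\NFA$ $M$ over an input alphabet $\Sigma$ and a letter-to-letter homomorphism $h \colon \Sigma^* \to \Gamma^*$. The target is a $k$-stack linear-bounded $\NCSA$ $M'$ (one-way input) accepting $h(L(M))$. The idea is that $M'$ reads a one-way input $y \in \Gamma^*$ and, symbol by symbol, nondeterministically guesses a preimage $x \in \Sigma^*$ with $h(x) = y$, checking $|x| = |y|$ since $h$ is letter-to-letter. As it guesses each symbol of $x$, $M'$ copies $x$ into each of the $k$ stacks during the write phase; note the stacks are linear-bounded because $|x| = |y| = n$. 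Then, in the read phase, $M'$ simulates the $k$ two-way heads of $M$ on the guessed input $x$ by using the $k$ stacks as $k$ two-way read-only copies of $x$. Finally $M'$ accepts exactly when $M$ would accept $x$, so $M'$ accepts $y$ iff there exists $x$ with $h(x)=y$ and $x \in L(M)$, i.e.\ $L(M') = h(L(M))$. The case $k=1$ is noted separately: a $1$-head $2\NFA$ accepts a regular set, and regular sets are closed under homomorphism, so no stack is needed.

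For Part~2, I would take a $k$-stack linear-bounded $2\NCSA$ $M$ and build a $(k+1)$-head $2\NFA$ $M'$ together with a letter-to-letter homomorphism $h$ so that $L(M) = h(L(M'))$. The strategy is to have $M'$ work over an enriched alphabet in which each input symbol carries both the original input letter of $M$ and the stack symbols that $M$ writes at the corresponding step, so that the single tape of $M'$ encodes the original input together with the contents of the $k$ stacks. Since $M$ is linear-bounded, each stack has length $O(n)$, so this encoding can be padded to a fixed multiple of $n$ and laid out on a single input of length $O(n)$. Then $M'$ uses one head to track the original input portion (simulating the input head of $M$) and $k$ heads to scan the $k$ encoded stack tracks in a two-way read-only fashion, exactly mimicking the reading phase of $M$; it also verifies on a preliminary pass that the encoded stack tracks are consistent with what $M$ writes during its writing phase. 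The homomorphism $h$ simply projects each enriched symbol onto its original input letter, so $h$ is letter-to-letter, and $M'$ accepts an encoded string iff $M$ accepts its projection, giving $L(M) = h(L(M'))$.

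The main obstacle I expect is in Part~2: arranging the single-tape encoding so that the $k$ stack contents and the original input are all simultaneously accessible to the $(k+1)$ heads, and ensuring the consistency check between the encoded stacks and $M$'s writing phase can be carried out by a two-way finite-state device with only $k+1$ heads and no auxiliary storage. The delicate point is that $M$'s stacks may be written while the two-way input head of $M$ moves arbitrarily, so the natural step-by-step alignment used in the synchronous constructions (as in Proposition~\ref{Prop32}) need not apply; I would instead have $M'$ guess-and-verify the stack contents via an interleaved track encoding and rely on linear-boundedness to keep everything within a single input of length $O(n)$, using the extra freedom of nondeterminism to guess the stack strings rather than reconstruct them head-by-head.
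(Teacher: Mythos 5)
Your proposal is correct and matches the paper's proof essentially step for step: Part 1 is the same guess-the-preimage construction (guess $x$ with $h(x)=y$ symbol-by-symbol while copying it onto all $k$ stacks, then simulate $M$'s $k$ two-way heads on the stack copies), and Part 2 is the paper's $(k+1)$-track encoding, where each of the $n$ input cells of $M'$ carries the original letter plus a block of at most $c$ stack symbols per track (so the input has length exactly $n$, making the projection letter-to-letter), with one head per track. The only phrasing to tighten in Part 2 is the ``preliminary pass'': since the stacks of a $2\NCSA$ have separate reading phases, the check that track $i$ equals what $M$ writes must be interleaved with the simulation of the guessed run (only the syntactic contiguity of the tracks can be verified up front), which is exactly the interleaved guess-and-verify scheme your final paragraph falls back on and the paper uses.
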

\begin{proof}
For the first part, $M'$ when given input $x$,
guesses an input $y$ to $M$ (symbol-by-symbol)
and writes $y$ into the $k$ stacks while checking that 
$x = h(y)$.  Then $M'$ simulates $M$ on $y$ using
the $k$ stacks and  accepts $x$ if $M$ accepts $y$. 
Note that $M'$ is a $k$-stack linear-bounded $\NCSA$
(i.e., one-way).

%
%

For the second part, let $c$ be a positive integer
such that the heights of the $k$-stacks during the writing phase
is at most $cn$, where $n$ is the input length.
Let $\Sigma$ be the input alphabet of $M$.  Assume that
the stacks have a common stack alphabet $\Gamma$.
The input to $M'$ will have
$(k+1)$ tracks:  Track $0$ contains the input string
 $w$ to $M$ and for $1 \le i \le k$,
track $i$ contains the
string $x_i$ written on stack $i$ during the
writing phase of stack $i$ (note that the
stacks have separate reading phases).

Clearly, we can partition each string $x_i$
on track $1 \le i \le k$ as a concatenation
of substrings of length at most $c$.  Thus,
$x_ i = y_{i_1} \cdots y_{i_{m_i}}$, where
we can assume $m_i = n$ as the substrings can be of length zero, and
also we can assume that the first subword $y_{i_j}$
that is smaller than length $c$ must have $y_{i_l} = \lambda$ for all $l>j$.

We can then define a $(k+1)$-track of symbols
of the form $[a, y_1, \ldots, y_k]$ where $a$
is in $\Sigma$ and each $y_i$ is
a string in $\Gamma^*$ of length at most $c$.
Denote the set of all such symbols by $\Gamma$.

Now construct a $(k+1)$-head $2\NFA$ which,
when given a string $z$ in $\Gamma^*$,
first checks that the strings on tracks
$1$ to $k$ are contiguous (if a cell contains a string of length less than $c$, all further cells
only contain $\lambda$'s).
Then $M'$ simulates the $2\NCSA$ $M$ using
one head to simulate the input head of
$M$ and the remaining $k$ heads to simulate
the $k$ checking stacks, making sure that
track $i$ contains exactly the string
written to stack $i$ during its writing 
phase.  $M'$ accepts if $M$ accepts.
Now let $h$ be a homomorphism that maps
every symbol $[a,y_1, \ldots, y_k]$ to $a$.
It is straightforward to verify that
$L(M) = h(L(M'))$.
\end{proof}

A letter-to-letter homomorphism is a special
case of $\lambda$-free homomorphism. Clearly, Proposition \ref{linear2}
still holds if we replace letter-to-letter homomorphism to
$\lambda$-free homomorphism.

Recall the definition of a one-way stack automaton from Section \ref{sec:prelims}, which is a generalization of
a one-way pushdown automaton where the stack head
can enter the stack in a two-way read-only mode.
When the stack head returns to the top of the stack
it can push and pop \cite{GGH}. The deterministic
(nondeterministic) version is denoted by $\DSA$ ($\NSA$).
Clearly, a $\DCSA$ ($\NCSA$) is a special case of a 
$\DSA$ ($\NSA$). In \cite{KingWrathall}, the following
was shown:
\begin{proposition} \label{stackautomata} \cite{KingWrathall} \label{KW}
If a language is accepted by a $\DSA$ $(\NSA)$, then
it is the image under a $\lambda$-free homomorphism of
a language accepted by a 
$\log n$ space-bounded $\DTM$ $(\NTM)$.
\end{proposition}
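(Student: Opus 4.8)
The plan is to prove both statements at once, producing from a given $\DSA$ (resp.\ $\NSA$) $M$ a $\lambda$-free homomorphism $h$ together with a language $L'$ accepted by a $\log n$ space-bounded $\DTM$ (resp.\ $\NTM$) such that $h(L') = L(M)$. I would take $h$ to be letter-to-letter, which is a special case of a $\lambda$-free homomorphism: let $L'$ be a language over a composite alphabet whose symbols are pairs $[a,\tau]$, where $a$ is an input symbol of $M$ and $\tau$ is a bounded-width ``behaviour annotation'', and set $h([a,\tau]) = a$. Then $h$ is $\lambda$-free and every string of $L'$ has exactly the length of its image, so the annotation rides along with the input at constant cost per position. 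Routing through a homomorphism is essential precisely because $L(M)$ itself need not be a $\log$-space language (a $\DCSA$, hence a $\DSA$, can already lie outside $\DLOG$): the annotation carries information that is expensive to compute but cheap to verify, and the existential ``guess the annotation'' implicit in $w\in h(L')$ is what supplies the extra power.

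The annotation attached at each input cell records a finite summary of $M$'s behaviour there: the state and top-of-stack data each time the one-way input head occupies the cell, and, for each maximal read-mode excursion below the top that begins there, the finite \emph{behaviour relation} sending the entry state to the exit state and to the net forward advance of the input head. This relation is well defined and finite because during a read-mode scan the stack is fixed and $M$ acts as a two-way read-only automaton on it (using $\down$, $\up$, $\stay$), so each scan either returns to the top in a determined way or loops, the latter being detectable. The Turing machine $Z$ then reads the annotated string with a two-way read-only head and, in $O(\log n)$ worktape, checks only local conditions: that consecutive summaries are linked by a single transition of $M$ respecting the stack discipline (a $\push$, $\pop$, or $\stay$ at the top, or a $\down$/$\up$ inside an excursion), that each claimed excursion relation is internally consistent with the relations of the cells just below it, and that an accepting configuration is reached. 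For the $\NSA$ case the nondeterministic choices of $M$ are fixed by the annotation, so $Z$ is a $\log n$ space-bounded $\NTM$ and the exponentially many candidate annotations realise the nondeterminism of $L(M)$; for the $\DSA$ case the correct annotation of $w$ is unique, so $Z$ can verify it deterministically and $L'\in\DLOG$. In both cases the existence of a consistent, accepting annotation is equivalent to $M$ accepting $w$, giving $h(L') = L(M)$.

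The main obstacle is verifying the read-mode excursions, whose outcome can depend on arbitrarily deep stack contents, with a device that cannot store the stack and whose annotation has only bounded width per position. I would handle this exactly as in the structure theory of stack automata: the behaviour relation of an excursion entering at a cell is determined by the stack symbol there together with the (finite) behaviour relations of the cells immediately below, so these relations compose locally and their correctness decomposes into a conjunction of conditions each checkable in $O(\log n)$ space, while any individual stack cell that $Z$ needs can be recovered by recomputation from a recorded surface configuration rather than stored. Engineering the behaviour tables so that they are simultaneously of bounded width (keeping $h$ letter-to-letter and hence $\lambda$-free) and rich enough to certify acceptance, and carrying this out deterministically in the $\DSA$ case so that the unique annotation is recognisable in $\DLOG$, is the delicate part; the remaining linkage and acceptance checks are routine finite-control comparisons.
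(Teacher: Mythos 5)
The paper gives no proof of this proposition---it is quoted directly from King and Wrathall---so your attempt has to be judged against the known argument, whose spirit (annotate the input, hide the annotation behind a $\lambda$-free homomorphism, verify locally in logarithmic space, summarize below-top excursions by finite transfer tables) you have correctly identified. The genuine gap is that you never justify that the certificate fits within the length budget a $\lambda$-free homomorphism imposes. If $h$ is $\lambda$-free, any $y$ with $h(y)=w$ satisfies $|y| \le |w|$, and your letter-to-letter choice forces $|y| = |w|$ over a fixed finite alphabet, so the entire annotation must occupy exactly $n$ cells of bounded width. But your per-cell summaries are not bounded: the one-way input head can park on a single cell for arbitrarily many steps, during which the machine can push arbitrarily many symbols and launch arbitrarily many read-mode excursions---indeed a one-way $\NSA$ can perform essentially all of its stack activity while sitting at the first input cell, so no bounded-width annotation aligned positionally with the input head can exist without prior normalization. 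What is missing is precisely the nontrivial normalization lemma at the heart of this result (Greibach's linear bound for checking automata, and the Hopcroft--Ullman-style argument behind the context-sensitivity of one-way stack languages): every accepted word of length $n$ has an accepting computation whose total stack content and top-of-stack activity are $O(n)$, after which the stack word, its transfer tables, and the linkage to the input can be distributed over the $n$ annotation slots (chunked, not head-position by head-position) and checked locally in $O(\log n)$ space. Establishing that bound is delicate exactly because the stack is later re-read from the inside, so excising a stationary writing loop changes the content that subsequent excursions scan; and in the $\DSA$ case the bound must be argued without nondeterministic pruning of the computation.

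A secondary but real flaw: your fallback that any stack cell the verifier needs ``can be recovered by recomputation from a recorded surface configuration'' cannot work in $O(\log n)$ space---recomputing stack contents means simulating the stack automaton, which is exactly what a log-space machine cannot do in general (otherwise $L(M)$ itself would lie in $\DLOG$ or $\NLOG$, contradicting your own opening observation that routing through the homomorphism is essential). The annotation must carry the stack word explicitly, which again presupposes the linear bound. With the normalization lemma supplied and the positional alignment relaxed to a distributed encoding, the remainder of your plan---local composition of transfer tables from the table and symbol immediately below, finite-control linkage checks, and uniqueness of the certificate in the deterministic case yielding verification in $\DLOG$---does go through along King--Wrathall's lines.
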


Since multi-head $2\DFA$s ($2\NFA$s) are equivalent to 
$\log n$ space-bounded $\DTM$s ($\NTM$s) \cite{Ibarra-JCSS71}, the following
results are, in some sense, tighter and more complete characterizations:
\begin{corollary} \label{chars}
$~~$
\begin{enumerate}
\item
A language $L$ is accepted by a multi-stack linear-bounded $2\DCSA$ if and only
if $L$ can be accepted by a $\log n$ space-bounded $\DTM$.
\item
A language $L$ is accepted by a multi-stack linear-bounded $2\NCSA$ if and only if
$L$ is the image under $\lambda$-free homomorphism of a language
accepted by a $\log n$ space-bounded 
$\NTM$.
\item
If language $L$ is accepted by an $\NSA$, then it can be accepted 
by a multi-stack linear-bounded $2\NCSA$ (hence, also by a multi-stack linear-bounded
$\NCSA$, since the former can be simulated by the latter
which first copies the input onto an additional stack and simulates
the former by using the additional stack as two-way input).
\end{enumerate}
\end{corollary}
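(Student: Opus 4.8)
The plan is to assemble the three characterizations from the explicit simulations already established in Propositions~\ref{linear1} and~\ref{linear2}, together with the classical equivalence between multi-head two-way finite automata and logarithmic-space Turing machines \cite{Ibarra-JCSS71} and the stack-automaton result of \cite{KingWrathall} recalled in Proposition~\ref{KW}. No genuinely new construction is needed; the work lies in matching the homomorphism bookkeeping and in observing that the conversions respect the linear-bounded restriction.

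For Part~1 I would treat the two directions separately. For the ``if'' direction, I start from a $\log n$ space-bounded $\DTM$, invoke \cite{Ibarra-JCSS71} to obtain an equivalent $(k+1)$-head $2\DFA$, and then apply the deterministic version of Proposition~\ref{linear1} to obtain a $k$-stack \emph{linear-bounded} $2\DCSA$ accepting the same language; the construction there copies the input onto each stack, so each stack indeed has height $n$. For the ``only if'' direction, I observe that a multi-stack linear-bounded $2\DCSA$ is in particular a multi-stack $2\DCSA$, so by Corollary~\ref{equiv} its language is accepted by a $\log n$ space-bounded $\DTM$; the linear-bounded restriction only shrinks the class, so this inclusion is immediate.

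For Part~2 I argue analogously but carry the $\lambda$-free homomorphism through. For the ``only if'' direction, given a $k$-stack linear-bounded $2\NCSA$ $M$, Proposition~\ref{linear2}(2) supplies a $(k+1)$-head $2\NFA$ $M'$ and a letter-to-letter (hence $\lambda$-free) homomorphism $h$ with $L(M)=h(L(M'))$; since $L(M')$ is accepted by a $\log n$ space-bounded $\NTM$ by \cite{Ibarra-JCSS71}, this exhibits $L(M)$ as the required homomorphic image. For the ``if'' direction, given $L=h(L')$ with $h$ $\lambda$-free and $L'$ a $\log n$ space-bounded $\NTM$ language, I realize $L'$ as a $k$-head $2\NFA$ language via \cite{Ibarra-JCSS71} and feed it, together with $h$, into the $\lambda$-free version of Proposition~\ref{linear2}(1) to obtain a $k$-stack linear-bounded (one-way) $\NCSA$ accepting $h(L')=L$; when $k=1$ the set $L'$ is regular and the note in Proposition~\ref{linear2}(1) covers this degenerate case.

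For Part~3 I chain Proposition~\ref{KW} with Part~2: an $\NSA$ language is, by \cite{KingWrathall}, the $\lambda$-free homomorphic image of a $\log n$ space-bounded $\NTM$ language, which by the ``if'' direction of Part~2 is accepted by a multi-stack linear-bounded $2\NCSA$. The bracketed strengthening to a one-way machine follows from the standard simulation noted in Section~\ref{sec:intro}, where a $2\NCSA$ is converted to a one-way $\NCSA$ with one extra stack holding a copy of the input to serve as the two-way input. I expect the only delicate points throughout to be verifying that the simulations preserve linear-boundedness and that the homomorphism in the hypothesis of Part~2 may be taken arbitrary $\lambda$-free rather than strictly letter-to-letter; both are already granted by the remark following Proposition~\ref{linear2} and by the explicit input-copying in these constructions.
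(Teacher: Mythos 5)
Your proposal is correct and follows essentially the same route as the paper: Part~1 via Corollary~\ref{equiv} (with the linear bound coming from the input-copying construction, which you make explicit through the deterministic version of Proposition~\ref{linear1}), Part~2 via Proposition~\ref{linear2} combined with the multi-head $2\NFA$/$\log n$-space $\NTM$ equivalence of \cite{Ibarra-JCSS71}, and Part~3 by chaining Part~2 with Proposition~\ref{KW}. Your attention to the $\lambda$-free versus letter-to-letter issue and the $k=1$ degenerate case simply spells out details the paper leaves to the remark following Proposition~\ref{linear2}.
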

\begin{proof}
Part 1 follows from Corollary \ref{equiv},
which clearly also holds for multi-stack linear-bounded $2\DCSA$s.
Part 2 follows from Proposition \ref{linear2} and since 
the languages accepted by multi-head $2\NFA$s
coincides with the languages
accepted by $\log n$ space-bounded
$\NTM$s \cite{Ibarra-JCSS71}.
Part 3 follows from Part 2 and Proposition \ref{stackautomata}.
\end{proof}

The next result gives an upper bound on the complexity of
multi-stack linear-bounded $2\NCSA$ languages.
\begin{proposition} \label{NP}
Every language accepted by a multi-stack linear-bounded $2\NCSA$
is in $\NP$.
\end{proposition}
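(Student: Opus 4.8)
The plan is to read the result off the homomorphism characterization just established. By part~2 of Proposition~\ref{linear2}, for the given $k$-stack linear-bounded $2\NCSA$ $M$ there exist a $(k+1)$-head $2\NFA$ $M'$ and a letter-to-letter homomorphism $h$ with $L(M) = h(L(M'))$. Since the languages accepted by multi-head $2\NFA$s coincide with those accepted by $\log n$ space-bounded $\NTM$s \cite{Ibarra-JCSS71}, we have $L(M') \in \P \subseteq \NP$. Hence it suffices to prove that $\NP$ is closed under letter-to-letter homomorphism and to apply this closure to $L(M')$.

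For the closure, let $N'$ be a nondeterministic polynomial-time machine accepting $L(M')$. On an input $x$ of length $n$, a machine for $h(L(M'))$ guesses a word $y$, verifies $h(y) = x$, and then simulates $N'$ on $y$, accepting if and only if $N'$ does. The only point requiring care is the length of the guessed preimage: because $h$ is letter-to-letter, every $y$ with $h(y) = x$ satisfies $|y| = |x| = n$, so the guess is of linear size and both the check $h(y)=x$ and the simulation of $N'$ run in nondeterministic polynomial time. Therefore $h(L(M')) = L(M) \in \NP$.

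I expect this length bound on preimages to be the one delicate step: for a general homomorphism, preimages of $x$ could be padded arbitrarily by letters mapped to $\lambda$, and the guess-and-check procedure would cease to be polynomial; it is exactly the letter-to-letter (indeed $\lambda$-free) hypothesis supplied by Proposition~\ref{linear2} that rules this out. A self-contained alternative, avoiding the characterization, is also available: since $M$ is linear-bounded, one may nondeterministically guess the full contents of all $k$ stacks using only $O(n)$ cells and then verify an accepting run. In both the write and read phases the relevant configuration is determined by the state, the input head position, and the $k$ stack head positions, of which there are only $O(n^{k+1})$; so a shortest accepting run visits each configuration at most once and hence has polynomial length, and can be guessed and checked step by step against the transitions of $M$ in $\NP$. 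The homomorphism route is the shorter of the two.
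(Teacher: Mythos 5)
Your proposal is correct and follows essentially the same route as the paper's proof: invoke Proposition~\ref{linear2} part~2 to write $L(M) = h(L(M'))$ for a $(k+1)$-head $2\NFA$ $M'$, note that multi-head $2\NFA$ languages lie in $\P$ via the equivalence with $\log n$ space-bounded $\NTM$s, and then guess a preimage $y$ with $h(y)=x$ and decide $y \in L(M')$ in polynomial time. Your explicit justification of the polynomial bound on the guessed preimage (that letter-to-letter, or more generally $\lambda$-free, forces $|y| \le |x|$) is exactly the point the paper leaves implicit, and your sketched direct alternative via guessing the $O(n)$-length stack contents and a polynomial-length run is also sound, though unnecessary for the result.
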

\begin{proof}
Suppose $M$ is a $k$-stack linear-bounded $2\NCSA$ ($k \ge 1$).
By Proposition \ref{linear2}, part 2,
there is a $(k+1)$-head $2\NFA$ $M'$
and a $\lambda$-free homomorphism $h$ such that $L(M) = h(L(M'))$.
Now the family of languages accepted by multi-head $2\NFA$s
(which coincides with the family of languages
accepted by $\log n$ space-bounded
$\NTM$s \cite{Ibarra-JCSS71}) is contained in $\P$.
So to determine if a string $x$ of length $n$ is in
$L(M)$, we guess a string $y$ such that $x = h(y)$
and then determine if $y$ is in $L(M')$. It follows
that $L(M)$ is in $\NP$.
\end{proof}

In terms of one-way machines, the following was shown in \cite{ShamirBeeri}:
 \begin{proposition} \label{detSAT} \cite{ShamirBeeri}
There is a language $L$ accepted by a linear-bounded $\DCSA$ M
and a letter-to-letter homomorphism $h$ such
that $h(L)$ is $\NP$-complete. Further,
$h(L)$ is accepted by a linear-bounded $\NCSA$.
Hence, $\NCSA$s accept $\NP$-complete languages.
\end{proposition}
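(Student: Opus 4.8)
The plan is to exhibit a single $\NP$-complete language directly as $h(L)$ for a deterministic, linear-bounded checking-stack verifier $L$, taking as the underlying hard problem a mild padding of $\mathrm{SAT}$. Concretely, I would let $L$ consist of all encodings $\alpha \# \phi$ where $\alpha = a_1 \cdots a_n \in \{0,1\}^n$ is a truth assignment, $\phi = C_1 \cdots C_m$ is a CNF formula over variables $x_1, \ldots, x_n$ (each literal carrying its variable index, say in unary), and $\alpha$ satisfies $\phi$. The homomorphism $h$ would be the letter-to-letter map that collapses both bits $0,1$ to a single padding symbol $\star$ and fixes every formula symbol and $\#$. Then $h(L) = \{\star^n \# \phi : \phi \text{ is a satisfiable CNF on } n \text{ variables}\}$, and the whole result reduces to three claims: (i) $L$ is accepted by a linear-bounded $\DCSA$; (ii) $h(L)$ is $\NP$-complete; and (iii) $h(L)$ is accepted by a linear-bounded $\NCSA$.

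For (i), the $\DCSA$ $M$ would push $\alpha$ onto its stack during the write phase (stack height $n \le |\text{input}|$, so linear-bounded), and then, reading $\phi$ one-way during the stack's read phase, verify each clause in turn. To evaluate a literal on variable $x_i$, $M$ reads the literal's unary index from the input while simultaneously advancing the two-way stack head $i$ cells up from the bottom, thereby locating the bit $a_i$ --- exactly the length-matching sweep used in Example \ref{division}; between literals it resets the stack head to the bottom (keeping the input head stationary, which one-way machines may do) and repeats. Crucially, since all clauses are evaluated against the one stored copy $\alpha$, global consistency of the assignment is automatic and no cross-occurrence comparison is needed, so the verification is genuinely deterministic. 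Format and index-bound checks (e.g.\ rejecting if the stack head reaches the top marker before a unary index is exhausted) fold into the same sweeps.

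For (ii), $h(L)$ is in $\NP$ because membership in $\mathrm{SAT}$ is (the $\star^n$ prefix merely fixes the variable count), and it is $\NP$-hard because $\phi \mapsto \star^n \# \phi$ is a polynomial-time reduction from $\mathrm{SAT}$ with $\phi \in \mathrm{SAT}$ iff $\star^n \# \phi \in h(L)$. For (iii), a linear-bounded $\NCSA$ $M'$ accepts $h(L)$ by nondeterministically guessing a preimage under $h$: on $\star^n \# \phi$ it guesses a bit for each $\star$, writing a candidate $\alpha \in \{0,1\}^n$ onto the stack during the write phase, and then runs $M$'s deterministic verification on $\phi$. Equivalently, this is the closure of linear-bounded $\NCSA$ languages under letter-to-letter homomorphism, the stack remaining linear-bounded because $h$ preserves lengths. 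Combining (i)--(iii) gives the statement, and in particular an $\NP$-complete language in the $\NCSA$ family.

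The main obstacle is the deterministic variable look-up in step (i): a checking stack in its read phase alone has only two-way read-only (hence regular) power, so I must rely on the still-active one-way input head to supply the variable indices and synchronize them against stack-head motion. The design hinges on storing the certificate on the stack --- rather than in a separate input block, which a letter-to-letter $h$ could not erase --- so that a single stored assignment simultaneously (a) makes consistency free, (b) keeps the machine deterministic, and (c) lets $h$ blur exactly the bit symbols; getting the index-matching sweeps and the reset behaviour right, generalizing Example \ref{division}, is the only real technical content.
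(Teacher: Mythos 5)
Your construction is correct, and since the paper gives no proof of this proposition---it is cited from Shamir and Beeri---your argument is essentially the classical one behind that citation: store the candidate assignment $\alpha$ on the checking stack during the write phase, deterministically verify a unary-indexed CNF against it with Example~\ref{division}-style sweeps (resetting the stack head between literals while the one-way input head stays put), and let the letter-to-letter homomorphism $h$ blur exactly the assignment bits, with the $\NCSA$ for $h(L)$ obtained by guessing the preimage bits symbol-by-symbol. The one point worth stating explicitly is that SAT with unary variable indices is polynomially interreducible with standard SAT (the per-literal blow-up is at most a factor of $n$), which your $\NP$-hardness step tacitly uses; with that noted, all three claims (i)--(iii) go through and the stack stays linear-bounded throughout.
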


\begin{corollary}  
There is a language $L$ accepted by a 
$\DCSA$ and a letter-to-letter homomorphism $h$
such that if $h(L)$ is accepted by a $\log n$
space bounded $\NTM$, then
$\P = \NP$. Furthermore, there is a language $L'$ accepted by a 
$\DCSA$ and a letter-to-letter homomorphism $h$ such
that if $h(L')$ is accepted by a  multi-stack $2\DCSA$,
then $\P = \NP$.
\end{corollary}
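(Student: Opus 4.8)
The plan is to reuse verbatim the witness supplied by Proposition \ref{detSAT}, which already hands us a language $L$ accepted by a linear-bounded (and hence ordinary) $\DCSA$ together with a letter-to-letter homomorphism $h$ for which $h(L)$ is $\NP$-complete. The single observation driving both halves of the statement is that each of the two hypothesized acceptors recognizes only languages lying in $\P$; since $h(L)$ is $\NP$-complete, forcing it into $\P$ would collapse $\NP$ onto $\P$. So I would not build any new machine at all, but rather argue by an implication drawn from the completeness of $h(L)$.

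For the first statement I would take $L$ and $h$ as above and suppose that $h(L)$ is accepted by some $\log n$ space-bounded $\NTM$. Then $h(L) \in \NLOG$, and as $\NLOG \subseteq \P$ we obtain $h(L) \in \P$. Because $h(L)$ is $\NP$-complete, an $\NP$-complete problem would then lie in $\P$, which gives exactly $\P = \NP$.

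For the second statement I would set $L' = L$ and keep the same $h$, and suppose $h(L')$ is accepted by a multi-stack $2\DCSA$. Here I invoke Corollary \ref{equiv}: the multi-stack $2\DCSA$ languages coincide with the $\log n$ space-bounded $\DTM$ languages, and these are contained in $\P$ (as remarked immediately after that corollary). Hence $h(L') \in \P$, and the $\NP$-completeness of $h(L')$ again yields $\P = \NP$.

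There is essentially no technical obstacle here; the proof is a matter of choosing the correct witness and citing the correct containment for each target model. The only point that warrants care is the second part, where one must pass through the chain of equivalences in Corollary \ref{equiv} to justify the inclusion of multi-stack $2\DCSA$ languages in $\P$, rather than attempting an ad hoc direct simulation. With that inclusion recorded, both implications follow immediately.
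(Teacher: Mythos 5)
Your proposal is correct and matches the paper's own argument: both parts take the witness from Proposition~\ref{detSAT} and derive $\P = \NP$ from membership of the $\NP$-complete language $h(L)$ in $\P$, using $\NLOG \subseteq \P$ for the first part and Corollary~\ref{equiv} for the second. The only (immaterial) difference is that the paper routes the second part through the first statement, whereas you invoke $\DLOG \subseteq \P$ directly.
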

\begin{proof}
The first statement follows from Proposition \ref{detSAT}
and the fact that the languages accepted by $\log n$
space-bounded $\NTM$s are in $\P$.
The second statement follows from the first statement and
Corollary \ref{equiv}.
\end{proof}
Even stronger, there is a language accepted by a 
$\DCSA$ and a letter-to-letter homomorphism $h$ such
that if $h(L)$ is accepted by a  multi-stack $2\DCSA$,
then $\DLOG = \NP$.

We have shown earlier that a language
$L$ is accepted by a multi-stack linear-bounded $2\DCSA$ if
and only if $L$ can be  accepted by a $\log n$
space-bounded $\DTM$. This result does not hold
for machines with one-way input:

\begin{lemma} \label{lem39} There is a language that 
can be accepted by a linear-bounded $\DCSA$ but
cannot be accepted by any one-way
$\log n$ space-bounded $\NTM$.
\end{lemma}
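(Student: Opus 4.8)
The plan is to exhibit a concrete separating language and prove the lower bound via a crossing-sequence / communication-complexity argument. Take $L = \{w \# w^R \mid w \in \{0,1\}^*\}$. First I would verify that $L$ is accepted by a linear-bounded $\DCSA$: in the write phase the machine reads $w$ one symbol at a time and pushes each symbol, so that just before the separator $\#$ the stack contains $w$ with $w_n$ on top; reading $\#$ switches it to the read phase, and it then reads the suffix $w^R = w_n w_{n-1} \cdots w_1$ one symbol at a time, each time comparing the current input symbol with the symbol under the stack head and moving the stack head one cell down, accepting iff all comparisons succeed and both the input and the stack are exhausted simultaneously. This is deterministic, one-way on the input, never pops, and uses a stack of length $|w| \le n$, so it is a linear-bounded $\DCSA$.

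The core of the argument is the lower bound: $L$ cannot be accepted by any one-way $\log n$ space-bounded $\NTM$. Suppose for contradiction that some one-way $\NTM$ $M$ with space bound $s(n)$ accepts $L$. Fix inputs of the form $w \# v$ with $|w| = |v| = n$ (total length $N = 2n+1$). Because the input head moves only to the right, $M$ passes the separator $\#$ exactly once; I would define the \emph{boundary configuration} to be the instantaneous description of $M$ (state, worktape contents, and worktape head position) at the moment its input head first scans the leftmost symbol of $v$. This configuration can be encoded in $O(s(N) + \log N)$ bits, and it captures all information flow across $\#$.

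Next I would turn this into a nondeterministic one-way communication protocol for the predicate ``$v = w^R$'': the player holding $w$ simulates $M$ up to and including $\#$, nondeterministically reaching some boundary configuration, and sends its encoding as the single message; the player holding $v$ resumes the simulation from that configuration and accepts iff $M$ accepts. This protocol accepts exactly when $w \# v \in L$, i.e. when $v = w^R$, and its message length is $O(s(N) + \log N)$. Since relabelling $w \mapsto w^R$ is a bijection, the predicate ``$v = w^R$'' has the same communication complexity as equality on $\{0,1\}^n$, whose nondeterministic communication complexity is at least $n$: the $2^n$ diagonal inputs $(u,u)$ cannot share a $1$-monochromatic combinatorial rectangle, so covering the $1$-inputs requires at least $2^n$ rectangles, giving the bound $\log_2 2^n = n$. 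Any one-way protocol meets this unrestricted lower bound, so $O(s(N) + \log N) \ge n = (N-1)/2$, forcing $s(N) = \Omega(N)$ and contradicting $s(n) = O(\log n)$.

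I expect the main obstacle to be the nondeterministic part of the lower bound: for a deterministic one-way machine the boundary configuration is unique and the crossing-sequence argument is immediate, whereas here acceptance is existential, so care is needed to phrase the handoff as a genuinely \emph{nondeterministic} one-way protocol and to invoke the $\Omega(n)$ nondeterministic lower bound for equality. It is essential that equality itself, and not its complement, is the verified predicate, since the two have very different nondeterministic communication complexities; the remaining bookkeeping, namely that the configuration encoding is truly $O(s(N)+\log N)$ bits and that a single boundary configuration records all cross-$\#$ information, is routine given the one-way input head.
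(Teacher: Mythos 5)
Your proposal is correct and takes essentially the same approach as the paper: the same witness language $\{x \# x^R\}$, the same linear-bounded $\DCSA$ construction, and the same lower-bound idea at the separator $\#$. The paper simply counts directly that a one-way $\log n$ space-bounded $\NTM$ has only polynomially many possible configurations when its head reaches $\#$ while there are $2^m$ strings $x$ --- precisely the boundary-configuration fooling-set argument that you package, with somewhat more care about nondeterminism, as a one-way nondeterministic communication protocol for equality.
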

\begin{proof}
Let $L = \{x \# x^R ~|~ x \in (a+b)^*\}$. Then 
$L$ can be accepted by a linear-bounded
$\DCSA$ which, when given input
$x\#y$ (we assume the input has this form
since the finite-state control can check this),
copies $x$ on the stack and then checks
that $y = x^R$. But it is known (and easy to
prove) that $L$ cannot be 
be accepted by any one-way $\log n$ space-bounded
$\NTM$ $M$.  (On input $w = x\#x^R$, where $|x| = m$,
there are only polynomial in $m$ possible configurations
that $M$ can be in when its one-way input head reaches 
$\#$, but there are $O(2^m)$ strings of 
the form $x \# x^R$.)
\end{proof}

\begin{proposition}
There is a language that 
can be accepted by a linear-bounded $2\DCSA$ that only makes
one turn on its input that
cannot be accepted by any $\NCSA$ (linear-bounded or not).
\end{proposition}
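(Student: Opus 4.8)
The plan is to exhibit the language
\[
L^\ast = \{\, x \# y \mid x,y \in \{a,b\}^\ast,\ x = x^R,\ y = y^R \,\},
\]
the set of pairs of palindromes separated by a marker, and to show that $L^\ast$ is accepted by a linear-bounded one-turn $2\DCSA$ while no $\NCSA$ accepts it.

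For the positive direction, I would build a deterministic $2\DCSA$ $M$ that on input $x \# y$ first checks the format in the finite control, then on a single left-to-right sweep pushes the entire input onto its checking stack (so the stack holds $x \# y$ and its height is linear in the input length), and then makes its one and only turn at the right end-marker. On the ensuing right-to-left sweep the stack is frozen, so $M$ is free to move the two-way read-only stack head arbitrarily. It verifies $y = y^R$ by moving the input head leftward through $y$ (reading $y^R$) in lockstep with the stack head moving rightward through the $y$-block of the stack (reading $y$), comparing symbol by symbol; the $\#$ recorded on the stack delimits the block. It then repositions the stack head to the bottom and verifies $x = x^R$ in the same way as the input head continues leftward through $x$. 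This machine is deterministic, linear-bounded, and makes exactly one turn, so $L^\ast$ lies in the target class.

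For the negative direction, suppose toward a contradiction that an $\NCSA$ $M$ accepts $L^\ast$, and restrict attention to inputs $u u^R \# v v^R$ with $u,v \in \{a,b\}^n$. Let $M_x$ denote the boundary between $u$ and $u^R$ and $M_y$ the boundary between $v$ and $v^R$; note that $M_x$ precedes $M_y$ by more than $n$ input cells. The heart of the argument is that, because the input head moves only to the right, the only memory $M$ retains of an already-consumed cell is what it has placed on its stack, and because a checking stack touches only its top during the write phase, an arbitrary interior symbol becomes recoverable only after the stack has frozen. Consequently, to confirm that $u^R$ is the reverse of $u$, all of $u$ must already be encoded on the stack and the stack must be frozen before the head reaches $M_x$; symmetrically, to confirm that $v^R$ is the reverse of $v$, the stack must still be in its write phase while $v$ is read, forcing the freeze at or after $M_y$. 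A single checking stack freezes only once, so these two requirements are irreconcilable. Since $\NCSA$ languages are closed under intersection with regular sets, it suffices to derive the contradiction on this regular subfamily of inputs.

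The main obstacle is making the two ``forcing'' claims rigorous against a nondeterministic machine, which may guess and push a copy of a block before it is read and then merely verify the guess while reading; this is precisely where a crossing-sequence and counting argument is needed. The point to exploit is that the information about the first half of a block that can be carried across its midpoint lives entirely in the frozen stack, while verifying a guessed block consumes the single one-way input head, so the head cannot simultaneously validate a guessed block and perform the reversal comparison for the other block. I would formalize this by counting, over the $2^n$ choices of $v$ with $u$ fixed, the configurations available at the crossing of $M_y$, and by showing that a computation that has already frozen in order to check $x$ cannot also transmit enough information about $v$ across $M_y$ to check $y$. I expect this last step to be the most delicate part of the proof.
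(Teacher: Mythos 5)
There is a genuine gap, and it is fatal: your chosen language $L^\ast = \{x\#y \mid x = x^R,\ y = y^R\}$ \emph{is} accepted by an $\NCSA$, so the negative direction cannot be completed no matter how the counting argument is refined. The construction that defeats your two ``forcing'' claims is exactly the guess-and-verify move you flagged as the main obstacle, and the reason your rebuttal of it fails is that verifying the guess and performing the reversal comparison are not two competing uses of the input head --- they are the \emph{same} comparison. Concretely: before reading any input, the machine pushes a nondeterministically guessed string $v$ (input head stationary); it then pushes the symbols of $x$ onto the stack as it reads them, nondeterministically guesses the midpoint of $x$, and enters the read phase. Moving the stack head downward through the $x$-block while the input head consumes the second half of $x$ checks $x = uu^R$; the stack head then sits at the top of the $v$-block. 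After consuming $\#$, the machine reads the first half of $y$ while moving down through $v$ (this single downward sweep simultaneously ``defines'' and uses the guess --- no independent validation of $v$ is ever needed), then reads the second half of $y$ while moving back up through $v$, checking $y = ww^R$. So your claim that the stack must already be frozen before the midpoint of $x$ is true, but the claim that it must still be writable at the midpoint of $y$ is false: the midpoint memory for $y$ was preloaded nondeterministically. (Alternatively: $L^\ast$ is context-free, and $\NCSA$s accept all context-free languages by guessing a derivation on the stack, checking its consistency with the two-way stack head, and matching the yield against the input in one pass.)

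The paper avoids this trap by choosing a language whose difficulty is not comparison-based but counting-based: $L = \{(a^n\#)^n \mid n \ge 1\}$. The positive direction is a one-turn linear-bounded $2\DCSA$ that copies $a^n\#$ onto the stack and sweeps once to verify all $n$ segments. For the negative direction, no crossing-sequence analysis of $\NCSA$ computations is needed at all: the class of $\NCSA$ languages is closed under homomorphism, and erasing $\#$ maps $L$ to $\{a^{n^2} \mid n \ge 1\}$, which is known not to be an $\NCSA$ language by a result of Greibach on unary checking stack languages. If you want to salvage your write-up, you should replace the double-palindrome language with one where membership hinges on a global count that survives homomorphic erasure, and route the impossibility through closure properties plus a known unary separation, rather than attempting a direct information-flow argument against a nondeterministic checking stack --- the preloading trick makes such direct arguments extremely hard to push through for comparison-style languages.
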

\begin{proof}
Let $L = \{(a^n\#)^n ~|~ n \ge 1\}$. Assume that
$L$ is accepted by an $\NCSA$. The class of languages
accepted by $\NCSA$s is closed under homomorphisms \cite{CheckingStack}.
Thus, if we define $h(a) = a$ and $h(\#) = \lambda$, then 
$h(L) = \{ a^{n^2} ~|~ n \ge 1 \}$ can be accepted by an $\NCSA$, but this language
cannot be accepted by any machine in $\NCSA$ \cite{Greibach}.

However, a linear-bounded $2\DCSA$ $M$ can accept $L$:
$M$ copies
the first input segment $a^n\#$ into the stack
and using the stack and the two-way input head
checks that each segment is $a^n\#$, and the
number of input segments is $n$.
Note that $M$'s input head need only makes 
one turn on the input:  a left-to-right sweep
followed by a right-to-left sweep and accept if
the input is in $L$.
\end{proof}
The previous proposition can be strengthened from linear-bounded to $\log n$-bounded if we allow more turns
on the input tape.
\begin{proposition} \label{binarylanguage}
There is a language that can be accepted by a
$\log n$-bounded $2\DCSA$ but cannot be accepted by an $\NCSA$.
\end{proposition}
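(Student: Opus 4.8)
The plan is to exhibit a single binary-counting language and separate it from $\NCSA$ by closure under homomorphism, exactly in the spirit of the preceding proposition, but replacing the ``count'' by an exponentially long, \emph{self-checking} encoding so that only $O(\log n)$ stack is needed. Concretely, I would take
$$L = \{\, w_0 \# w_1 \# \cdots \# w_{2^k-1}\,\# \mid k \ge 1,\ w_i = \mathrm{bin}_k(i) \text{ for } 0 \le i \le 2^k-1 \,\},$$
where $\mathrm{bin}_k(i)$ is the $k$-bit binary representation of $i$ (most significant bit first). A word of $L$ has length $n = (k+1)2^k$, so $k = \Theta(\log n)$, and the crucial feature is that membership in $L$ is a purely \emph{local} property: it suffices to check that the first block is $0^k$, the last is $1^k$, all blocks have the same width $k$, and each block is the binary successor of its predecessor.

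First I would build a $\log n$-bounded $2\DCSA$ $M$ for $L$. During the write phase, $M$ scans the first block and pushes one symbol for every bit read (rejecting if a bit is not $0$), so that after the first $\#$ the stack holds $1^k$; this is the only thing ever written, giving stack height $k = O(\log n)$. The stack then serves, read-only, as a fixed-length \emph{ruler}: to move the input head from a bit of $w_i$ to the bit at the same offset in $w_{i+1}$, $M$ walks the stack head once from bottom to top while advancing the input head, then takes one further step across the separator, thereby moving right by exactly $k+1$ cells; the reverse move restores the head. Using this fixed jump together with a constant amount of finite-state memory for the carry, $M$ verifies, scanning from the least significant bit, that $w_{i+1}=w_i+1$ for every adjacent pair, that every block has width exactly $k$, and that $w_0 = 0^k$ and $w_{2^k-1}=1^k$. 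All of this is deterministic and two-way, makes many turns on the input, and uses only the height-$k$ stack, so $M$ is $\log n$-bounded.

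For the lower bound I would argue as in the preceding proposition. Since $\NCSA$s are closed under homomorphism \cite{CheckingStack}, if $L$ were accepted by an $\NCSA$ then so would be $h(L)$ for the homomorphism $h$ with $h(0)=h(1)=\lambda$ and $h(\#)=a$. But $h(L) = \{a^{2^k} \mid k \ge 1\}$, a non-regular unary language, and the unary languages accepted by $\NCSA$s are regular (a strengthening of the fact that $\{a^{n^2}\}\notin\NCSA$ from \cite{Greibach}); hence $h(L)\notin\NCSA$, and therefore $L\notin\NCSA$.

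The main obstacle is the $\log n$-bounded construction, and specifically the realization that the enumeration can be validated entirely by local successor checks. This is essential: a condition like ``the number of blocks equals the common block length'' --- the kind of global comparison of two large counts underlying $\{(a^n\#)^n\}$ --- cannot be verified with an $O(\log n)$ read-only checking stack and a single input head, which is exactly why the linear-bounded example does not directly shrink to logarithmic space. Replacing that global count by a binary enumeration whose correctness is purely local circumvents the difficulty, at the cost of many head reversals. A secondary point to pin down is the non-membership of the unary image $\{a^{2^k}\}$, for which I rely on the regularity of unary $\NCSA$ languages rather than the specific quadratic example used earlier.
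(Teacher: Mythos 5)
Your construction is essentially the paper's own proof: the same binary-counter language (up to reversing the bit order within blocks), the same height-$k$ stack used as a fixed ruler to jump between corresponding positions of adjacent blocks for the local successor checks, and the same homomorphic projection $h(0)=h(1)=\lambda$, $h(\#)=a$ onto $\{a^{2^k} \mid k \ge 1\}$. The gap is in your final step: the lemma you invoke --- ``the unary languages accepted by $\NCSA$s are regular'' --- is false, and it is not a harmless strengthening of the result in \cite{Greibach}. Counterexample: $\{a^n \mid n \text{ is not a power of } 2\}$ is accepted by an $\NCSA$, since $n$ is not a power of two if and only if $n$ has an odd divisor $d \ge 3$; the machine nondeterministically writes $a^d$ on its stack for some odd $d \ge 3$ (tracking the parity of the stack height in its finite control), then verifies that $d$ divides $n$ by alternating sweeps exactly as in Example \ref{division}. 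This language is non-regular, because its complement $\{a^{2^k} \mid k \ge 0\}$ is non-regular and unary regular languages are closed under complement. So unary $\NCSA$ languages need not be regular, and your stated lemma cannot be cited or proved.

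Fortunately, the conclusion you need survives via the weaker statement the paper actually uses from \cite{Greibach}: every \emph{infinite} unary language accepted by an $\NCSA$ must contain an infinite regular subset. An infinite regular unary language is eventually periodic and hence has bounded gaps between consecutive word lengths, whereas the gaps in $\{2^k \mid k \ge 1\}$ grow without bound; therefore $\{a^{2^k} \mid k \ge 1\}$ contains no infinite regular subset and is not an $\NCSA$ language, so neither is $L$. With that single substitution your argument closes, and it then coincides with the paper's proof: the upper-bound construction (write the first block to get a stack of height $k = \Theta(\log n)$, then shuttle between blocks checking $x_{i+1} = x_i + 1$ with a finite-state carry) is the same as the published one, and your observation that correctness of the enumeration is purely local is precisely what makes the logarithmic bound work.
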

\begin{proof}
Let $L = \{ x_1 \# x_2 \# \cdots  \# x_r \#  ~|~  x_i$ is a binary number with
            the most significant bit on the right, $|x_1| = \cdots = |x_r| = m, 
            x_1 = 0^m, x_r = 1^m, x_{i+1} = x_i+1, m \ge 1 \}$.
So, e.g., $000\# 100\# 010\# 110\# 001\# 101\# 011\# 111\#$ is in $L$.

$L$ can be accepted by a $2\DCSA$ $M$
whose stack is $\log n$-bounded as follows:
$M$ writes a unary string of length $|x_1|$
in the stack  and then uses the stack to go back and forth between $x_i$ and $x_{i+1}$
to check that $x_{i+1} = x_i +1$.

If $L$ is accepted by an $\NCSA$, then since the class of languages
accepted by $\NCSA$s is closed under any homomorphism $h$ \cite{CheckingStack},
defining $h(0) = h(1) = \lambda$ and $h(\#) = a$,  
$h(L) = \{ a^{2^m} ~|~ m \ge 1 \}$ can be accepted by an $\NCSA$.
This contradicts a result in \cite{Greibach} where it was shown that 
any infinite unary language accepted by an $\NCSA$ must contain an 
infinite regular set.
\end{proof}

Now that we know that a $\log n$ $2\DCSA$ (hence, $2\NCSA$) can accept
a non-regular language, it is of interest to know whether there is a 
hierarchy in terms of the number of stacks.
For $k \ge 1$, let
$$L_k = \{ x_1 \# x_2 \# ...  \# x_r \#  ~|~ \begin{array}[t]{l}  x_i \mbox{~is a binary number with the most significant bit on}\\ 
\mbox{the right,~} |x_1| = \cdots = |x_r| = m^k, x_1 = 0^{m^k}, m^{th} \mbox{~bit of}\\ x_1 \mbox{~is marked},
x_r = 1^{m^k}, x_{i+1} = x_i+1, 1 \leq i < r, m \ge 1 \}.\end{array}$$
We can construct a $k$-stack $\log n$-bounded $2\DCSA$ $M_k$ that accepts $L_k$:
$M_k$ writes the first $m$ bits of $x_1$ into the $k$ stacks (this is possible since the $m^{\mbox{th}}$ bit is marked with a special character). Then $M_k$
uses the stacks to check that each $x_i$ has length $m^k$ and
that $x_{i+1} = x_i$.
We conjecture that the languages $L_1, L_2, \ldots$  form a hierarchy in terms of the number of
stacks, i.e.,  $L_k$ cannot be accepted
by any $(k-1)$-stack $\log n$-bounded $2\DCSA$ (or even $2\NCSA$) for $k \ge 2$.

\section{Conclusions}

We studied several generalizations of checking stack automata
and characterized their computing power in terms of
two-way multi-head finite automata and space-bounded Turing machines. 
Further, we proved hierarchies of these models with respect to their
recognition power in terms of the number of input heads and the number of checking stacks. 
Our characterizations and hierarchy results expanded/tightened
some previously known results.  We also investigated some space
complexity and decidability questions for the models introduced. Some
questions remain open, e.g., improving the stack-head trade-offs in the
conversions for some of the results.

\section*{Acknowledgments}
The research of O. H. Ibarra was supported, in part, by
NSF Grant CCF-1117708. The research of I. McQuillan was supported, in part, by Natural Sciences and Engineering Research Council of Canada Grant 2016-06172.

\bibliography{bounded}{}
\bibliographystyle{ws-ijfcs}

\end{document}